\tikzset{three sided/.style={
        draw=none,
        append after command={
            [shorten <= -0.5\pgflinewidth]
            ([shift={(-0.5\pgflinewidth,-0.5\pgflinewidth)}]\tikzlastnode.north east)
        edge([shift={( 0.5\pgflinewidth,-0.5\pgflinewidth)}]\tikzlastnode.north west) 
              ([shift={( 0.5\pgflinewidth,-0.5\pgflinewidth)}]\tikzlastnode.north east)
        edge([shift={( 0.5\pgflinewidth,-.95\pgflinewidth)}]\tikzlastnode.south east)  
        
            ([shift={( -0.5\pgflinewidth,-0.5\pgflinewidth)}]\tikzlastnode.south west)
        edge([shift={(0.5\pgflinewidth,-0.5\pgflinewidth)}]\tikzlastnode.south east)
        }
    }
}
\tikzset{three sided2/.style={
        draw=none,
        append after command={
            [shorten <= -0.5\pgflinewidth]
          ([shift={( -0.5\pgflinewidth,-0.5\pgflinewidth)}]\tikzlastnode.north west)
        edge([shift={( -0.5\pgflinewidth,-.95\pgflinewidth)}]\tikzlastnode.south west)  
        
              ([shift={( 0.5\pgflinewidth,-0.5\pgflinewidth)}]\tikzlastnode.north east)
        edge([shift={( 0.5\pgflinewidth,-.95\pgflinewidth)}]\tikzlastnode.south east)  
        
            ([shift={( -0.5\pgflinewidth,-0.5\pgflinewidth)}]\tikzlastnode.south west)
        edge([shift={(0.5\pgflinewidth,-0.5\pgflinewidth)}]\tikzlastnode.south east)
        }
    }
}
\tikzset{three sided3/.style={
        draw=none,
        append after command={
            [shorten <= -0.5\pgflinewidth]
  ([shift={(-0.5\pgflinewidth,-0.5\pgflinewidth)}]\tikzlastnode.north east)
        edge([shift={( 0.5\pgflinewidth,-0.5\pgflinewidth)}]\tikzlastnode.north west) 
          ([shift={( -0.5\pgflinewidth,-0.5\pgflinewidth)}]\tikzlastnode.north west)
        edge([shift={( -0.5\pgflinewidth,-.95\pgflinewidth)}]\tikzlastnode.south west)  
        
              ([shift={( 0.5\pgflinewidth,-0.5\pgflinewidth)}]\tikzlastnode.north east)
        edge([shift={( 0.5\pgflinewidth,-.95\pgflinewidth)}]\tikzlastnode.south east)

        }
    }
}
\newtheorem{proposition}{Proposition}
\newtheorem{assumption}{Assumption}
\theoremstyle{definition}
\newtheorem{remark}{Remark}
\begin{document}

\title{Drift Control of High-Dimensional RBM: A Computational Method Based
on Neural Networks}
\author[1]{Baris Ata\thanks{baris.ata@chicagobooth.edu}}
\author[2]{J. Michael Harrison\thanks{mike.harrison@stanford.edu}}
\author[3]{Nian Si\thanks{niansi@ust.hk}}
\affil[1]{Booth School of Business, University of Chicago}
\affil[2]{Graduate School of Business, Stanford University}
\affil[3]{IEDA, Hong Kong University of Science and Technology}
\date{ }
\maketitle

\begin{abstract}
Motivated by applications in queueing theory, we consider a stochastic
control problem whose state space is the $d$-dimensional positive orthant.
The controlled process $Z$ evolves as a reflected Brownian motion whose
covariance matrix is exogenously specified, as are its directions of
reflection from the orthant's boundary surfaces. A system manager chooses a
drift vector $\theta(t)$ at each time $t$ based on the history of $Z$, and
the cost rate at time $t$ depends on both $Z(t)$ and $\theta(t)$. In our
initial problem formulation, the objective is to minimize expected
discounted cost over an infinite planning horizon, after which we treat the
corresponding ergodic control problem. Extending earlier work by Han et al.
(Proceedings of the National Academy of Sciences, 2018, 8505-8510), we
develop and illustrate a simulation-based computational method that relies
heavily on deep neural network technology. For test problems studied thus
far, our method is accurate to within a fraction of one percent, and is
computationally feasible in dimensions up to at least $d=30$.
\end{abstract}

\affil[1]{Booth School of Business, University of Chicago} %
\affil[2]{Graduate School of Business, Stanford University}

\affil[1]{Booth School of Business, University of Chicago} %
\affil[2]{Stanford Graduate School of Business}

\section{Introduction}

Beginning with the seminal work of %
\citet{iglehart1970multiple,iglehart1970multiple2}, there has developed over
the last 50+ years a large literature that justifies the use of reflected
Brownian motions as approximate models of queueing systems under
\textquotedblleft heavy traffic\textquotedblright\ conditions. In
particular, a limit theorem proved by \citet{reiman1984open} justifies the
use of $d$-dimensional reflected Brownian motion (RBM) as an approximate
model of a $d$-station queueing network. Reiman's theory is restricted to
networks of the generalized Jackson type, also called single-class networks,
or networks with homogeneous customer populations, but it has been extended
to more complex multi-class networks under certain restrictions, most
notably by \citet{peterson1991heavy} and \citet{williams1998diffusion}. The
survey papers by \citet{williams1996approximation} and by %
\citet{harrison1993brownian} provide an overview of heavy traffic limit
theory through its first 25 years.

Many authors have commented on the compactness and simplicity of RBM as a
mathematical model, at least in comparison with the conventional
discrete-flow models that it replaces. For example, in the preface to %
\citet{kushner2001heavy}'s book on heavy traffic analysis one finds the
following passage:

\begin{quote}
``These approximating [Brownian] models have the basic structure of the
original problem, but are significantly simpler. Much inessential detail is
eliminated ... They greatly simplify analysis, design, and optimization,
[yielding] good approximations to problems that would otherwise be
intractable ...''
\end{quote}

Of course, having adopted RBM as a system model, one still confronts the
question of how to do performance analysis, and in that regard there has
been an important recent advance: \citet{blanchet2021efficient} have
developed a simulation-based method to estimate steady-state performance
measures for RBM in dimensions up to 200, and those estimates come with
performance guarantees.

\textbf{Descriptive performance analysis versus optimal control.} Early work
on heavy traffic approximations, including the papers cited above, focused
on descriptive performance analysis under fixed operating policies. %
\citet{harrison1988brownian,harrison2000brownian} expanded the framework to
include consideration of dynamic control, using informal arguments to
justify Brownian approximations for queueing network models where a system
manager can make sequencing, routing and/or input control decisions. Early
papers in that vein by \citet{harrison1989scheduling,harrison1990scheduling}
and by \citet{wein1991brownian} dealt with Brownian models simple enough
that their associated control problems could be solved analytically. But for
larger systems and/or more complex decisions, the Brownian control problem
that approximates an original queueing control problem may only be solvable
numerically. Such stochastic control problems may be of several different
types, depending on context.

At one end of the spectrum are drift control problems, in which the
controlling agent can effect changes in system state only at bounded finite
rates. At the other end of the spectrum are impulse control problems, in
which the controlling agent can effect instantaneous jumps in system state,
usually with an associated fixed cost. In between are singular control
problems, in which the agent can effect instantaneous state changes of any
desired size, usually at a cost proportional to the size of the
displacement; see for example, \citet{karatzas1983}. In this paper we
develop a computational method for the first of those three problem classes,
and then illustrate its use on selected test problems. Our method is a
variant of the one developed by \citet{han2018solving} for solution of
semi-linear partial differential equations, and in its implementation we
have re-used substantial amounts of the code provided by %
\citet{han2018solving} and \citet{zhou2021actor}.

\textbf{Literature Review.} Two of the most relevant streams of literature
are \textit{i}) drift rate control problems, and \textit{ii}) solving PDEs
using deep learning. \citet{ata-harrison-shepp2005} considers a
one-dimensional drift rate control problem on a bounded interval under a
general cost of control but no state costs. The authors characterize the
optimal policy in closed form; and they discuss the application of their
model to a power control problem in wireless communication. %
\citet{ormeci-matoglu-vande-vate2011} consider a drift rate control problem
where a system controller incurs a fixed cost to change the drift rate. The
authors prove that a deterministic, non-overlapping control band policy is
optimal; also see \citet{vande-vate2021}. 
\citet{ghosh-weerasinghe2007,
ghosh-weerasinghe2010} extend \citet{ata-harrison-shepp2005} by
incorporating state costs, abandonments and optimally choosing the interval
where the process lives.

Drift control problems arise in a broad range of applications in practice. %
\citet{rubino-ata2009} studies a dynamic scheduling problem for a
make-to-order manufacturing system. The authors model order cancellations as
abandonments from their queueing system. This model feature gives rise to a
drift rate control problem in the heavy traffic limit. %
\citet{ata-lee-sonmez2019} uses a drift control model to study a dynamic
staffing problem in order to determine the number of volunteer gleaners, who
sign up to help but may not show up, for harvesting leftover crops donated
by farmers for the purpose of feeding food-insecure individuals. %
\citet{bar-ilan-marion-perry2007} use a drift control model to study
international reserves. 
   {All of the papers mentioned above
study one-dimensional drift-rate control problems.}

   {The recent working paper by \citet{ata-kasikaralar2023}
studies dynamic scheduling of a multiclass queue motivated by call center
industry.} Focusing on the Halfin-Whitt asymptotic regime, the authors
derive a (limiting) drift rate control problem whose state space is $\mathbb{%
R}^d$, where $d$ is the number of buffers in their queueing model. Like us, those authors build on earlier work by \citet{han2018solving} to solve their
(high-dimensional) drift rate control problem. However, our work differs
from theirs significantly, because their control problem has no state space
constraints.

As mentioned earlier, our work builds on the seminal paper by
\citep{han2018solving}. In the last five years, there have been many other papers
written on solving PDEs using deep neural networks; see the recent surveys by \citet{beck-hutzenthaler-jentzen-kuckuck2023} and \citet{e-han-jentzen2022}.

\textbf{The remainder of this paper.} Section \ref{sec:RBM} recapitulates
essential background knowledge from RBM theory, after which Section \ref%
{sec:problem} states in precise mathematical terms the discounted control
and ergodic control problems that are the object of our study. In each case,
the problem statement is expressed in probabilistic terms initially, and
then re-expressed analytically in the form of an equivalent
Hamilton-Jacobi-Bellman equation (hereafter abbreviated to HJB equation).
Section \ref{sec:sdes} derives key identities, that significantly contribute
to the subsequent development of our computational method. Section \ref%
{sec:han} describes our computational method in detail.

Section \ref{sec:example} specifies three families of drift control test
problems, each of which has members of dimensions $d = 1, 2, \ldots$ . The
first two families arise as heavy traffic limits of certain queueing network
control problems, and we explain that motivation in some detail. Drift
control problems in the third family have a separable structure that allows
them to be solved exactly by analytical means, which is of obvious value for
assessing the accuracy of our computational method. Section \ref%
{sec:numerical} presents numerical results obtained with our method for all
three families of test problems. In that admittedly limited context, our
computed solutions are accurate to within a fraction of one percent, and our
method remains computationally feasible up to at least dimension $d = 30$,
and in some cases up to dimension 100 or more. In Section \ref%
{sec:conclusion} we describe variations and generalizations of the problems
formulated in Section \ref{sec:problem} that are of interest for various
purposes, and which we expect to be addressed in future work. Finally, there
are a number of appendices that contain proofs or other technical
elaboration for arguments or procedures that have only been sketched in the
body of the paper.

\section{RBM preliminaries}

\label{sec:RBM} We consider here a reflected Brownian motion $Z=\left\{
Z(t),t\geq 0\right\} $ with state space $\mathbb{R}_{+}^{d},$ where $d\geq
1. $ The data of $Z$ are a (negative) drift vector $\mu \in \mathbb{R}^{d},$
a $d\times d$ positive-definite covariance matrix $A =(a_{ij}),$ and a $%
d\times d$ reflection matrix $R$ of the form 
\begin{equation}
R=I-Q,\text{ where }Q\text{ has non-negative entries and spectral radius }%
\rho (Q)<1.  \label{RBM:R}
\end{equation}%
The restriction to reflection matrices of the form (\ref{RBM:R}) is not
essential for our purposes, but it simplifies the technical development and
is consistent with usage in the related earlier paper by %
\citet{blanchet2021efficient}. Denoting by $W=\left\{ W(t),t\geq 0\right\} $
a $d$-dimensional Brownian motion with zero drift, covariance matrix $A$,
and $W(0)=0$, we then have the representation 
\begin{align}
Z(t)& =Z(0) + W(t)-\mu t+RY(t),\text{ }t\geq 0,\text{ where}  \label{RBM:Z}
\\
Y_{i}(\cdot )& \text{ is continuous and non-decreasing with }Y_{i}(0)=0\text{
}(i=1,2,\ldots ,d),\text{ and}  \label{RBM:Y1} \\
Y_{i}(\cdot )& \text{ only increases at those times }t\text{ when }Z_{i}(t)=0%
\text{ }(i=1,2,\ldots ,d).  \label{RBM:Y2}
\end{align}

\citet{harrison1981reflected} showed that the relationships\ (\ref{RBM:R})
to (\ref{RBM:Y2}) determine $Y$ and $Z$ as pathwise functionals of $W,$ and
that the mapping $W\rightarrow (Y,Z)$ is continuous in the topology of
uniform convergence. We interpret the $i^{\text{th}}$ column of $R$ as the
direction of reflection on the boundary surface $S_i=\{z\in \mathbb{R}%
_{+}^{d}:z_{i}=0\},$ and call $Y_{i}=\left\{ Y_{i}(t),t\geq 0\right\} $ the
\textquotedblleft pushing process" on that boundary surface.

In preparation for future developments, let $f$ $\ $be an arbitrary $C^{2}$
(that is, twice continuously differentiable) function $\mathbb{R}%
^{d}\rightarrow \mathbb{R}$, and let $\nabla f$ denote its gradient vector
as usual. Also, we define a second-order differential operator $\mathcal{L}$
via%
\begin{equation}
\mathcal{L}f=\frac{1}{2}\sum_{i=1}^{d}\sum_{j=1}^{d}a_{ij}\frac{\partial ^{2}%
}{\partial z_{i}\partial z_{j}}f,  \label{RBM:Lf}
\end{equation}%
and a first-order differential operator $\mathcal{D}=(\mathcal{D}_{1},\ldots
,\mathcal{D}_{d})^\top$ via 
\begin{equation}
\mathcal{D}f=R^{\top }\nabla f,  \label{RBM:D}
\end{equation}%
where $\top $ in (\ref{RBM:D}) denotes transpose. Thus $\mathcal{D}%
_{i}f(\cdot )$ is the directional derivative of $f$ in the direction of
reflection on the boundary surface $S_i = \left\{z \in \mathbb{R}_+^d: z_{i}=0\right\} .$ With these
definitions, an application of Ito's formula now gives the following
identify, cf. \citet{harrison1981reflected}, Section 3:

\begin{equation}
\mathrm{d}f(Z(t))=\nabla f(Z(t))\cdot \mathrm{d}W(t)+(\mathcal{L}f-\mu \cdot
\nabla f)(Z(t)) \,\mathrm{d}t+\mathcal{D}f(Z(t))\cdot \mathrm{d}Y(t),\ t\geq
0.  \label{RBM:df}
\end{equation}%
In the obvious way, the first inner product on the right side of (\ref%
{RBM:df}) is shorthand for a sum of $d$ Ito differentials, while the last
one is shorthand for a sum of $d$ Riemann-Stieltjes differentials.

\section{Problem statements and HJB equations}

\label{sec:problem} Let us now consider a stochastic control problem whose
state space is $\mathbb{R}_{+}^{d}$ $(d\geq 1).$\ The controlled process $Z$
has the form 
\begin{equation}
Z(t)= Z(0) + W(t)-\int_{0}^{t}\theta (s)\mathrm{d}s+RY(t),\ t\geq 0,
\label{RBM:controled}
\end{equation}%
where (i) $W=\{W(t),t\geq 0\}$ is a $d$-dimensional Brownian motion with
zero drift, covariance matrix $A$, and $W(0)=0$ as in Section 2, (ii) $%
\theta =\left\{ \theta (t),t\geq 0\right\} $ is a non-anticipating control,
or non-anticipating drift process, chosen by a system manager and taking
values in a bounded set $\Theta \subset \mathbb{R}^{d},$ and (iii) $%
Y=\left\{ Y(t),t\geq 0\right\} $ is a $d$-dimensional pushing process with
components $Y_{i}$ that satisfy (\ref{RBM:Y1}) and (\ref{RBM:Y2}). Note that
our sign convention on the drift in the basic system equation (\ref%
{RBM:controled}) is \textit{not} standard. That is, we denote by $\theta(t)$
the \textit{negative} drift vector at time $t$.

The control $\theta $ is chosen to optimize an economic objective (see
below), and attention will be restricted to \textit{stationary} Markov
controls, or stationary control policies, by which we mean that 
\begin{equation}
\theta (t)=u(Z(t)),\ t\geq 0\text{ for some measurable policy function }u:%
\mathbb{R}_{+}^{d}\rightarrow \Theta .  \label{def:policy:u}
\end{equation}%
Hereafter the set $\Theta $ of drift vectors available to the system manager
will be referred to as the \textit{action space }for our control problem, and a
function $u:\mathbb{R}_{+}^{d}\rightarrow \Theta $ will simply be called a 
\textit{policy}. We denote by $Z^{u}$ the controlled RBM defined via (\ref%
{RBM:controled}) and (\ref{def:policy:u}), and denote by $Y^{u}$ the
associated $d$-dimensional boundary pushing process. 

Before specifying the system manager's economic objective, we establish the following terminology: for $m, n \geq 1$, a function $g: D \subset \mathbb{R}^m
\rightarrow \mathbb{R}^n$ is said to have polynomial growth if there exist
constants $\alpha_1, \, \beta_1 > 0 $ such that 
\begin{align*}
|g(z)| \leq \alpha_1 \left( 1+ |z|^{\beta_1} \right), \ z \in D.
\end{align*}
Because the action space $\Theta$ is bounded,  for a function $g:\mathbb{R}_+^d \times \Theta \rightarrow \mathbb{R}$, the polynomial growth assumption reduces to the following: 
\begin{align}
|g(z,\theta)| \leq \alpha_2 \left( 1+ |z|^{\beta_2} \right) \text{ for all }
z \in \mathbb{R}_+^d \text{ and } \theta \in \Theta,
\label{eqn:growth-condition:cost}
\end{align}
where $\alpha_2, \, \beta_2$ are positive constants.

With regard to the
system manager's objective, we take as given a continuous cost function $c:%
\mathbb{R}_{+}^{d}\times \Theta \rightarrow \mathbb{R}$ with polynomial
growth and a vector $%
\kappa \in \mathbb{R}_+^d$ of \textit{penalty rates} associated with pushing
at the boundary. (As it happens, the boundary penalty rates are all zeros
for the numerical examples considered in this paper, but positive penalty rates 
will be needed in future applications.) The cumulative cost incurred over
the time interval $[0,t]$ under policy $u$ is 
\begin{equation}
C^{u}(t)\equiv \int_{0}^{t}c(Z^{u}(s),u(Z^{u}(s))) \,\mathrm{d}s+ \kappa
\cdot Y^u(t),\text{ }t\geq 0.  \label{def:Cost}
\end{equation}

Because our action space $\Theta$ is bounded by assumption, the controlled
RBM $Z^u$ has bounded drift under any policy $u$, from which one can prove
the following mild but useful property; see Appendix \ref%
{appendix:proof:prop:bounded} for its proof.

\begin{proposition}
Under any policy $u$ and for any integer $n=1,2,\ldots$ the function 
\begin{equation*}
g_n(z,t) = \mathbb{E}_z \left\{ |Z^{u}(t)|^n \right\}, \ t \geq 0,
\end{equation*}
has polynomial growth in $t$ for each fixed $z \in \mathbb{R}_+^d$. \label%
{prop:bounded}
\end{proposition}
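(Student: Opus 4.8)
The idea is to bound $\abs{Z^{u}(t)}$ pathwise by the running supremum of the \emph{free process} (netput process) $X(t):=Z(0)+W(t)-\int_{0}^{t}\theta(s)\,\mathrm{d}s$, using the Lipschitz continuity of the oblique reflection map, and then to take $n$-th moments with Doob's maximal inequality for Brownian motion. Throughout fix $z\in\mathbb{R}_{+}^{d}$ and write $\norm{x}_{t}:=\sup_{0\le s\le t}\abs{x(s)}$ for a path $x$ on $[0,t]$.

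\textbf{Step 1 (reduction to the free process).} By \eqref{RBM:controled} and \eqref{def:policy:u}, $Z^{u}=X+RY^{u}$ is the image of $X$ under the oblique reflection map of \citet{harrison1981reflected} determined by $R$, and $X(0)=Z(0)=z\in\mathbb{R}_{+}^{d}$. Since the action space $\Theta$ is bounded, fix $b>0$ with $\abs{\theta}\le b$ for all $\theta\in\Theta$; then $\bigl\lvert\int_{0}^{t}\theta(s)\,\mathrm{d}s\bigr\rvert\le bt$, so $\norm{X}_{t}\le\abs{z}+\norm{W}_{t}+bt$.

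\textbf{Step 2 (linear bound for the reflection map).} Because $R=I-Q$ with $\rho(Q)<1$ as in \eqref{RBM:R}, the reflection map $X\mapsto(Y^{u},Z^{u})$ is not merely uniformly continuous (as recalled after \eqref{RBM:Y2}) but Lipschitz, and it sends the zero path to the zero path; hence there is a constant $C_{R}$ depending only on $R$ with $\norm{Z^{u}}_{t}\le C_{R}\norm{X}_{t}$ for every $t\ge0$ (this is the standard oblique-reflection estimate of \citet{harrison1981reflected}; alternatively it follows from a contraction argument for $Y^{u}$ in terms of $X$ using $\rho(Q)<1$). Combining with Step 1 gives $\norm{Z^{u}}_{t}\le C_{R}\bigl(\abs{z}+\norm{W}_{t}+bt\bigr)$.

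\textbf{Step 3 (take $n$-th moments).} Using $(a+b+c)^{n}\le 3^{\,n-1}(a^{n}+b^{n}+c^{n})$,
\[
\abs{Z^{u}(t)}^{n}\le\norm{Z^{u}}_{t}^{n}\le C_{R}^{\,n}\,3^{\,n-1}\bigl(\abs{z}^{n}+\norm{W}_{t}^{n}+b^{n}t^{n}\bigr).
\]
Taking $\mathbb{E}_{z}$ and applying Doob's $L^{n}$ maximal inequality to each coordinate martingale $W_{i}$ (for $n=1$, first pass through $n=2$ and monotonicity of $L^{p}$ norms), together with the Gaussian moment identity $\mathbb{E}\abs{W_{i}(t)}^{n}=c_{n}(a_{ii}t)^{n/2}$, yields $\mathbb{E}_{z}\norm{W}_{t}^{n}\le K_{n,d,A}\,t^{n/2}$ for a constant $K_{n,d,A}$. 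Therefore
\[
g_{n}(z,t)=\mathbb{E}_{z}\abs{Z^{u}(t)}^{n}\le C_{R}^{\,n}\,3^{\,n-1}\bigl(\abs{z}^{n}+K_{n,d,A}\,t^{n/2}+b^{n}t^{n}\bigr)\le\alpha\bigl(1+t^{n}\bigr),
\]
where, bounding $t^{n/2}\le 1+t^{n}$, the constant $\alpha$ depends on $z$ (and on $n,d,A,R,b$) but not on $t$. This is exactly polynomial growth in $t$ with exponent $\beta=n$.

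\textbf{Main obstacle.} Everything except the boundary pushing term $RY^{u}(t)$ is elementary; the one substantive point is that $Y^{u}$ has no explicit formula, and it is controlled via the sup-norm (Lipschitz/linear) bound for the oblique reflection map, which is available precisely because of the structural hypothesis $R=I-Q$, $\rho(Q)<1$ in \eqref{RBM:R}. The remaining estimates are the standard Doob and Gaussian-moment bounds, which I would relegate to the appendix.
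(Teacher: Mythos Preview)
Your proof is correct and follows essentially the same approach as the paper: a pathwise bound on $Z^{u}$ via the Skorokhod reflection map, followed by moment estimates for the running supremum of $W$. The only cosmetic differences are that the paper phrases the reflection-map estimate as an oscillation bound ($Osc(Z,[0,t])\le C\,Osc(X,[0,t])$, citing \citet{williams1998invariance}) rather than your sup-norm Lipschitz bound, and controls $\mathbb{E}[\sup_{s\le t}|W(s)|^{k}]$ via the reflection principle rather than Doob's inequality; both routes yield the same $t^{k/2}$ scaling and final $O(1+t^{n})$ conclusion.
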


\subsection{Discounted control}

\label{sec:disc:form}

In our first problem formulation, an interest rate $r>0$ is taken as given,
and we adopt the following discounted cost objective: choose a policy $u$ to
minimize 
\begin{equation}
V^{u}(z)\equiv \mathbb{E}_{z}\left[ \int_{0}^{\infty }e^{-rt}\mathrm{d}C^u(t)%
\right] =\mathbb{E}_{z}\left[ \int_{0}^{\infty }e^{-rt}\left[%
c(Z^{u}(t),u(Z^{u}(t)))\,\mathrm{d}t + \kappa \cdot \mathrm{d} Y^u(t)) %
\right]\right] ,  \label{problem:discounted}
\end{equation}%
where $\mathbb{E}_{z}\left( \cdot \right) $ denotes a conditional
expectation given that $Z(0)=z.$ Given the polynomial growth condition (\ref%
{eqn:growth-condition:cost}), it follows from Proposition \ref{prop:bounded}
that the moments of $Z(t)$    {have polynomial growth} as
functions of $t$ for each fixed initial state $z$. Also, because $\Theta$ is
bounded by assumption, one can easily derive an affine bound for $\mathbb{E}%
_z\{\kappa \cdot Y^u(t)\}$ viewed as a function of $t$. Given the assumed
positivity of the interest rate $r$, the expectation in (\ref%
{problem:discounted}) is therefore well defined and finite for each $z \in 
\mathbb{R}_+^d$.

Hereafter we refer to $V^{u}(\cdot )$ as the \textit{value function} under
policy $u$, and define the \textit{optimal value function}%
\begin{equation}
V(z)=\min_{u \in \,\mathcal{U}}V^{u}(z)\text{ for each }z\in \mathbb{R}%
_{+}^{d},  \label{problem;discount:min}
\end{equation}%
where $\mathcal{U}$ is the set of stationary Markov control policies.

    {Let $u$ be an arbitrary policy, and let $f: \mathbb{R}_+^d \rightarrow 
\mathbb{R}$ be a $C^2$ function with polynomial growth. Also, we define the
second-order differential operation associated with policy $u$, 
\begin{equation*}
\mathcal{A}^uf(z)=\mathcal{L} f(z)-u(z)\cdot \nabla f(z), \text{ }z\in 
\mathbb{R}_+^d, 
\end{equation*}
where $\mathcal{L}$ is defined via (\ref{RBM:Lf}). The identity (\ref{RBM:df}%
) continues to hold for the controlled process $Z^u$ if one replaces the
constant (reference) drift vector $\mu$ be the state-dependent drift
function $\theta(z)=u(z)$, and then standard arguments as in Section 6.3 of %
\citet{harrison2013brownian} leads to the following identity 
\begin{equation}
f(z)=\mathbb{E}_z\left \{ \int_0^\infty e^{-rt}\left[ -(\mathcal{A}^u-r)
f(Z^u(t)) \mathrm{d}t -\mathcal{D} f(Z^u(t)) \mathrm{d}Y^u(t) \right]\right
\}.  \label{eq:expection:f}
\end{equation}
Comparing (\ref{eq:expection:f}) with (\ref{problem:discounted}), one is led
to the following PDE for $V^u$: 
\begin{equation}
\mathcal{A}^u V^u(z)-rV^u(z) + c(z,u(z))=0, \text{ } z\in\mathbb{R}^d_+ \text{ } (i=1,\ldots,d).  \label{HJB:discount:u:L}
\end{equation}
with boundary conditions 
\begin{equation}
\mathcal{D}_i V^u(z)=-\kappa_i\text{ if } z_i = 0\text{ } (i=1,\ldots,d).
\label{HJB:discount:u:D}
\end{equation}
That is, if there is a $C^2$ solution of the PDE
(\ref{HJB:discount:u:L})-(\ref{HJB:discount:u:D}) that has polynomial
growth, then it is equal to the value function under policy $u$.}

The corresponding HJB equation, to be solved for the \textit{optimal} value
function $V(\cdot ),$ is 
\begin{equation}
\mathcal{L}V(z)-\max_{\theta \in \Theta }\left\{ \theta \cdot \nabla
V(z)-c(z,\theta )\right\} =rV(z),\ z\in \mathbb{R}_{+}^{d},
\label{HJB:discount:min:L}
\end{equation}%
with boundary conditions 
\begin{equation}
\mathcal{D}_{i}V(z)=-\kappa_i\text{ if }z_{i}=0\text{ }(i=1,2,\ldots ,d).
\label{HJB:discount:min:D}
\end{equation}%
Moreover, the policy 
\begin{equation}
u^{\ast }(z)=\arg \max_{\theta \in \Theta }\{\theta \cdot \nabla
V(z)-c(z,\theta )\}  \label{HJB:discount:optimal-policy}
\end{equation}%
is optimal, meaning that $V^{u^{\ast }}(z)=V(z)$ for $z\in \mathbb{R}%
_{+}^{d} $.

There will be no attempt here to prove existence of $C^{2}$ solutions, but
our computational method proceeds as if that were the case, striving to
compute a $C^{2}$ function $V$ that satisfies (\ref{HJB:discount:min:L})-(%
\ref{HJB:discount:min:D}) as closely as possible in a certain sense. 
   {As an aside, whenever we refer to a $C^2$ function on a
closed set, we mean that it is $C^2$ in an open neighborhood of the set}.

In Appendix \ref{appendix:verification:discounted} we use (\ref{RBM:df}) to
verify that a sufficiently regular solution of the PDE (\ref%
{HJB:discount:u:L})-(\ref{HJB:discount:u:D}) does in fact satisfy (\ref%
{problem:discounted}) as intended, and similarly, that a sufficiently
regular solution of (\ref{HJB:discount:min:L})-(\ref{HJB:discount:min:D})
does in fact satisfy (\ref{problem;discount:min}).


\subsection{Ergodic control}

\label{sec:ergodic:form}

For our second problem formulation, it is assumed that 
\begin{equation}
c(z,\theta )\geq 0\text{ for all }(z,\theta )\in \mathbb{R}_{+}^{d}\times
\Theta .  \label{assumption:non-negative}
\end{equation}%
Readers will see that our analysis can be extended to cost functions that
take on negative values in at least some states, but to do so one must deal
with certain irritating technicalities. To be specific, the issue is whether
the expected values involved in our formulation are well defined.

In preparation for future developments, let us recall that a square matrix $%
R $ of the form (\ref{RBM:R}), called a \textit{Minkowski} \textit{matrix}
in linear algebra (or just \textit{M-matrix} for brevity), is non-singular,
and its inverse is given by the Neumann expansion%
\begin{equation*}
R^{-1}=I+Q+Q^{2}+\ldots \geq 0.
\end{equation*}%
Hereafter, we assume that 
\begin{equation}
\text{there exists at least one }\theta \in \Theta \text{ such that }%
R^{-1}\theta > 0.  \label{assumption:non-singular}
\end{equation}%
It is known that an RBM with a non-singular covariance matrix, reflection
matrix $R$, and negative drift vector $\theta $ has a stationary
distribution if and only if the inequality in (\ref{assumption:non-singular}%
) holds, cf. Section 6 of \citet{harrison1987brownian}. Of course, our
statement of this ``stability condition" reflects the non-standard sign
convention used in this paper. That is, $\theta$ denotes the \textit{negative%
} drift vector of the RBM under discussion.

For our ergodic control problem, a policy function $u:\mathbb{R}%
_{+}^{d}\rightarrow \Theta $ is said to be \textit{admissible} if, first,
the corresponding controlled RBM $Z^{u}$ has a unique stationary
distribution $\pi ^{u}$, and if, moreover, 
\begin{equation}
\int_{\mathbb{R}_{+}^{d}}|f(z)|\,\pi ^{u}(dz)<\infty
\label{eqn:admissibility-cond-ergodic-case}
\end{equation}%
for any function $f:\mathbb{R}_{+}^{d}\rightarrow \mathbb{R}$ with
polynomial growth. For an admissible prolicy $u$, as done in \citet{harrison1987brownian} and %
\citet{dai1991steady}, one can define the boundary measures $\nu
_{1}^{u},\ldots ,\nu _{d}^{u}$ as follows: 
\begin{equation*}
\nu _{i}(A)=\mathbb{E}_{\pi ^{u}}\left[ \int_{0}^{1}1_{\left\{ Z^{u}(t) \,
\in A\right\} }dY_{i}^{u}(t)\right] ,\text{ }i=1,\ldots ,d,
\end{equation*}%
where $A$ is any Borel subset of the boundary surface $S_i= \{ z \in \mathbb{R}_+^d: z_i =0 \}$ and $%
\mathbb{E}_{\pi ^{u}}$ denotes a conditional expectation given that $%
Z(0)\sim \pi ^{u}$. Our assumption (\ref{assumption:non-singular}) ensures
the existence of at least one admissible policy $u$, as follows. Let $\theta
\in \Theta $ be a negative drift vector satisfying (\ref%
{assumption:non-singular}), and consider the constant policy $u(\cdot
)\equiv \theta $. The corresponding controlled process $Z^{u}$ is then an
RBM having a unique stationary distribution $\pi ^{u}$, as noted above. It
has been shown in \citet{budhiraja-lee2007} that the moment generating
function of $\pi ^{u}$ is finite in a neighborhood of the origin, from which
it follows that $\pi ^{u}$ has finite moments of all orders. Thus $\pi ^{u}$
satisfies (\ref{eqn:admissibility-cond-ergodic-case}) for any function $f$
with polynomial growth, so $u$ is admissible.

Because our cost function $c(z,\theta )$ has polynomial growth and our
action space $\Theta $ is bounded, the steady-state average cost 
\begin{equation}
\xi ^{u}\equiv \int_{\mathbb{R}_{+}^{d}}c(z,u(z))\,\pi ^{u}(dz)+ \sum_{i=1}^d \kappa_i \nu_{i}^{u}(S_i)
\label{eqn:relative-value-func-under-u}
\end{equation}%
is well defined and finite under any admissible policy $u$. The objective in
our ergodic control problem is to find an admissible policy $u$ for which $%
\xi ^{u}$ is minimal.

   {Let $u$ be an arbitrary admissible policy and consider the following PDE:
\begin{equation}
\mathcal{L}v^{u}(z)-u(z)\cdot \nabla v^{u}(z)+c(z,u(z) )=\xi \text{
for each }z\in \mathbb{R}_{+}^{d},  \label{HJB:ergodic:L}
\end{equation}with boundary conditions 
\begin{equation}
\mathcal{D}_{i}v^{u}(z)=-\kappa_i\text{ if }z_{i}=0\text{ }(i=1,2,\ldots ,d).
\label{HJB:ergodic:D}
\end{equation}If $(\xi, v^u)$ solve this PDE and $v^u$ is $C^2$ with polynomial growth, then one can show that $\xi = \xi^u$, the steady state average cost under policy $u$, and $v^u$ is the relative value function corresponding to policy $u$.
}

The HJB equation for ergodic control is again of a standard form, involving
a constant $\xi $ (interpreted as the minimum achievable steady-state
average cost) and a relative value function $v:\mathbb{R}_{+}^{d}\rightarrow 
\mathbb{R}$. To be specific, the HJB equation is 
\begin{equation}
\mathcal{L}v(z)-\max_{\theta \in \Theta }\left\{ \theta \cdot \nabla
v(z)-c(z,\theta )\right\} =\xi \text{ for each }z\in \mathbb{R}_{+}^{d},
\label{HJB:ergodic:min:L}
\end{equation}%
with boundary conditions%
\begin{equation}
\mathcal{D}_{i}v(z)=-\kappa _{i}\text{ if }z_{i}=0\text{ }(i=1,2,\ldots ,d).
\label{HJB:ergodic:min:D}
\end{equation}%
Paralleling the previous development for discounted control, we show the
following in Appendix \ref{appendix:verification:ergodic}: if a $C^{2}$
function $v$ and a constant $\xi $ jointly satisfy (\ref{HJB:ergodic:min:L}%
)-(\ref{HJB:ergodic:min:D}), then 
\begin{equation}
\xi =\inf_{u\in \mathcal{U}}\xi ^{u},  \label{eqn:ergodic:optimal-xi}
\end{equation}%
where $\mathcal{U}$ denotes the set of admissible controls for the ergodic
cost formulation. Moreover, the policy 
\begin{equation}
u^{\ast }(z)=\arg \max_{\theta \in \Theta }\left\{ \theta \cdot \nabla
v(z)-c(z,\theta )\right\} ,\ z\in \mathbb{R}_{+}^{d},
\label{eqn:defn:ergodic:optimal:policy}
\end{equation}%
is optimal, meaning that $\xi ^{u^{\ast }}=\xi .$ Again paralleling the
previous development for discounted control, we make no attempt to prove
that such a solution for (\ref{HJB:ergodic:min:L})-(\ref{HJB:ergodic:min:D})
exists. In Appendix \ref{appendix:verification:ergodic} we use (\ref{RBM:df}%
) to verify that a sufficiently regular solution of the PDE (\ref%
{HJB:ergodic:L})-(\ref{HJB:ergodic:D}) does in fact satisfy (\ref%
{eqn:relative-value-func-under-u}) as intended, and similarly, that a
sufficiently regular solution of (\ref{HJB:ergodic:min:L})-(\ref%
{HJB:ergodic:min:D}) does in fact satisfy (\ref{eqn:ergodic:optimal-xi}).

\section{Equivalent SDEs}

\label{sec:sdes}

In this section we prove two key identities, Equations (\ref{eqn:loss:dis})
and (\ref{eqn:identity:erg}) below, that are closely patterned after results
used by \citet{han2018solving} to justify their ``deep BSDE method" for
solution of certain non-linear PDEs. That earlier work provided both
inspiration and detailed guidance for our study, but we include these
derivations to make the current account as nearly self-contained as
possible. Sections \ref{sec:infinite-horizon:SDE} and \ref{sec:ergodic:SDE}
treat the discounted and ergodic cases, respectively.

Our method begins by specifying what we call a \textit{reference policy}.
This is a nominal or default policy, specified at the outset but possibly
revised in light of computational experience, that we use to generate sample
paths of the controlled RBM $Z$. Roughly speaking, one wants to choose the
reference policy so that paths    {of the state process} tend
to occupy parts of the state space thought to be most frequently visited by
an optimal policy.

\subsection{Discounted control}

\label{sec:infinite-horizon:SDE}

Our reference policy for the discounted case chooses a constant action $u(z)
= \tilde{\theta} > 0$ in every state $z \in \mathbb{R}_+^d$. (Again we
stress that, given the non-standard sign convention embodied in (\ref%
{RBM:controled}) and (\ref{assumption:non-negative}), this means that $%
\tilde{Z}$ has a constant drift vector $- \tilde{\theta}$, with all
components negative.) Thus the corresponding \textit{reference process} $%
\tilde{Z}$ is a $d$-dimensional RBM which, in combination with its $d$%
-dimensional pushing process $\tilde{Y}$ and the $d$-dimensional Brownian
motion $W$ defined in Section \ref{sec:RBM}, satisfies 
\begin{equation}
\tilde{Z}(t) = \tilde{Z}(0) + W(t) -\tilde{\theta}\,t +R \, \tilde{Y}(t)
,~t\geq 0,  \label{RBM:behaviour}
\end{equation}%
plus the obvious analogs of Equations (\ref{RBM:Y1}) and (\ref{RBM:Y2}). For
the key identity (\ref{eqn:loss:dis}) below, let 
\begin{equation}
F(z,x) = \tilde{\theta} \cdot x - \max_{ \theta \in \Theta} \left\{ \theta
\cdot x - c(z, \theta) \right\} \text{ for } z \in \mathbb{R}_+^d \text{ and 
} x \in \mathbb{R}^d.  \label{eq:Ffunction}
\end{equation}
{   \textbf{Remark:} In our numerical examples, the cost functions $%
c(z, \theta)$ allow for a closed-form expression of $F(z,x)$, simplifying
our method. However, when the cost function is complex, computing $F(z,x)$
and its partial derivatives with respect to $x$, i.e., $\partial F(z,x) /
\partial x_i$ ($i=1,\ldots,d$), can be challenging. In such cases, several
approaches are possible. First, one can generate a dataset consisting of
inputs ($z^l, x^l$) and outputs $F(z^l,x^l)$ for $l=1,\ldots,L$. With a
sufficiently large $L$, a neural network can be trained offline to
approximate $F$. The partial derivatives of $F$ can then be approximated via
autodifferentiation, as demonstrated in \citet{ata-zhou2024}. Alternatively,
an optimization solver can be used to compute $F(z,x)$, and the envelope
theorem can be invoked to determine its partial derivatives $\partial F(z,x)
/ \partial x_i$ ($i=1,\ldots,d$). However, this approach is computationally
expensive. Lastly, the actor-critic method developed in \citet{zhou2021actor}
can be employed. }

\begin{proposition}
\label{prop:discount:loss} If $V\left( \cdot \right) $ 
   {is a
$C^2$ function with polynomial growth that} satisfies the HJB\ equation (\ref%
{HJB:discount:min:L}) - (\ref{HJB:discount:min:D}), then it also satisfies
the following identity almost surely for any $T>0$: 
\begin{eqnarray}
e^{-rT}V(\tilde{Z}(T))-V(\tilde{Z}(0)) & = &\int_{0}^{T}e^{-rt}\nabla V(%
\tilde{Z}(t)) \cdot \mathrm{d}W(t)  \notag \\
& - &\int_0^T e^{-rt} \kappa \cdot \mathrm{d}\tilde{Y}(t)
-\int_{0}^{T}e^{-rt}F(\tilde{Z}(t),\nabla V(\tilde{Z}(t)))\mathrm{d}t.
\label{eqn:loss:dis}
\end{eqnarray}
\end{proposition}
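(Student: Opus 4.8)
The plan is to apply It\^o's formula to the discounted process $e^{-rt}V(\tilde Z(t))$, integrate over $[0,T]$, and then collapse the resulting boundary integral and Lebesgue integral using the boundary conditions (\ref{HJB:discount:min:D}) and the HJB equation (\ref{HJB:discount:min:L}), respectively. Since the assertion is a pathwise (almost sure) identity, no moment estimates are required: the only integrability point is that the It\^o integral on the right-hand side be well defined, which holds because $\tilde Z$ is a.s.\ continuous, hence bounded, on the compact interval $[0,T]$, so $\nabla V(\tilde Z(\cdot))$ is a.s.\ bounded there as well.

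First I would observe that the reference process $\tilde Z$ of (\ref{RBM:behaviour}) is precisely the RBM of Section~\ref{sec:RBM} with the reference drift $\tilde\theta$ playing the role of $\mu$; extending $V$ to a $C^2$ function on an open neighborhood of $\mathbb{R}_+^d$ (as per the paper's convention on $C^2$ functions on closed sets), the It\^o identity (\ref{RBM:df}) applies verbatim with $(\mu, Y, Z)$ replaced by $(\tilde\theta, \tilde Y, \tilde Z)$. Since $t\mapsto e^{-rt}$ is deterministic and of finite variation, the product rule introduces no covariation term, and one obtains
\begin{equation*}
\mathrm{d}\bigl(e^{-rt}V(\tilde Z(t))\bigr) = e^{-rt}\nabla V(\tilde Z(t))\cdot\mathrm{d}W(t) + e^{-rt}\bigl(\mathcal{L}V - \tilde\theta\cdot\nabla V - rV\bigr)(\tilde Z(t))\,\mathrm{d}t + e^{-rt}\mathcal{D}V(\tilde Z(t))\cdot\mathrm{d}\tilde Y(t).
\end{equation*}
Integrating from $0$ to $T$ produces an identity with these same three pieces on the right.

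It then remains to rewrite the last two pieces. For the pushing term, $\tilde Y_i$ increases only at times $t$ with $\tilde Z_i(t)=0$ by (\ref{RBM:Y2}), and at every such time $\mathcal{D}_iV(\tilde Z(t)) = -\kappa_i$ by (\ref{HJB:discount:min:D}); hence the Riemann--Stieltjes integral against $\mathrm{d}\tilde Y_i$ may be evaluated pathwise, giving $\int_0^T e^{-rt}\mathcal{D}V(\tilde Z(t))\cdot\mathrm{d}\tilde Y(t) = -\int_0^T e^{-rt}\kappa\cdot\mathrm{d}\tilde Y(t)$. For the Lebesgue term, the HJB equation (\ref{HJB:discount:min:L}) states $(\mathcal{L}V - rV)(z) = \max_{\theta\in\Theta}\{\theta\cdot\nabla V(z) - c(z,\theta)\}$, so that $\bigl(\mathcal{L}V - \tilde\theta\cdot\nabla V - rV\bigr)(z) = \max_{\theta\in\Theta}\{\theta\cdot\nabla V(z) - c(z,\theta)\} - \tilde\theta\cdot\nabla V(z) = -F(z,\nabla V(z))$ by the definition (\ref{eq:Ffunction}) of $F$. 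Substituting both simplifications into the integrated identity yields (\ref{eqn:loss:dis}) exactly.

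I do not anticipate a substantive obstacle: the result is a verification resting on the already-established identity (\ref{RBM:df}). The only points needing a word of care are (i) that the boundary condition may be invoked inside the integral against $\mathrm{d}\tilde Y_i$ — legitimate precisely because that measure is carried by $\{t:\tilde Z_i(t)=0\}$ — and (ii) that combining the smooth deterministic factor $e^{-rt}$ with the semimartingale $V(\tilde Z(\cdot))$ contributes no extra quadratic-covariation term, so that the ordinary product rule suffices.
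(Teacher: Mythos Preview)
Your proposal is correct and follows essentially the same approach as the paper: apply It\^o's formula (\ref{RBM:df}) to $e^{-rt}V(\tilde Z(t))$, then use the boundary condition (\ref{HJB:discount:min:D}) together with the complementarity property (\ref{RBM:Y2}) to collapse the pushing integral, and use the HJB equation (\ref{HJB:discount:min:L}) with the definition (\ref{eq:Ffunction}) of $F$ to collapse the Lebesgue integral. Your remarks about the absence of a covariation term and the a.s.\ boundedness of $\nabla V(\tilde Z(\cdot))$ on $[0,T]$ are accurate refinements that the paper leaves implicit.
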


\begin{proof} Applying Ito's formula to $e^{-rt}V(\tilde{Z}(t))$ and using Equation (\ref{RBM:df}) yield
\begin{eqnarray}
e^{-rT}V(\tilde{Z}(T))-V(\tilde{Z}(0))  
&=&\int_{0}^{T}e^{-rt}\nabla V(\tilde{Z}(t)) \cdot \mathrm{d}W(t)+%
\int_{0}^{T}e^{-rt} \mathcal{D} V(\tilde{Z}(t)) \cdot \mathrm{d}\tilde{Y}(t)  \label{dis:ito}  \\
&&+\int_{0}^{T}e^{-rt}\left( \mathcal{L}V(\tilde{Z}(t))-\tilde{\theta}\cdot
\nabla V(\tilde{Z}(t))-rV(\tilde{Z}(t))\right) \mathrm{d}t. \notag
\end{eqnarray}%
Using the boundary condition (\ref{HJB:discount:min:D}), plus the complementarity condition (\ref{RBM:Y2}) for $\tilde{Y}$ and $\tilde{Z}$, one has
\begin{equation}
\int_{0}^{T}e^{-rt} \mathcal{D} V(\tilde{Z}(t)) \cdot \mathrm{d}\tilde{Y}(t)=-\int_0^T e^{-rt} \kappa \cdot \mathrm{d}\tilde{Y}(t). \label{eqn:aux-bound-cond}
\end{equation}%
Furthermore, substituting $z=\tilde{Z}(t)$ in the HJB equation (\ref{HJB:discount:min:L}), multiplying both sides by $e^{-rt}$, rearranging the terms, and integrating over 
$[0,T]$ yields 
\begin{equation}
\int_{0}^{T}e^{-rt}\left( \mathcal{L}V(\tilde{Z}(t))-rV(\tilde{Z}(t))\right)\mathrm{d}t=\int_{0}^{T}e^{-rt}\max_{\theta \in \Theta
}\left( \theta \cdot \nabla V(\tilde{Z}(t))-c(\tilde{Z}(t),\theta )
\right) \mathrm{d}t .
\label{HJB:dis:int}
\end{equation}%
Substituting Equations (\ref{eqn:aux-bound-cond}) and (\ref{HJB:dis:int}) into Equation (\ref{dis:ito}) gives Equation (\ref{eqn:loss:dis}). 
\end{proof}
Proposition \ref{prop:discount:loss} provides the motivation for the loss
function that we strive to minimize in our computational method (see Section %
\ref{sec:han}). Before developing that approach, we prove the following,
which can be viewed as a converse of Proposition \ref{prop:discount:loss}.

\begin{proposition}
\label{thm:doublepara:dis} Suppose that $V: \mathbb{R}_+^d \rightarrow 
\mathbb{R}$ is a $C^2$ function, $G: \mathbb{R}_+^d \rightarrow \mathbb{R}^d$
is continuous, and $V, \, \nabla V$, and $G$ all have polynomial growth.
Also assume that the following identity holds almost surely for some fixed $%
T>0$ and every $\tilde{Z}(0)= z \in \mathbb{R}_+^d$: 
\begin{equation}
e^{-rT}V(\tilde{Z}(T))-V(\tilde{Z}(0))=\int_{0}^{T}e^{-rt}G(\tilde{Z}%
(t))\cdot \mathrm{d}W(t) - \int_0^T e^{-rt} \kappa \cdot \mathrm{d}\tilde{Y}%
(t) -\int_{0}^{T}e^{-rt}F(\tilde{Z}(t),G(\tilde{Z}(t))) \, \mathrm{d}t.
\label{eqn:thm:VG}
\end{equation}%
Then $G(\cdot )=\nabla V(\cdot ) $ and $V$ satisfies the HJB equation (\ref%
{HJB:discount:min:L}) - (\ref{HJB:discount:min:D}).
\end{proposition}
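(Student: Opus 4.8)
The plan is to run the argument of Proposition~\ref{prop:discount:loss} in reverse; the difficulty is that (\ref{eqn:thm:VG}) is assumed only at the single time $T$, so the first task is to upgrade it to information about the whole trajectory. I would begin by taking $\mathbb{E}_z[\cdot]$ in (\ref{eqn:thm:VG}) and using that the stochastic integral against $W$ is a true martingale (its integrand $G$ has polynomial growth, and $\tilde Z(t)$ has moments of polynomial growth in $t$ by Proposition~\ref{prop:bounded}), which yields $V=P_TV+K_T$, where $P_t\phi(z):=\mathbb{E}_z[e^{-rt}\phi(\tilde Z(t))]$ and $K_t(z):=\mathbb{E}_z\big[\int_0^t e^{-rs}\kappa\cdot\mathrm{d}\tilde Y(s)+\int_0^t e^{-rs}F(\tilde Z(s),G(\tilde Z(s)))\,\mathrm{d}s\big]$. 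Iterating gives $V=P_{nT}V+\sum_{j=0}^{n-1}P_{jT}K_T$; since $V$, $G$, and hence $F(\cdot,G(\cdot))$ have polynomial growth and $\mathbb{E}_z[\kappa\cdot\tilde Y(t)]$ has an affine bound, $P_{nT}V\to0$ and the series converges, producing the representation
\[
V(z)=\mathbb{E}_z\!\left[\int_0^\infty e^{-rs}\big(\kappa\cdot\mathrm{d}\tilde Y(s)+F(\tilde Z(s),G(\tilde Z(s)))\,\mathrm{d}s\big)\right].
\]

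By the Markov property of $\tilde Z$, this representation shows that $M(t):=e^{-rt}V(\tilde Z(t))+\int_0^t e^{-rs}\big(\kappa\cdot\mathrm{d}\tilde Y(s)+F(\tilde Z(s),G(\tilde Z(s)))\,\mathrm{d}s\big)$ is a continuous (uniformly integrable) martingale. Applying Itô's formula to $e^{-rt}V(\tilde Z(t))$ together with (\ref{RBM:df}) and reference drift $\tilde\theta$, exactly as in the proof of Proposition~\ref{prop:discount:loss}, exhibits $M$ as a continuous semimartingale whose local-martingale part is $\int_0^t e^{-rs}\nabla V(\tilde Z(s))\cdot\mathrm{d}W(s)$ and whose finite-variation part is
\[
\int_0^t e^{-rs}\big(\mathcal{L}V-\tilde\theta\cdot\nabla V-rV+F(\cdot,G)\big)(\tilde Z(s))\,\mathrm{d}s+\int_0^t e^{-rs}\big(\mathcal{D}V+\kappa\big)(\tilde Z(s))\cdot\mathrm{d}\tilde Y(s).
\]
Since a continuous martingale has no finite-variation part, this expression vanishes identically, so $M(T)-M(0)=\int_0^T e^{-rs}\nabla V(\tilde Z(s))\cdot\mathrm{d}W(s)$; on the other hand (\ref{eqn:thm:VG}) gives $M(T)-M(0)=\int_0^T e^{-rs}G(\tilde Z(s))\cdot\mathrm{d}W(s)$. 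Equating the two and using the Itô isometry forces $\mathbb{E}_z\big[\int_0^T e^{-2rs}(G-\nabla V)^{\top}A\,(G-\nabla V)(\tilde Z(s))\,\mathrm{d}s\big]=0$; since $A$ is positive-definite the integrand is nonnegative, hence zero for a.e.\ $s$ and, by continuity, at $s=0$, so $(G-\nabla V)(z)^{\top}A\,(G-\nabla V)(z)=0$. As this holds for every initial state $z$, I conclude $G\equiv\nabla V$ on $\mathbb{R}_+^d$.

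It then remains to read off the HJB equation from the vanishing finite-variation part. Its $\mathrm{d}s$-integral is absolutely continuous in $t$, whereas its $\mathrm{d}\tilde Y$-integral is carried by the set of times $\tilde Z$ spends on $\bigcup_i S_i$, which is Lebesgue-null because $\tilde Z(s)$ has a density for $s>0$; by uniqueness of the Lebesgue decomposition, each integral vanishes identically. Differentiating the first and letting the initial state vary gives $(\mathcal{L}V-\tilde\theta\cdot\nabla V-rV+F(\cdot,\nabla V))(z)=0$ for all $z$, which is precisely (\ref{HJB:discount:min:L}) after substituting $F(z,\nabla V(z))=\tilde\theta\cdot\nabla V(z)-\max_{\theta\in\Theta}\{\theta\cdot\nabla V(z)-c(z,\theta)\}$. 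For the Neumann condition I would start $\tilde Z$ from a point $z^{\circ}$ in the relative interior of a face $S_i$; then $\tilde Y_j\equiv0$ for $j\neq i$ on a short random interval, so the vanishing $\mathrm{d}\tilde Y$-integral reduces there to $\int_0^t e^{-rs}(\mathcal{D}_iV+\kappa_i)(\tilde Z(s))\,\mathrm{d}\tilde Y_i(s)=0$; dividing by $\tilde Y_i(t)>0$ and letting $t\downarrow0$ gives $\mathcal{D}_iV(z^{\circ})=-\kappa_i$, and continuity of $\mathcal{D}_iV$ extends this to all of $S_i$, which is (\ref{HJB:discount:min:D}).

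The main obstacle I anticipate is the first step: turning the single-horizon identity into the infinite-horizon representation of $V$ (and thence into the martingale property of $M$) is exactly where the polynomial-growth hypotheses on $V$, $\nabla V$, $G$, Proposition~\ref{prop:bounded}, and the affine bound on the boundary term are indispensable, and one must be careful about the integrability needed to treat the various stochastic integrals as genuine martingales and to interchange summation with expectation. The rest---separating the finite-variation term into its absolutely continuous and singular parts, and the limiting argument at the boundary faces---is routine, but it relies on the standard facts that a non-degenerate RBM spends zero Lebesgue time on each $S_i$ and that its $i$-th pushing process starts to increase immediately from states on $S_i$.
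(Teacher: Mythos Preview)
Your approach is correct and takes a genuinely different route from the paper's. Both arguments begin the same way: iterate the assumed identity via the Markov property to obtain the infinite-horizon representation $V(z)=\mathbb{E}_z\big[\int_0^\infty e^{-rs}(F(\tilde Z(s),G(\tilde Z(s)))\,\mathrm{d}s+\kappa\cdot \mathrm{d}\tilde Y(s))\big]$. The paper then \emph{postulates} that the linear PDE $\mathcal{L}f-\tilde\theta\cdot\nabla f+F(\cdot,G)=rf$ with oblique boundary data $\mathcal{D}_if=-\kappa_i$ admits a $C^2$ polynomial-growth solution, identifies that solution with $V$ via the representation and It\^o's lemma, and only then applies It\^o to $e^{-rt}V(\tilde Z(t))$ and subtracts (\ref{eqn:thm:VG}) to deduce $G=\nabla V$; substituting back yields the HJB equation. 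You instead observe that the representation directly makes $M(t)$ a uniformly integrable martingale, apply It\^o once, and use uniqueness of the continuous semimartingale decomposition to force the finite-variation part to vanish pathwise; the interior PDE, the Neumann condition, and $G=\nabla V$ all fall out of this single step. What your route buys is self-containment---no auxiliary PDE existence assumption is needed. What it costs is some direct analysis of the reference RBM: Lebesgue-nullity of the boundary occupation time (to separate the $\mathrm{d}s$ and $\mathrm{d}\tilde Y$ integrals via the Lebesgue decomposition) and immediate growth of $\tilde Y_i$ from points in the relative interior of $S_i$ (to extract $\mathcal{D}_iV=-\kappa_i$). The paper sidesteps these facts by pushing them into the assumed PDE solution.
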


\begin{remark}
The surprising conclusion that (\ref{eqn:thm:VG}) implies $\nabla V(\cdot
)=G(\cdot )$, without any \textit{a priori} relationship between $G$ and $%
\nabla V$ being assumed, motivates the ``double parametrization'' method in
Section \ref{sec:han}.
\end{remark}

\begin{proof}[Proof Sketch for Proposition \ref{thm:doublepara:dis}] Because $\tilde{Z}$ is a time-homogeneous Markov process, we can express (\ref{eqn:thm:VG}) equivalently as follows for any $k=0,1,\ldots$ :
\begin{align}
    e^{-rT} V(\tilde{Z}((k+1)T)) - V(\tilde{Z}(kT)) & =   \int_{kT}^{(k+1)T} e^{-r (t-kT)} G(\tilde{Z}(t)) \cdot dW(t) - \int_{kT}^{(k+1)T} e^{-r (t-kT)}\kappa \cdot \mathrm{d}\tilde{Y}(t)
    \notag \\
    & -\int_{kT}^{(k+1)T} e^{-r (t-kT)} F(\tilde{Z}(t),G(\tilde{Z}(t)))\, dt. \label{eqn:proof:conv:identiy:disc:iterative:aux}
\end{align}
Now multiply both sides of (\ref{eqn:proof:conv:identiy:disc:iterative:aux}) by $e^{-rkT}$, then add the resulting relationships for $k=0,1,\ldots,n-1$ to arrive at the following:
\begin{align}
    & e^{-rnT} V(\tilde{Z}(nT)) \notag \\
     =& V(\tilde{Z}(0)) 
    + \int_{0}^{nT} e^{-rt} G(\tilde{Z}(t)) \cdot dW(t) -\int_{0}^{nT} e^{-rt} \kappa \cdot \mathrm{d}\tilde{Y}(t) - \int_{0}^{nT} e^{-rt} F(\tilde{Z}(t),G(\tilde{Z}(t))) dt. \label{eqn:aux2:proof:conv:disc:identity}
\end{align}
Because $G$ has polynomial growth, one can show that 
\begin{align*}
    \mathbb{E}_z \left[ \int_0^{nT} e^{-2rt} G(\tilde{Z}(t))^2 dz \right]  < \infty
\end{align*}
for all $n \geq 1$. Thus, when we take $\mathbb{E}_z$ of both sides of (\ref{eqn:aux2:proof:conv:disc:identity}), the stochastic integral (that is, the second term) on the right side vanishes, and then rearranging terms gives the following:
\begin{equation*}
V(z)=e^{-rnT}\mathbb{E}_{z}\left[ V(\tilde{Z}(nT))\right] +\mathbb{E}_{z}%
\left[ \int_{0}^{nT}e^{-rt}F(\tilde{Z}(t),G(\tilde{Z}(t)))\mathrm{d}t+\int_{0}^{nT} e^{-rt} \kappa \cdot \mathrm{d}\tilde{Y}(t)\right]
,
\end{equation*}%
for arbitrary positive integer $n.$ 

By Proposition \ref{prop:bounded} and the polynomial growth condition of $V$, we have 
$ e^{-rnT}\mathbb{E}_{z}[ V(\tilde{Z}(nT))]  
\rightarrow 0$ as $n\rightarrow \infty$. Therefore, 
\begin{equation*}
V(z)=\lim_{n\rightarrow \infty} \mathbb{E}_{z}\left[ \int_{0}^{nT }e^{-rt}F\left( \tilde{Z}(t),G(%
\tilde{Z}(t))\right) \mathrm{d}t+\int_{0}^{nT} e^{-rt} \kappa \cdot \mathrm{d}\tilde{Y}(t)\right] \text{ for }z\in \mathbb{R}_{+}^{d}.
\end{equation*}%
Similarly, since $F$ and $G$ have polynomial growth, we conclude that
\begin{align*}
	 &\mathbb{E}_{z}\left[ \int_{0}^{\infty }e^{-rt}\left|F\left( \tilde{Z}(t),G(%
	\tilde{Z}(t))\right)\right| \mathrm{d}t\right] <+\infty  \text{ for }z\in \mathbb{R}_{+}^{d}, \text{ and } \\
	&
	\int_{0}^{nT }e^{-rt}F\left( \tilde{Z}(t),G(%
	\tilde{Z}(t))\right) \mathrm{d}t \leq  \int_{0}^{\infty }e^{-rt}\left|F\left( \tilde{Z}(t),G(%
	\tilde{Z}(t))\right)\right| \mathrm{d}t <+\infty  \text{ for }z\in \mathbb{R}_{+}^{d}.
\end{align*}%
Thus, by dominated convergence and monotone convergence, we have
\begin{equation}
V(z)=\mathbb{E}_{z}\left[ \int_{0}^{\infty }e^{-rt}F\left( \tilde{Z}(t),G(%
\tilde{Z}(t))\right) \mathrm{d}t+\int_{0}^{\infty} e^{-rt} \kappa \cdot \mathrm{d}\tilde{Y}(t)\right] \text{ for }z\in \mathbb{R}_{+}^{d}.
\label{eqn:V(z):double}
\end{equation}%
In other words, $V(z)$ can be viewed as the expected discounted cost
associated with the RBM under the reference policy starting in state $\tilde{Z%
}(0)=z,$ where $F\left( \cdot ,G(\cdot )\right) $ is the state-cost function.    {Now, consider the following PDE:
\begin{equation*}
\mathcal{L}f(z)-\tilde{\theta}\cdot \nabla f(z)+F\left( z,G(z)\right)
=rf(z), \ z\in \mathbb{R}_{+}^{d}, 
\end{equation*}
with boundary conditions
\begin{eqnarray*}
    \mathcal{D}_i f(z) =  -\kappa_i \text{ if } z_i = 0\ (i=1,2, \ldots,d). 
\end{eqnarray*}
As assumed for the other PDEs considered in this paper, we assume this PDE has a $C^2$ solution with polynomial growth. Using Ito's lemma, one can then show that $f(z) = V(z)$. Thus, $V$ satisfies the following PDE:}
\begin{equation}
\mathcal{L}V(z)-\tilde{\theta}\cdot \nabla V(z)+F\left( z,G(z)\right)
=rV(z), \ z\in \mathbb{R}_{+}^{d},  \label{HJB:VG}
\end{equation}%
   {with boundary conditions
\begin{eqnarray}
    \mathcal{D}_i V(z) =  -\kappa_i \text{ if } z_i = 0 \ (i=1,2, \ldots,d). \label{eqn:boundary-condition-prop3-proof}
\end{eqnarray}
}
Suppose that $G(\cdot )=\nabla V(\cdot )$ (which we will prove later).
Substituting this into Equation (\ref{HJB:VG}) and using the definition of $%
F $, it follows that 
\begin{equation}
\mathcal{L}V(z)-\max_{\theta \in \Theta }\left\{ \theta \cdot \nabla
V(z)-c(z,\theta )\right\} =rV(z),\ z\in \mathbb{R}_{+}^{d},
\label{eq:dis:HJB:prop3}
\end{equation}%
which along with the boundary condition (\ref{HJB:discount:u:D}) gives the
desired result.

To complete the proof, it remains to show that $G(\cdot )=\nabla V(\cdot ).$
By applying Ito's formula to $e^{-rt}V(\tilde{Z}(t))$ and using Equations (\ref{RBM:Y1})-(\ref{RBM:Y2}) and (\ref{HJB:discount:min:D}), we conclude that 
\begin{align*}
e^{-rT}V(\tilde{Z}(T))-V(\tilde{Z}(0)) =  &\int_{0}^{T}e^{-rt}\left( \mathcal{L}V(\tilde{Z}(t))-\tilde{\theta}\cdot
\nabla V(\tilde{Z}(t))-rV(\tilde{Z}(t))\right) \mathrm{d}t \\
& +
\int_{0}^{T}e^{-rt}\nabla V(\tilde{Z}(t)) \cdot \mathrm{d}W(t)+\int_0^T e^{-rt}\mathcal{D} V(\tilde{Z}(t))\cdot \mathrm{d}Y(t). 
\end{align*}%
Then, using Equation (\ref{HJB:VG}) and (\ref{eqn:boundary-condition-prop3-proof}), we rewrite the preceding equation as
follows: 
\begin{equation*}
e^{-rT}V(\tilde{Z}(T))-V(\tilde{Z}(0))=\int_{0}^{T}e^{-rt}\nabla V(\tilde{Z}%
(t)) \cdot \mathrm{d}W(t)- \int_0^T e^{-rt} \kappa \cdot \mathrm{d}\tilde{Y}(t) -\int_{0}^{T}e^{-rt}F\left( \tilde{Z}(t),G(\tilde{Z}%
(t))\right) \mathrm{d}t.
\end{equation*}
Comparing this with Equation (\ref{eqn:thm:VG}) yields 
\begin{equation*}
\int_{0}^{T}e^{-rt}\left( G(\tilde{Z}(t))-\nabla V(\tilde{Z}(t))\right)  \cdot 
\mathrm{d}W(t)=0,
\end{equation*}
which yields the following:
\begin{equation}
\mathbb{E}_z\left[\left(\int_{0}^{T}e^{-rt}\left( G(\tilde{Z}(t))-\nabla V(\tilde{Z}(t))\right)  \cdot 
\mathrm{d}W(t)\right)^2\right]=0.
\label{eq:=0}
\end{equation}
Thus, provided that $e^{-rt}\left( G(\tilde{Z}(t))-\nabla V(\tilde{Z}(t))\right)$ is square integrable, Ito's isometry \citep[Lemma D.1]{zhang2020wasserstein} yields the following: 
\begin{align*}
\mathbb{E}_z\left[\left(\int_{0}^{T}e^{-rt}\left( G(\tilde{Z}(t))-\nabla V(\tilde{Z}(t))\right)  \cdot 
\mathrm{d}W(t)\right)^2\right] 
=\mathbb{E}_{z}\left[ \int_{0}^{T}\left\| e^{-rt}\left( G(\tilde{Z}(t))-\nabla
V(\tilde{Z}(t)) \right)\right\|_A ^{2} \mathrm{d}t\right] = 0,
\label{eq:ito_isometry}
\end{align*}
where $\|x\|_A:=x^\top A x$. The square integrability of $e^{-rt} \left(G(\tilde{Z}(t))- \nabla V(\tilde{Z}(t)) \right)$ follows because $G$ and $\nabla V$ have polynomial growth and the action space $\Theta$ is bounded. Because of $A$ is a  positive definite matrix, we then have $\nabla V(\tilde{Z}(t))=G(\tilde{Z}(t))$ almost surely. By the
continuity of $\nabla V\left( \cdot \right) $ and $G(\cdot ),$ we conclude
that $\nabla V(\cdot )=G(\cdot )$.
\end{proof}

\subsection{Ergodic control}

\label{sec:ergodic:SDE}

Again we use a reference policy with constant (negative) drift vector $%
\tilde{\theta}$, and now we assume that $R^{-1} \tilde{\theta} > 0$, which
ensures that the reference policy is admissible for our ergodic control
formulation.

\begin{proposition}
\label{prop:ergodic:loss} If $v\left( \cdot \right) $ and $\xi $ solve the
HJB\ equation (\ref{HJB:ergodic:min:L}) - (\ref{HJB:ergodic:min:D}) 
   {and that $v(\cdot)$ is a $C^2$ function with polynomial
growth}, then 
   {they also satisfy the following identity
almost surely for any $T>0$:} 
\begin{equation}  \label{eqn:identity:erg}
v(\tilde{Z}(T))-v(\tilde{Z}(0))=\int_{0}^{T}\nabla v(\tilde{Z}(t)) \cdot 
\mathrm{d}W(t)+T\xi- \int_0^T \kappa \cdot \mathrm{d}\tilde{Y}(t)
-\int_{0}^{T}F(\tilde{Z}(t),\nabla v(\tilde{Z}(t))) \, \mathrm{d}t.
\end{equation}
\end{proposition}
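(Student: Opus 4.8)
The plan is to follow exactly the route used in the proof of Proposition \ref{prop:discount:loss}, but with the discount factor $e^{-rt}$ deleted throughout and the discounted HJB equation (\ref{HJB:discount:min:L}) replaced by its ergodic counterpart (\ref{HJB:ergodic:min:L}). First I would apply the Ito identity (\ref{RBM:df}) to the $C^2$ function $f=v$ along the reference process $\tilde{Z}$, which has constant (negative) drift $\tilde{\theta}$; this produces the three-term decomposition
\[ v(\tilde{Z}(T)) - v(\tilde{Z}(0)) = \int_0^T \nabla v(\tilde{Z}(t))\cdot \mathrm{d}W(t) + \int_0^T \bigl(\mathcal{L}v - \tilde{\theta}\cdot\nabla v\bigr)(\tilde{Z}(t))\,\mathrm{d}t + \int_0^T \mathcal{D} v(\tilde{Z}(t))\cdot\mathrm{d}\tilde{Y}(t). \]

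Next I would dispose of the boundary term exactly as in (\ref{eqn:aux-bound-cond}): the boundary condition (\ref{HJB:ergodic:min:D}) gives $\mathcal{D}_i v(z) = -\kappa_i$ whenever $z_i=0$, while the complementarity property (\ref{RBM:Y2}) for $(\tilde{Y},\tilde{Z})$ says that the Riemann--Stieltjes measure $\mathrm{d}\tilde{Y}_i$ charges only those times $t$ with $\tilde{Z}_i(t)=0$; writing $\mathcal{D}v(\tilde{Z}(t))\cdot\mathrm{d}\tilde{Y}(t)=\sum_i \mathcal{D}_i v(\tilde{Z}(t))\,\mathrm{d}\tilde{Y}_i(t)$ and combining the two facts yields $\int_0^T \mathcal{D}v(\tilde{Z}(t))\cdot\mathrm{d}\tilde{Y}(t) = -\int_0^T \kappa\cdot\mathrm{d}\tilde{Y}(t)$. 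Then I would rewrite the drift term using the HJB PDE: evaluating (\ref{HJB:ergodic:min:L}) at $z=\tilde{Z}(t)$ gives $\mathcal{L}v(\tilde{Z}(t)) = \xi + \max_{\theta\in\Theta}\{\theta\cdot\nabla v(\tilde{Z}(t)) - c(\tilde{Z}(t),\theta)\}$, so subtracting $\tilde{\theta}\cdot\nabla v(\tilde{Z}(t))$ and invoking the definition (\ref{eq:Ffunction}) of $F$ gives $(\mathcal{L}v - \tilde{\theta}\cdot\nabla v)(\tilde{Z}(t)) = \xi - F(\tilde{Z}(t),\nabla v(\tilde{Z}(t)))$. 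Integrating this identity over $[0,T]$ and substituting both simplifications back into the Ito decomposition produces precisely (\ref{eqn:identity:erg}).

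As for the main obstacle: there really is no deep one. Because (\ref{eqn:identity:erg}) is asserted as a pathwise (almost sure) identity for a fixed finite horizon $T$, we never take expectations, so — in contrast to the converse-type statements such as Proposition \ref{thm:doublepara:dis} — no integrability, Ito-isometry, or moment estimates are needed, and the admissibility of the reference policy (guaranteed by $R^{-1}\tilde{\theta}>0$) plays no role in this particular argument. The only points deserving care are the verification that (\ref{RBM:df}) is legitimately applicable (it is, since $v\in C^2$ on a neighborhood of $\mathbb{R}_+^d$) and the componentwise bookkeeping in the boundary-term step, where one must be careful that each $\mathcal{D}_i v$ is evaluated against $\mathrm{d}\tilde{Y}_i$ only on the set $\{t:\tilde{Z}_i(t)=0\}$ before the constant $-\kappa_i$ can be pulled out.
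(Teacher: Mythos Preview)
Your proposal is correct and follows essentially the same route as the paper's own proof: apply the Ito identity (\ref{RBM:df}) with $f=v$ along the reference process, convert the boundary term using (\ref{HJB:ergodic:min:D}) together with the complementarity condition (\ref{RBM:Y2}), and then substitute the ergodic HJB equation (\ref{HJB:ergodic:min:L}) into the drift term. If anything, your write-up is slightly more explicit than the paper's in spelling out how the definition (\ref{eq:Ffunction}) of $F$ enters the last step.
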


\begin{proof}
Applying Ito's formula to $v(z)$ yields
\begin{align}
v(\tilde{Z}(T))&-v(\tilde{Z}(0))  \label{erg:ito} \\
&=\int_{0}^{T}\nabla v(\tilde{Z}(t)) \cdot \mathrm{d}W(t)+\int_{0}^{T} \mathcal{D} v(%
\tilde{Z}(t)) \cdot \mathrm{d}\tilde{Y}(t)+\int_{0}^{T}\left( \mathcal{L}v(\tilde{Z%
}(t))-\tilde{\theta}\cdot \nabla v(\tilde{Z}(t))\right) \mathrm{d}t.  \notag
\end{align}%
Recall the boundary condition of the HJB equation is $\mathcal{D}_{j}v(z)=-\kappa_j$ if $z_{j}=0$. Thus Equations (\ref{RBM:Y1})-(\ref{RBM:Y2}) jointly imply
\begin{equation*}
\int_{0}^{T}\mathcal{D} v(\tilde{Z}(t))\cdot \mathrm{d}\tilde{Y}(t)= -\int_0^T \kappa \cdot \mathrm{d}\tilde{Y}(t) .
\end{equation*}%
Then, substituting the HJB equation (\ref{HJB:ergodic:min:L}) into
Equation (\ref{erg:ito}) gives (\ref{eqn:identity:erg}).
\end{proof}

\begin{proposition}
\label{thm:doublepara:ergodic} Suppose that $v: \mathbb{R}_+^d \rightarrow 
\mathbb{R}$ is a $C^2$ function, $g: \mathbb{R}_+^d \rightarrow \mathbb{R}^d$
is continuous, and $v, \, \nabla v, \, g$ all have polynomial growth. Also
assume that the following identity holds almost surely for some fixed $T>0$,
a scalar $\xi$ and every $Z(0)=z \in \mathbb{R}_+^d$: 
\begin{equation}
v(\tilde{Z}(T))-v(\tilde{Z}(0))=\int_{0}^{T}g(\tilde{Z}(t)) \cdot \mathrm{d}%
W(t)+T\xi - \int_0^T \kappa \cdot \mathrm{d}\tilde{Y}(t) -\int_{0}^{T}F(%
\tilde{Z}(t),g(\tilde{Z}(t))) \, \mathrm{d}t.  \label{eqn:thm:VG_erg}
\end{equation}%
Then, $g(\cdot ) = \nabla v(\cdot )$ and $(v, \xi)$ satisfies the HJB
equation (\ref{HJB:discount:min:L}) - (\ref{HJB:discount:min:D}).
\end{proposition}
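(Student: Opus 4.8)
\textbf{Proof proposal for Proposition~\ref{thm:doublepara:ergodic}.}
The plan is to transcribe the proof of Proposition~\ref{thm:doublepara:dis} into the long-run-average setting. First I would use time-homogeneity of the reference process $\tilde Z$ to rewrite the hypothesized identity \eqref{eqn:thm:VG_erg} over each block $[kT,(k+1)T]$ and sum for $k=0,1,\ldots,n-1$, obtaining
\begin{equation*}
v(\tilde Z(nT)) - v(\tilde Z(0)) = \int_0^{nT} g(\tilde Z(t))\cdot \mathrm{d}W(t) + nT\xi - \int_0^{nT} \kappa\cdot \mathrm{d}\tilde Y(t) - \int_0^{nT} F(\tilde Z(t), g(\tilde Z(t)))\,\mathrm{d}t .
\end{equation*}
Taking $\mathbb{E}_z$ of both sides annihilates the stochastic integral (since $g$ has polynomial growth and $\mathbb{E}_z\!\int_0^{nT}|g(\tilde Z(t))|^2\,\mathrm{d}t<\infty$ by Proposition~\ref{prop:bounded}), and then dividing by $nT$ and letting $n\to\infty$ identifies $\xi$ as the steady-state average cost of the reference RBM under the state-cost function $F(\cdot,g(\cdot))$, namely $\xi = \int_{\mathbb{R}_+^d} F(z,g(z))\,\pi^{\tilde\theta}(\mathrm{d}z) + \sum_{i=1}^d \kappa_i \nu_i^{\tilde\theta}(S_i)$.

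Second, following the paper's standing convention I would assume that the linear PDE
\begin{equation*}
\mathcal{L}f(z) - \tilde\theta\cdot\nabla f(z) + F(z,g(z)) = \xi,\quad z\in\mathbb{R}_+^d,\qquad \mathcal{D}_i f(z) = -\kappa_i \ \text{ if }\ z_i=0 \ \ (i=1,\ldots,d),
\end{equation*}
has a $C^2$ solution $f$ with polynomial growth; this is consistent precisely because $\xi$ was just identified as the correct ergodic constant. Applying Ito's formula \eqref{RBM:df} to $f(\tilde Z(t))$ and invoking this PDE, its boundary condition, and the complementarity relation \eqref{RBM:Y2} for $(\tilde Z,\tilde Y)$ yields an identity of exactly the form \eqref{eqn:thm:VG_erg} but with $f$ and $\nabla f$ in the roles of $v$ and $g$ (the term $F(\tilde Z(t),g(\tilde Z(t)))$ being literally the same in both, since $g$, not $\nabla f$, is plugged into the PDE). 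Subtracting it from \eqref{eqn:thm:VG_erg} gives
\begin{equation*}
(v-f)(\tilde Z(T)) - (v-f)(\tilde Z(0)) = \int_0^T \big(g(\tilde Z(t)) - \nabla f(\tilde Z(t))\big)\cdot \mathrm{d}W(t),\qquad T>0 .
\end{equation*}

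Third, taking $\mathbb{E}_z$ in this identity (the stochastic integral again vanishing by polynomial growth) gives $\mathbb{E}_z[(v-f)(\tilde Z(T))]=(v-f)(z)$ for every $T$; letting $T\to\infty$ and using convergence of $\tilde Z$ to $\pi^{\tilde\theta}$ — which has finite moments of all orders, so the polynomial-growth function $v-f$ is uniformly integrable along the path — forces $v-f$ to equal the constant $\int (v-f)\,\mathrm{d}\pi^{\tilde\theta}$. Hence $\nabla f\equiv\nabla v$, the right-hand side of the last display is identically zero, and squaring, taking $\mathbb{E}_z$, and applying Ito's isometry together with positive-definiteness of $A$ yields $g(\tilde Z(t))=\nabla v(\tilde Z(t))$ a.s.\ for a.e.\ $t$, whence $g\equiv\nabla v$ by continuity — exactly as in Proposition~\ref{thm:doublepara:dis} (the square-integrability needed for Ito's isometry holds because $g$ and $\nabla v$ have polynomial growth and $\Theta$ is bounded). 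Finally, since $v=f+\mathrm{const}$ and $\mathcal{L}$, $\nabla$, $\mathcal{D}$ all annihilate constants, $v$ satisfies the PDE displayed in the second step; substituting $g=\nabla v$ there and using the definition \eqref{eq:Ffunction} of $F$, the two $\tilde\theta\cdot\nabla v$ terms cancel, leaving $\mathcal{L}v(z) - \max_{\theta\in\Theta}\{\theta\cdot\nabla v(z)-c(z,\theta)\}=\xi$ together with $\mathcal{D}_i v(z)=-\kappa_i$ on $S_i$, i.e.\ the ergodic HJB equation \eqref{HJB:ergodic:min:L}--\eqref{HJB:ergodic:min:D}.

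The step I expect to be the main obstacle is the passage to the ergodic limit in the first and third steps: showing that $\frac1{nT}\mathbb{E}_z\!\int_0^{nT}F(\tilde Z(t),g(\tilde Z(t)))\,\mathrm{d}t$ and $\frac1{nT}\mathbb{E}_z\!\int_0^{nT}\kappa\cdot\mathrm{d}\tilde Y(t)$ converge to the corresponding stationary averages, that $\frac1{nT}\mathbb{E}_z[v(\tilde Z(nT))]\to 0$, and that $\mathbb{E}_z[(v-f)(\tilde Z(T))]\to\int(v-f)\,\mathrm{d}\pi^{\tilde\theta}$. Each of these rests on convergence of the reference RBM to stationarity combined with the polynomial-growth bounds and the fact, recorded earlier in the paper via \citet{budhiraja-lee2007}, that $\pi^{\tilde\theta}$ has finite moments of all orders, which supplies the uniform integrability needed to interchange limits and expectations; everything else is a routine transcription of the discounted argument, with the discounting factor $e^{-rnT}\to 0$ replaced by these Cesàro-type limits.
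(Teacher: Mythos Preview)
Your proposal is correct and follows essentially the same architecture as the paper's proof: identify $\xi$ as the stationary cost of the reference RBM under state-cost $F(\cdot,g(\cdot))$, posit a $C^2$ solution $\tilde v$ (your $f$) of the corresponding linear Poisson equation, show $v-\tilde v$ is constant via convergence to stationarity (the paper cites Theorem~4.12 of \citet{budhiraja-lee2007} for this step), and finish with It\^o's isometry to get $g=\nabla v$. The one methodological difference is your first step: you identify $\xi$ by taking a Ces\`aro limit of $\frac{1}{nT}\mathbb{E}_z[\,\cdots\,]$ from a fixed initial state, whereas the paper simply starts the reference process in stationarity, $\tilde Z(0)\sim\tilde\pi$, so that $\mathbb{E}_{\tilde\pi}[v(\tilde Z(T))]=\mathbb{E}_{\tilde\pi}[v(\tilde Z(0))]$ cancels immediately and $\xi$ drops out in one line; this sidesteps precisely the ergodic-limit justifications you flagged as your main obstacle.
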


\begin{proof}[Proof Sketch]
Let $\tilde{\pi}$ be the stationary distribution of the RBM $\tilde{Z}$ under the reference policy and $\tilde{Z%
}(\infty )$ be a random variable with the distribution $\tilde{\pi}$. Then, assuming the initial distribution of the RBM under the reference policy is $\tilde{\pi}$, i.e. $\tilde{Z}(0)\sim \tilde{\pi}$, its marginal distribution at time t is also $\tilde{\pi}$, i.e. $\tilde{Z}(t)\sim \tilde{\pi%
}$ for every $t\geq 0$.

Because $g$ has polynomial growth, one can show that the expectation of the stochastic integral (that is, the first term) on the right side of (\ref{eqn:thm:VG_erg}) vanishes. Then, by taking the expectation over $\tilde{Z}(0)\sim \tilde{\pi},$ Equation (\ref%
{eqn:thm:VG_erg}) implies 
\begin{equation}
\mathbb{E}_{\tilde{\pi}}\left[ v(\tilde{Z}(0))\right] =\mathbb{E}_{\tilde{\pi%
}}\left[ v(\tilde{Z}(T))\right] +\mathbb{E}_{\tilde{\pi}}\left[
\int_{0}^{T}F(\tilde{Z}(t),g(\tilde{Z}(t)))\mathrm{d}t\right] + \mathbb{E}_{\tilde{\pi}}\left[\int_0^T  \kappa \cdot \mathrm{d}\tilde{Y}(t)\right] -T\xi .
\label{eqn:VT2:ergodic}
\end{equation}%
By observing that $\mathbb{E}_{\tilde{\pi}}[ v(\tilde{Z}(0))]  = \mathbb{E}_{\tilde{%
\pi}}[ v(\tilde{Z}(T))]$ and
\begin{eqnarray*}
\mathbb{E}_{\tilde{\pi}}\left[ F(\tilde{Z}(t),g(\tilde{Z}(t)))\right]  &=&%
\mathbb{E}\left[ F(\tilde{Z}(\infty ),g(\tilde{Z}(\infty )))\right] \text{
for }t\geq 0, \\
\mathbb{E}_{\tilde{\pi}}\left[\int_0^T  \kappa \cdot \mathrm{d}\tilde{Y}(t)\right] &=&T\mathbb{E}_{\tilde{\pi}}\left[\int_0^1  \kappa \cdot \mathrm{d}\tilde{Y}(t)\right],
\end{eqnarray*}%
we conclude that 
\begin{align*}
\xi& =\mathbb{E}[ F(\tilde{Z}(\infty ),g(\tilde{Z}(\infty )))]+\mathbb{E}_{\tilde{\pi}}\left[\int_0^1  \kappa \cdot \mathrm{d}\tilde{Y}(t)\right] \\
&=\mathbb{E}[ F(\tilde{Z}(\infty ),g(\tilde{Z}(\infty )))]+\sum_{i=1}^{d} \kappa_i \tilde{\nu}_i(S_i),
\end{align*} 
where $\tilde{\nu}_i (\cdot)$ is the boundary measure on the boundary surface $S_i = \{z \in \mathbb{R}_+^d : z_i =0 \}$ associated with the RBM under the reference policy. 
In other words, 
$\xi $ can be viewed as the expected steady-state cost
associated with the RBM under the reference policy, where $F\left( \cdot
,g(\cdot )\right) $ is the state-cost function.    {Now, consider the following PDE:}
\begin{equation}
\mathcal{L}\tilde{v}(z)-\tilde{\theta}\cdot \nabla \tilde{v}(z)+F\left(
z,g(z)\right) =\xi, \ \ z\in \mathbb{R}_{+}^{d},
\label{HJB:VG:ergo}
\end{equation}%
with boundary conditions $\mathcal{D}_i \tilde{v}(z) = -\kappa_i \text { if } z_i=0 \ \ (i=1,\ldots,d)$. 

   {We assume (\ref{HJB:VG:ergo}) has a $C^2$ solution with polynomial growth. Using Ito's lemma, one can then show that $\tilde{v}$ is the relative value function corresponding to the reference process $\tilde{Z}$, i.e. for policy $u(z) = \tilde{\theta}$ for $z \in \mathbb{R}_+^d$, under the state cost function $F(z,g(z))$.
}
Furthermore, applying Ito's formula to $\tilde{v}(\tilde{Z}(t))$    {on the interval $[0, nT]$ for $n \geq 1$} yields 
\begin{align}
\tilde{v}(\tilde{Z}(nT))- &\tilde{v}(\tilde{Z}(0))  \label{erg:ito2} \\
&=\int_{0}^{nT}\nabla \tilde{v}(\tilde{Z}(t)) \cdot \mathrm{d}W(t)+\int_{0}^{nT}%
\mathcal{D} \tilde{v}(\tilde{Z}(t)) \cdot \mathrm{d}\tilde{Y}(t)+\int_{0}^{nT}\left( 
\mathcal{L}\tilde{v}(\tilde{Z}(t))-\tilde{\theta}\cdot \nabla \tilde{v}(%
\tilde{Z}(t))\right) \mathrm{d}t.  \notag
\end{align}%
Since $\tilde{v}(z)$ also satisfies the boundary conditions $\mathcal{D}_i \tilde{v}(z) = - \kappa_i \text { if } z_i=0 \ \ (i=1,\ldots,d)$, it follows from Equations (\ref{RBM:Y1})-(\ref{RBM:Y2}) that 
\begin{equation*}
\int_{0}^{nT}\mathcal{D} \tilde{v}(\tilde{Z}(t)) \cdot \mathrm{d}\tilde{Y}(t)=-\int_{0}^{nT}\kappa \cdot \mathrm{d}\tilde{Y}(t).
\end{equation*}%
Then, substituting Equation (\ref{HJB:VG:ergo}) into Equation (\ref{erg:ito2}%
) gives 
\begin{equation}
\tilde{v}(\tilde{Z}(nT))-\tilde{v}(\tilde{Z}(0))=\int_{0}^{nT}\nabla \tilde{v}(%
\tilde{Z}(t))\cdot \mathrm{d}W(t)+n T\xi-\int_0^{nT}  \kappa \cdot \mathrm{d}\tilde{Y}(t)  -\int_{0}^{nT}F(\tilde{Z}(t),g(\tilde{Z}(t))) \,
\mathrm{d}t, \ n \geq 1.  \label{eqn:tilde_v}
\end{equation}%

In the proof of Proposition \ref{thm:doublepara:dis}, we first showed that, because $\tilde{Z}$ is a time-homogeneous Markov process, the assumed stochastic relationship (\ref{eqn:thm:VG}) can be extended to the more general form (\ref{eqn:aux2:proof:conv:disc:identity}) with $n$ an arbitrary positive integer. In the current context one can argue in exactly the same way to establish the following. First, the assumed stochastic relationship (\ref{eqn:thm:VG_erg}) actually holds in the more general form where $T$ is replaced by $nT$, with $n$ an arbitrary positive integer. And then, after taking expectations on both sides of the generalized version of (\ref{eqn:thm:VG_erg}) and (\ref{eqn:tilde_v}) we arrive at the following:
\begin{eqnarray}
v(z) &=&\mathbb{E}_{z}\left[ v(\tilde{Z}(nT))\right] -nT\xi + \mathbb{E}_z\left[\int_0^{nT}  \kappa \cdot \mathrm{d}\tilde{Y}(t) \right]+\mathbb{E}_{z}\left[
\int_{0}^{nT}F(\tilde{Z}(t),g(\tilde{Z}(t)))\mathrm{d}t\right] ,\text{ and}
\label{eqn:Tvn} \\
\tilde{v}(z) &=&\mathbb{E}_{z}\left[ \tilde{v}(\tilde{Z}(nT))\right] -nT\xi+ \mathbb{E}_z\left[\int_0^{nT}  \kappa \cdot \mathrm{d}\tilde{Y}(t) \right] +\mathbb{%
E}_{z}\left[ \int_{0}^{nT}F(\tilde{Z}(t),g(\tilde{Z}(t)))\mathrm{d}t\right] ,
\label{eqn:Tvntilde}
\end{eqnarray}%
for  $z\in \mathbb{R}_{+}^{d}$ and an arbitrary positive integer $n.$ Note that the expectation of the stochastic integral vanishes because $\nabla \tilde{v}$ has polynomial growth. Subtracting (\ref{eqn:Tvntilde}) from (\ref{eqn:Tvn}) further yields
\begin{equation*}
v(z)-\tilde{v}(z)=\mathbb{E}_{z}\left[ v(Z(nT))\right] -\mathbb{E}_{z}\left[ 
\tilde{v}(Z(nT))\right] .
\end{equation*}

   {Because the solution of (\ref{HJB:VG:ergo}) is determined only up to an additive constant}, we assume $\mathbb{E}\left[ \tilde{v}(\tilde{Z}%
(\infty ))\right] =\mathbb{E}\left[ v(\tilde{Z}(\infty ))\right] $. Since $v(\cdot)$ and $\tilde{v}(\cdot)$ have polynomial growth,    {we conclude the following from Theorem 4.12 of \citet{budhiraja-lee2007}:} 
\begin{eqnarray*}
\lim_{n\rightarrow +\infty }\mathbb{E}_{z}\left[ v(Z(nT))\right] &=&\mathbb{E%
}\left[ v(\tilde{Z}(\infty ))\right] \text{ and} \\
\lim_{n\rightarrow +\infty }\mathbb{E}_{z}\left[ \tilde{v}(Z(nT))\right] &=&\mathbb{E%
}\left[ \tilde{v}(\tilde{Z}(\infty ))\right] .
\end{eqnarray*}%
Therefore, we have 
\begin{equation*}
v(z)-\tilde{v}(z)=\lim_{n\rightarrow +\infty }\left( \mathbb{E}_{z}\left[
v(Z(nT))\right] -\mathbb{E}_{z}\left[ \tilde{v}(Z(nT))\right] \right) = 0 %
\text{ for }z\in \mathbb{R}_{+}^{d},
\end{equation*}
which means $v(\cdot )$ also satisfies the  PDE (\ref{HJB:VG:ergo}) and the associated boundary conditions. That is,  
\begin{align}
&\mathcal{L}v(z)-\tilde{\theta}\cdot \nabla v(z)+F\left( z,g(z)\right) =\xi ,%
\text{ for }z\in \mathbb{R}_{+}^{d}.  \label{HJB:VG:ergo2} \\
&\mathcal{D}_i v(z) = -\kappa_i \text { if } z_i=0 \ \ (i=1,\ldots,d). \label{HJB:aux:ergodic:BC}
\end{align}

Suppose that $g(\cdot )=\nabla v(\cdot )$ (which we will prove later).
Substituting this into Equation (\ref{HJB:VG:ergo2}) and using the
definition of $F$, it follows that 
\begin{equation*}
\mathcal{L}v(z)-\max_{\theta \in \Theta }\left\{ \theta \cdot \nabla
v(z)-c(z,\theta )\right\} =\xi, \ \ z\in \mathbb{R}_{+}^{d},
\end{equation*}%
which along with the boundary condition (\ref{HJB:aux:ergodic:BC}) gives the
desired result.

To complete the proof, it remains to show that $g(\cdot )=\nabla v(\cdot ).$
By applying Ito's formula to $v(\tilde{Z}(t))$ and using Equations (\ref{RBM:Y1})-(\ref{RBM:Y2}) and (\ref{HJB:aux:ergodic:BC}), we conclude that 
\begin{eqnarray*}
v(\tilde{Z}(T))-v(\tilde{Z}(0)) =\int_{0}^{T}\left( \mathcal{L}v(\tilde{Z}%
(t))-\tilde{\theta}\cdot \nabla v(\tilde{Z}(t))\right) \mathrm{d}%
t-  \int_0^T  \kappa \cdot \mathrm{d}\tilde{Y}(t)+\int_{0}^{T}\nabla v(\tilde{Z}(t)) \cdot \mathrm{d}W(t).
\end{eqnarray*}%
Then, using Equation (\ref{HJB:VG:ergo2}), we rewrite the preceding equation
as follows: 
\begin{equation*}
v(\tilde{Z}(T))-v(\tilde{Z}(0))=T\xi -\int_{0}^{T}F\left( \tilde{Z}(t),g(%
\tilde{Z}(t))\right) \mathrm{d}t-  \int_0^T  \kappa \cdot \mathrm{d}\tilde{Y}(t) +\int_{0}^{T}\nabla v(\tilde{Z}(t)) \cdot \mathrm{d}%
W(t).
\end{equation*}%
Comparing this with Equation (\ref{eqn:thm:VG_erg}) yields 
\begin{equation*}
\int_{0}^{T}\left( g(\tilde{Z}(t))-\nabla v(\tilde{Z}(t))\right) \cdot \mathrm{d}%
W(t)=0,
\end{equation*}
which yields the following:
\begin{equation}
\mathbb{E}_z\left[\left( \int_{0}^{T}\left( g(\tilde{Z}(t))-\nabla v(\tilde{Z}(t))\right) \cdot \mathrm{d}%
W(t)\right)^2 \right]=0.
\label{eq:=0_ergodic}
\end{equation}
Thus, provided $g(\tilde{Z}(t))-\nabla v(\tilde{Z}(t))$ is square integrable, Ito's isometry  \citep[Lemma D.1]{zhang2020wasserstein} gives the following:
\begin{align*}
    \mathbb{E}_z\left[\left( \int_{0}^{T}\left( g(\tilde{Z}(t))-\nabla v(\tilde{Z}(t))\right) \cdot \mathrm{d}%
W(t)\right)^2 \right] 
=\mathbb{E}_{z}\left[ \int_{0}^{T}\left \| g(\tilde{Z}(t))-\nabla v(\tilde{Z}%
(t))\right \|_A ^{2}\mathrm{d}t\right] \label{eq:itoiosmetry_ergodic}=0.
\end{align*}
The square integrability of $g(\tilde{Z}(t))-\nabla v(\tilde{Z}(t))$ follows because $g$ and $\nabla v$ have polynomial growth, and $\mathbb{E}_{z}\left( | \tilde{Z}(nT)| ^{k}\right) $  is finite for all $k$ because our action space $\Theta$ is bounded. 
Then, since $A$ is positive definite, $\nabla v(\tilde{Z}(t))=g(\tilde{Z}(t))$ almost surely. By the
continuity of $\nabla v\left( \cdot \right) $ and $g(\cdot ),$ we conclude
that $\nabla v(\cdot )=g(\cdot )$. \end{proof}

\section{Computational method}

\label{sec:han}

We follow in the footsteps of \citet{han2018solving}, who developed a
computational method to solve semilinear parabolic partial differential
equations (PDEs). Those authors focused on a backward stochastic
differential equation (BSDE) associated with their PDE, and in similar
fashion, we focus on the stochastic differential equations (\ref{eqn:thm:VG}%
) and (\ref{eqn:thm:VG_erg}) that are associated with our two stochastic
control formulations (see Section \ref{sec:sdes}). Our method differs from
that of \citet{han2018solving}, because they consider PDEs on a finite-time
interval with an unbounded state space and a specified terminal condition,
whereas our stochastic control problem has an infinite time horizon and
state space constraints. As such, it leads to a PDE on a polyhedral domain
with oblique derivative boundary conditions. We modify the approach of %
\citet{han2018solving} to incorporate those additional features, treating
the discounted and ergodic formulations in Sections \ref%
{sec:infinite-horizon:algo} and \ref{sec:ergodic:algo}, respectively.

\subsection{Discounted control}

\label{sec:infinite-horizon:algo} We approximate the value function $V(\cdot
)$ and its gradient $\nabla V(\cdot )$ by deep neural networks $%
V_{w_{1}}(\cdot )$ and $G_{w_{2}}(\cdot )$, respectively, with associated
parameter vectors $w_{1}$ and $w_{2}$. Seeking an approximate solution of
the stochastic equation (\ref{eqn:thm:VG}), we define the loss function 
\begin{eqnarray}
\ell (w_{1},w_{2}) &=&\mathbb{E}\left[ \left( e^{-rT}\,V_{w_{1}}(\tilde{Z}%
(T))-V_{w_{1}}(\tilde{Z}(0))+\int_{0}^{T}e^{-rt}\kappa \cdot \mathrm{d}%
\tilde{Y}(t)\right. \right.  \label{dis:loss:l} \\
&&\left. \left. -\int_{0}^{T}e^{-rt}G_{w_{2}}(\tilde{Z}(t))\cdot \mathrm{d}%
W(t)+\int_{0}^{T}e^{-rt}F(\tilde{Z}(t),G_{w_{2}}(\tilde{Z}(t)))\,\mathrm{d}%
t\right) ^{2}\right] .  \notag
\end{eqnarray}%
{  The initial state $\tilde{Z}(0)$ can be randomly chosen, and
the expectation in the previous equation is calculated with respect to the
sample path distribution of the reference process $\tilde{Z}(\cdot )$ (see
Algorithm \ref{algo:dis} for details). As mentioned in Step 6 of Algorithm %
\ref{algo:dis}, the process $\tilde{Z}(\cdot )$ is run continuously, meaning
that the terminal state of one iteration becomes the initial state of the
next iteration. This approach can be seen as an approximation to starting
the reference process with its steady-state distribution. }

{  \textbf{Remark:} The loss function $\ell (w_{1},w_{2})$ defined
in Equation (\ref{dis:loss:l}) corresponds to the $L_1(V)$ loss defined in
Equation (2.31) of \citet{zhou2021actor}. The authors consider this a
"variance reduced" loss function due to the additional term $%
\int_{0}^{T}e^{-rt}G_{w_{2}}(\tilde{Z}(t)) \cdot \mathrm{d}W(t)$. This term
can be interpreted as an approximating martingale process or a control
variate, a common approach in the simulation literature (see, for example, %
\citet{andradottir1993variance}, \citet{henderson2002approximating}, and %
\citet{dai2022queueing}). }

Our definition (\ref{dis:loss:l}) of the loss function does not explicitly
enforce the consistency requirement $\nabla V_{w_{1}}(\cdot
)=G_{w_{2}}(\cdot )$, but Proposition \ref{thm:doublepara:dis} provides the
justification for this separate parametrization. This type of double
parametrization has also been implemented by \citet{zhou2021actor}.

\makeatletter\renewcommand{\ALG@name}{Subroutine} \makeatother

\begin{algorithm}[!ht]
\caption{Euler discretization scheme}
\label{algo:euler}
\begin{algorithmic}[1]
\Require{The drift vector $-\tilde{\theta}$, the covariance matrix $A$, the reflection matrix $R$, the time horizon $T$, a step-size $h$ (for simplicity, we assume $N\triangleq T/h$ is an integer), and a starting point $\tilde{Z}(0) =z$.}
\Ensure{A discretized reflected Brownian motion with the boundary pushing process increments and the Brownian increments at times $h,2h,\ldots,Nh$. }
\Function{Discretize}{$T,h,z$}
\State For time interval $[0,T]$ and $N=T/h$, construct the partition $0=t_0<t_1<\ldots<t_N = T$, where $\Delta t_n = t_{n+1}-t_n = h$ for $n=0,1,\ldots,N-1$.
\State Generate $N$ i.i.d. $d$-dimensional Gaussian random variables with mean zero and covariance matrix $hA $, denoted by $\delta_0,\ldots, \delta_{N-1}$. 
\For{$k \gets 0$ to  $N-1$ }
\State $x \gets \tilde{Z}(k h) + \delta_k -\tilde{\theta} h$
\State { $\tilde{Z}((k + 1) h),  \Delta \tilde{Y}(kh)  \gets$ \Call{Skorokhod}{$x$}}
\EndFor
\State \Return  $\tilde{Z}(h),\tilde{Z}(2h),\ldots, \tilde{Z}(Nh)$; $\Delta \tilde{Y}(0)$,$\Delta \tilde{Y}(h),\ldots,\Delta \tilde{Y}((N-1)h)$; and $\delta_0$,\ldots,$\delta_{N-1}$.
\EndFunction
\end{algorithmic}
\end{algorithm}

\begin{algorithm}
\caption{Solve the Skorokhod problem (linear complementarity problem)}
\label{algo:Skorokhod}
\begin{algorithmic}[1]
\Require{A vector $x\in \mathbb{R}^d$ and the reflection matrix $R$.}
\Ensure{A solution to the Skorokhod problem $y\in \mathbb{R}^d_+$}
\State Set $\epsilon = 10^{-8}$;
\Function{Skorokhod}{$x$}
\State ${y}={x}$;
\State {$u= 0$;}
			\While{Exists ${y}_i<-\epsilon$}
			\State Compute the set $B = \{i: {y}_i<\epsilon$\};
			\State Compute ${L}_B=-R_{B,B}^{-1} {x}_B$;
\State Compute ${y} = {x} + R_{:,B} \times {L}_B$;
\EndWhile
\State { $u_B=L_B$;
\State\Return $y$, $u$.}
\EndFunction
		\end{algorithmic}
\end{algorithm}\makeatletter\renewcommand{\ALG@name}{Algorithm} 
\makeatother
Our computational method seeks a neural network parameter combination $%
(w_{1},w_{2})$ that minimizes an approximation of the loss defined via (\ref%
{dis:loss:l}). Specifically, we first simulate multiple discretized paths of
the reference RBM $\tilde{Z}$ with the boundary pushing process $\tilde{Y}$,
restricted to a fixed and finite time domain $[0,T]$. To do that, we sample
discretized paths of the underlying Brownian motion $W$, and then solve a
discretized Skorohod problem for each path of $W$ (this is the purpose of
Subroutine \ref{algo:Skorokhod}) to obtain the corresponding path of $\{ 
\tilde{Z},\tilde{Y}\} $. Thereafter, our method computes a discretized
version of the loss (\ref{dis:loss:l}), summing over sampled paths to
approximate the expectation and over discrete time steps to approximate the
integral over $[0,T]$, and minimizes it using stochastic gradient descent;
see Algorithm \ref{algo:dis}. {  The discretization inevitably
introduces bias into our method. While we do not provide a formal
convergence proof as the partition becomes finer, we direct readers to %
\citet{han2020convergence} for a rigorous analysis in a similar context.
Furthermore, our numerical examples indicate that the impact of
discretization on the learned control is small.}

In Subroutine \ref{algo:Skorokhod}, given the index set $B$, $R_{B, B}$ is
the submatrix derived by deleting the rows and columns of $R$ with indices
in $\{1,\ldots,d \}$\textbackslash$B$. Similarly, $R_{:,B}$ is the matrix
that one arrives at by deleting the columns of $R$ whose indices are in the
set $\{1,\ldots,d \}$\textbackslash$B$. {  One has considerable
latitude in choosing $\tilde{\theta}$, i.e., the reference policy, provided
it can explore the state space sufficiently. For our numerical examples, we
set $\tilde{\theta} = 1$.}

\begin{algorithm}[!ht]
\caption{Method for the discounted control case}
\label{algo:dis}
\begin{algorithmic}[1]
\Require{The number of iteration steps $M$, a batch size $B$, a learning rate $\alpha$, a time horizon $T$, a discretization step-size $h$ (for simplicity, we assume $N\triangleq T/h$ is an integer),   a starting point $z$, and an optimization solver (SGD, ADAM, RMSProp, etc).}
\Ensure{A neural network approximation of the value function $ {V}_{w_1}$ and the gradient function $G_{w_2}$.}
\State Initialize the neural networks $ {V}_{w_1}$ and  $G_{w_2}$; set $z_0^{(i)}=z$ for $i=1,2,...,B$.
\For{$k \gets 0$ to  $M-1$ }
\State Simulate $B$ discretized RBM paths and the Brownian increments $\{\tilde{Z}^{(i)},\Delta \tilde{Y}^{(i)},\delta^{(i)}\}$ with a time horizon $T$ and a discretization step-size $h$ starting from $\tilde{Z}^{(i)}(0)=z_k^{(i)}$ by invoking \Call{Discretize}{$T,h,z_k^{(i)}$} for $i=1,2,...,B$.  
\State Compute the empirical loss 
\begin{eqnarray}
	\hat{\ell}(w_1,w_2) &=&\frac{1}{B}\sum_{i=1}^{B}\left( e^{-rT}V_{w_1}(%
	\tilde{Z}^{(i)}(T))-V_{w_1}(\tilde{Z}^{(i)}(0))  +\sum_{j=0}^{N-1}e^{-rhj}\kappa \cdot \Delta\tilde{Y}^{(i)}(hj)      \right.  \\
	&&\left.  -\sum_{j=0}^{N-1}e^{-rhj}G_{w_2}(\tilde{Z}^{(i)}(hj)) \cdot 
	\delta_j^{(i)}+\sum_{j=0}^{N-1}e^{-rhj}F(\tilde{Z}^{(i)}(hj),G_{w_2}(\tilde{Z}^{(i)}(hj)))%
	h\right) ^{2} .  \notag
\end{eqnarray}
\State Compute the gradient $\partial{\hat{\ell}(w_1,w_2)}/\partial w_1,\partial{\hat{\ell}(w_1,w_2)}/\partial w_2 $ and update $w_1,w_2$ using the chosen optimization solver.
\State Update $z_{k+1}^{(i)}$ as the end point of the path $\tilde{Z}^{(i)}$: $z_{k+1}^{(i)}\gets \tilde{Z}^{(i)}(T)$.
\EndFor
\State \textbf{return}  Functions $ V_{w_1} (\cdot)$ and  $G_{w_2} (\cdot)$.

\end{algorithmic}
\end{algorithm}

After the parameter values $w_1$ and $w_2$ have been determined, our
proposed policy is as follows: 
\begin{equation}
\theta _{w_{2}}(z)=\arg \max_{\theta \in \Theta }\left\{ \theta \cdot
G_{w_{2}}(z)-c(z,\theta )\right\}, \ z\in \mathbb{R}_+^d .
\label{eqn:optimal_policy_G}
\end{equation}

\begin{remark}
One can also consider the policy using $\nabla V_{w_{1}}(\cdot )$ instead of 
$G_{w_{2}}(\cdot ).$ That is, 
\begin{equation}
\arg \max_{\theta \in \Theta }\left\{ \theta \cdot \nabla
V_{w_{1}}(z)-c(z,\theta )\right\} , \ z\in \mathbb{R}_+^d .
\label{eqn:optimal_policy_V}
\end{equation}%
However, our numerical experiments suggest that this policy is inferior to (%
\ref{eqn:optimal_policy_G}).
\end{remark}

\subsection{Ergodic control}

\label{sec:ergodic:algo} We parametrize $v(\cdot )$ and $\nabla v(\cdot )$
using deep neural networks $v_{w_{1}}(\cdot )$ and $g_{w_{2}}(\cdot )$ with
parameters $w_{1}$ and $w_{2}$, respectively, and then use Equation (\ref%
{eqn:thm:VG_erg}) to define an auxiliary loss function 
\begin{eqnarray}
\tilde{\ell}(w_{1},w_{2},\xi ) &=&\mathbb{E}\left[ \left( v_{w_{1}}(\tilde{Z}%
(T))-v_{w_{1}}(\tilde{Z}(0))+\int_{0}^{T}\kappa \cdot \mathrm{d}\tilde{Y}%
(t)\right. \right. \\
&&\left. \left. -\int_{0}^{T}g_{w_{2}}(\tilde{Z}(t))\mathrm{d}W(t)-T\xi
+\int_{0}^{T}F\left( \tilde{Z}(t),g_{w_{2}}\left( \tilde{Z}(t)\right)
\right) \mathrm{d}t\right) ^{2}\right] .  \notag
\end{eqnarray}

Then,    {we define} the loss function $\ell
(w_{1},w_{2})=\min_{\xi }\ell (w_{1},w_{2},\xi )$. 
   {Letting
$X$ denote a random variable with a finite second moment, we note that} 
\begin{align*}
\mathrm{Var}\left( X\right) = \min_{\xi }\mathbb{E[}\left( X-\xi \right) ^{2}%
\mathbb{]}.
\end{align*}
   {Thus,} we arrive at the following expression for the loss
function 
\begin{align}
&\ell (w_{1},w_{2})  \label{loss:ergodic} \\
=&\mathrm{Var}\left( v_{w_{1}}(\tilde{Z}(T))-v_{w_{1}}(\tilde{Z}%
(0))+\int_{0}^{T}\kappa \cdot \mathrm{d}\tilde{Y}(t)-\int_{0}^{T}g_{w_{2}}(%
\tilde{Z}(t))\mathrm{d}W(t)+\int_{0}^{T}F\left( \tilde{Z}(t),g_{w_{2}}\left( 
\tilde{Z}(t)\right) \right) \mathrm{d}t)\right) .  \notag
\end{align}
We present our method for the ergodic control case formally in Algorithm \ref%
{algo:ergo}. 
\begin{algorithm}[ht]
	\caption{Method for the ergodic control case}
	\label{algo:ergo}
	\begin{algorithmic}[1]
		\Require{The number of iteration steps $M$, a batch size $B$, a learning rate $\alpha$, a time horizon $T$, a discretization step-size $h$ (for simplicity, we assume $N\triangleq T/h$ is an integer),   a starting point $z$, and an optimization solver (SGD, ADAM, RMSProp, etc).}
		\Ensure{A neural network approximation of the value function $ {v}_{w_1}$ and the gradient function $g_{w_2}$.}
		\State Initialize the neural networks $ {v}_{w_1}$ and  $g_{w_2}$; set $z_0^{(i)}=z$ for $i=1,2,...,B$.
		\For{$k \gets 0$ to  $M-1$ }
		\State Simulate $B$ discretized RBM paths and the Brownian increments $\{\tilde{Z}^{(i)},\Delta \tilde{Y}^{(i)},\delta^{(i)}\}$ with a time horizon $T$ and a discretization step-size $h$ starting from $\tilde{Z}^{(i)}(0)=z_k^{(i)}$ by invoking \Call{Discretize}{$T,h,z_k^{(i)}$}, for $i=1,2,...,B$.   
		\State Compute the empirical loss 
	\begin{eqnarray}
		\hat{\ell}(w_1,w_2) &=&\widehat{\mathrm{Var}}\left( v_{w_1}(\tilde{Z}%
		^{(i)}(T))-v_{w_1}(\tilde{Z}^{(i)}(0))-\sum_{j=0}^{N-1}g_{w_2}(\tilde{Z}%
		^{(i)}(hj)) \cdot \delta^{(i)}_j \right.  \\
		&&\left. +\sum_{j=0}^{N-1}\kappa \cdot \Delta\tilde{Y}^{(i)}(hj)  +\sum_{j=0}^{N-1} f\left( \tilde{Z}^{(i)}(hj),g_{w_2}\left( \tilde{Z}%
		^{(i)}(hj)\right) \right) h\right) .  \notag
	\end{eqnarray}%
		\State Compute the gradient $\partial{\hat{\ell}(w_1,w_2)}/\partial w_1,\partial{\hat{\ell}(w_1,w_2)}/\partial w_2 $ and update $w_1,w_2$ using the chosen optimization solver.
		\State Update $z_{k+1}^{(i)}$ as the end point of the path $\tilde{Z}^{(i)}$: $z_{k+1}^{(i)}\gets \tilde{Z}^{(i)}(T)$.
		\EndFor
		\State \textbf{return}  Functions $ v_{w_1} (\cdot)$ and  $g_{w_2} (\cdot)$.
		
	\end{algorithmic}
\end{algorithm}

After the parameters values $w_1$ and $w_2$ have been determined, our
proposed policy is the following: 
\begin{equation*}
\bar{\theta}_{w_{2}}(z)=\arg \max_{\theta \in {\Theta }}\left( \theta \cdot
g_{w_{2}}(z)-c(z,\theta )\right) , \ z\in \mathbb{R}_+^d.
\end{equation*}

\section{Three families of test problems}

\label{sec:example}

Here we specify three families of test problems for which numerical results
will be presented later (see Section \ref{sec:numerical}). Each family
consists of RBM drift control problems indexed by $d = 1, 2, \ldots,$ where $%
d$ is the dimension of the orthant that serves as the problem's state space.
The first of the three problem families, specified in Section \ref%
{subsec:linearcost:feedforwardnetwork}, is characterized by a feed-forward
network structure and linear cost of control. Recapitulating earlier work by %
\citet{ata2006dynamic}, Section \ref{subsec:formal-HT-limit} explains the
interpretation of such problems as ``heavy traffic'' limits of input control
problems for certain feed-forward queueing networks.

Our second family of test problems is identical to the first one except that
now the cost of control is quadratic rather than linear. The exact meaning
of that phrase will be spelled out in Section \ref{sec:pricing-example},
where we also explain the interpretation of such problems as heavy traffic
limits of dynamic pricing problems for queueing networks. In Section \ref%
{sec:benchmark}, we describe two parametric families of policies with
special structure that will be used later for comparison purposes in our
numerical study. Finally, Section \ref{sec:deco_example} specifies our third
family of test problems, which have a separable structure that allows them
to be solved exactly by analytical means. Such problems are of obvious value
for evaluating the accuracy of our computational method. %
   {For all our test problems, the penalty rates associated
with pushing at the boundary are set to zero. That is, $\kappa = 0$.}

\subsection{Main example with linear cost of control}

\label{subsec:linearcost:feedforwardnetwork} We consider a family of test
problems with parameters $K=0,1,\ldots$, attaching to each such problem the
index $d$ (mnemonic for \textit{dimension}) $=K+1$. Problem $d$ has state
space $\mathbb{R}_+^d$ and the $d \times d$ reflection matrix 
\begin{equation}
R=\left[ 
\begin{array}{cccc}
1 &  &  &  \\ 
-p_{1} & 1 &  &  \\ 
\vdots &  & \ddots &  \\ 
-p_{K} &  &  & 1%
\end{array}%
\right] ,  \label{eqn:defn:reflection-matrix}
\end{equation}%
where $p_1, \ldots, p_K > 0$ and $p_1 + \cdots + p_K =1$. Also, the set of
drift vectors available in each state is 
\begin{align}
\Theta =\prod\limits_{k=0}^{K}\left[ \underline{\theta }_{k},\overline{%
\theta }_{k}\right] .  \label{eqn:defn:action:space}
\end{align}
where the lower limit $\underline{\theta}_k$ and upper limit $\overline{%
\theta}_k$ are as specified in Section \ref{subsec:formal-HT-limit} below.
Similarly, the $d \times d$ covariance matrix $A$ for problem $d$ is as
specified in Section \ref{subsec:formal-HT-limit}. Finally, the cost
function for problem $d$ has the linear form 
\begin{equation}
c(z,\theta )=h^{\top }z+c^{\top }\theta \ \text{ where } \ h, \, c \in 
\mathbb{R}_+^d.  \label{eqn:defn:linear:cost}
\end{equation}
That is, the cost rate $c(Z(t),u(Z(t)))$ that the system manager incurs
under policy $u$ at time $t$ is linear in both the state vector $Z(t)$ and
the chosen drift rate $u(Z(t))$.

In either the discounted control setting or the ergodic control setting,
inspection of the HJB equation displayed earlier in Section \ref{sec:problem}
shows that, given this linear cost structure, there exists an optimal policy 
$u^*(\cdot)$ such that 
\begin{align}
\text{either } \ u_k^*(z) = \underline{\theta}_k \ \text{ or } \ u_k^*(z) = 
\overline{\theta}_k  \label{eqn:bang-bang-policy}
\end{align}
for each state $z \ \in \mathbb{R}_+^{K+1}$ and each component $%
k=0,1,\ldots,K$.

In the next section we explain how drift control problems of the form
specified here arise as heavy traffic limits in queueing theory. Strictly
speaking, however, that interpretation of the test problems is inessential
to the main subject of this paper: the computational results presented in
Section \ref{sec:numerical} can be read without reference to the queueing
theoretic interpretations of our test problems.

\subsection{Interpretation as heavy traffic limits of queueing network
control problems}

\label{subsec:formal-HT-limit}

Let us consider the feed-forward queueing network model of a make-to-order
production system portrayed in Figure \ref{fig:exp1}. There are $d=K+1$
buffers, represented by the open-ended rectangles, indexed by $k=0,1,\ldots
,K.$ Each buffer has a dedicated server, represented by the circles in
Figure \ref{fig:exp1}. Arriving jobs wait in their designated buffer if the
server is busy. There are two types of jobs arriving to the system: regular
versus thin streams. Thin stream jobs have the same service time
distributions as the regular jobs, but they differ from the regular jobs in
two important ways: First, thin stream jobs can be turned away upon arrival.
That is, a system manager can exercise admission control in this manner, but
in contrast, she must admit all regular jobs arriving to the system. Second,
the volume of thin stream jobs is smaller than that of the regular jobs; see
Assumption \ref{assumption:heavy-traffic}. 
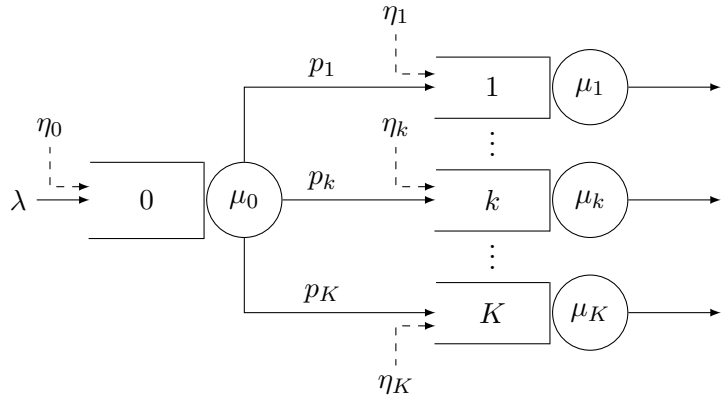
\begin{figure}[th]
\centering
\begin{tikzpicture}[start chain=going right,>=latex,node distance=1pt]
		\node[three sided,minimum width=1.5cm,minimum height = 1cm,on chain] (wa) {$0$};

		\node[draw,circle,on chain,minimum size=1cm] (se) {$\mu_0$};
		
		%
		\node[three sided,minimum width=1.5cm,minimum height =0.8cm,on chain]  (wa2)[right =of 
		se,xshift=2cm,yshift=1.5cm]  
		{$1$};
		
		\node[draw,circle,on chain,minimum size=1cm] (se2) {$\mu_1$};

		\node[three sided,minimum width=1.5cm,minimum height = 0.8cm,on chain]  (wamiddle)[right =of 
		se,xshift=2cm]  {$k$};
		
		\node[draw,circle,on chain,minimum size=1cm] (semiddle) {$\mu_k$};

		\node[three sided,minimum width=1.5cm,minimum height =0.8cm,on chain]  (wa3)[right =of 
		se,xshift=2cm,yshift=-1.5cm]  
		{$K$};

		\node[draw,circle,on chain,minimum size=1cm] (se3) {$\mu_K$};
		
		\draw[->] (se) |-   node[above , at end, xshift = -1.5cm] {$p_{1}$} (wa2.west);
		\draw[->](se) |-  node[above, at end, xshift = -1.5cm] {$p_{K}$} (wa3.west);
		\draw[->](se)  edge node[above , at end, xshift = -1.5cm] {$p_{k}$} (wamiddle.west);
		
		\draw[<-,dashed] (wa2.west) +(0pt,5pt) -|  +(-15pt,20pt)  node[above , at end = -0.95cm] {$\eta_1$};
		\draw[<-,dashed] (wamiddle.west) +(0pt,5pt) -|  +(-15pt,20pt)  node[above , at end = -0.95cm] {$\eta_k$};
		\draw[<-,dashed] (wa3.west) +(0pt,-5pt) -|  +(-15pt,-20pt)  node[below , at end = 0.95cm] {$\eta_K$};

		\draw[->] (se2.east) -- node[above] {}  +(35pt,0);
		\draw[->] (semiddle.east) -- node[above] {}  +(35pt,0);
		\draw[->] (se3.east) --node[above] {}   +(35pt,0);
		
		\draw[<-,dashed] (wa.west) +(0pt,5pt) -|  +(-15pt,20pt)  node[above , at end = -0.95cm] {$\eta_0$};
		
		\draw[<-] (wa.west) -- +(-20pt,0) node[left] {$\lambda$};
		\path (wa2.south) -- node[auto=false,yshift = 0.1cm]{\vdots} (wamiddle.north);
		\path (wamiddle.south) -- node[auto=false,yshift = 0.1cm]{\vdots} (wa3.north);
	\end{tikzpicture}
\caption{A feedforward queueing network with thin arrival streams.}
\label{fig:exp1}
\end{figure}

Regular jobs enter the systems only through buffer zero, as shown by the
solid arrow pointing to buffer zero in Figure \ref{fig:exp1}. A renewal
process $E=\{E(t):t\geq 0\}$ models the cumulative number of regular jobs
arriving to the system over time. We let $\lambda $ denote the arrival rate
and $a^{2}$ denote the squared coefficient of variation of the interarrival
times for the regular jobs. The thin stream jobs arrive to buffer $k$ (as
shown by the dashed arrows in Figure \ref{fig:exp1}) according to the
renewal process $A_{k}=\{A_{k}(t):t\geq 0\}$ for $k=0,1,\ldots ,K$. We let $%
\eta _{k}$ denote the arrival rate and $b_{k}^{2}$ denote the squared
coefficient of variation of the interarrival times for renewal process $%
A_{k} $.

Jobs in buffer $k$ have i.i.d. general service time distributions with mean $%
m_{k}$ and squared coefficient of variation $s_{k}^{2}\geq 0, \ k=0,1,\ldots
,K$; $\mu _{k}=1/m_{k}$ is the corresponding service rate. We let $S_k =
\{S_k(t): t \geq 0 \} $ denote the renewal process associated with the
service completions by server $k$ for $k=1, \ldots, K$. To be specific, $%
S_k(t)$ denotes the number of jobs server $k$ processes by time $t$ if it
incurs no idleness during $[0,t]$. The jobs in each buffer are served on a
first-come-first-served (FCFS) basis, and servers work continuously unless
their buffer is empty. After receiving service, jobs in buffer zero join
buffer $k$ with probability $p_{k}, \ k=1,2,\ldots ,K$, independently of
other events. This probabilistic routing structure is captured by a
vector-valued process $\Phi (\cdot )$ where $\Phi_k (\ell )$ denotes the
total number of jobs routed to buffer $k$ among the first $\ell $ jobs
served by server zero for $k=1,\ldots ,K$ and $\ell \geq 1$. We let $%
p=(p_{k})$ denote the $K$-dimensional vector of routing probabilities. Jobs
in buffers $1,\ldots ,K$ leave the system upon receiving service.

As stated earlier, the system manager makes admission control decisions for
thin stream jobs. Turning away a thin stream job arriving to buffer $k$
(externally) results in a penalty of $c_k$. For mathematical convenience, we
model admission control decisions as if the system manager can simply ``turn
off'' each of the thin stream arrival processes as desired. In particular,
we let $\Delta_k(t)$ denote the cumulative amount of time that the
(external) thin stream input to buffer $k$ is turned off during the interval 
$[0,t]$. Thus, the vector-valued process $\Delta=(\Delta_k)$ represents the
admission control policy. Similarly, we let $T_k(t)$ denote the cumulative
amount of time server $k$ is busy during the time interval $[0,t]$, and $%
I_k(t)=t-T_k(t)$ denotes the cumulative amount of idleness that server $K$
incurs during $[0,t]$.

Letting $Q_{k}(t)$ denote the number of jobs in buffer $k$ at time $t$, the
vector-valued process $Q=(Q_{k})$ will be called the queue-length process.
Given a control $\Delta =(\Delta _{k})$, assuming $Q(0)=0$, it follows that%
\begin{eqnarray}
Q_{0}(t) &=&E(t)+A_{0}(t-\Delta _{0}(t))-S_{0}(T_{0}(t))\geq 0,\text{ } \
t\geq 0,  \label{eqn:newtork:Q0} \\
Q_{k}(t) &=&A_{k}(t-\Delta _{k}(t))+\Phi _{k}\left( S_{0}(T_{0}(t))\right)
-S_{k}(T_{k}(t))\geq 0,\text{ } \ t\geq 0,\text{ } \ k=1,\ldots ,K.
\label{eqn:newtork:Qk}
\end{eqnarray}%
Moreover, the following must hold: 
\begin{eqnarray}
&&I(\cdot )\text{ is continuous and nondecreasing with }I(0)=0,
\label{eqn:network:I_cont} \\
&&I_{k}(\cdot )\text{ only increases at those times $t$ when }Q_{k}(t)=0, \
\ k=0,1,\ldots ,K,  \label{eqn:network:I_k} \\
&&\Delta _{k}(t)-\Delta _{k}(s)\leq t-s, \ \ 0\leq s\leq t<\infty, \ \
k=0,1,\ldots ,K.  \label{eqn:network:Delta} \\
&&I,\Delta \text{ are non-anticipating.}  \label{eqn:network:non-anti}
\end{eqnarray}%
The system manager also incurs a holding cost at rate $h_{k}$ per job in
buffer $k$ per unit of time. We use the processes $\xi =\left\{ \xi
(t),t\geq 0\right\} $ as a proxy for the cumulative cost under a given
admission control policy $\Delta \left( \cdot \right) ,$ where 
\begin{equation*}
\xi (t)=\sum_{k=0}^{K}c_{k}\eta _{k}\Delta
_{k}(t)+\sum_{k=0}^{K}\int_{0}^{t}h_{k}Q_{k}(s) \, \mathrm{d}s,\ \ t\geq 0.
\end{equation*}%
This is an approximation of the realized cost because the first term on the
right-hand side replaces the admission control penalties actually incurred
with their means.

In order to derive the approximating Brownian control problem, we consider a
sequence of systems indexed by a system parameter $n=1,2,\ldots ;$ we attach
a superscript of $n$ to various quantities of interest. Following the
approach used by \citet{ata2006dynamic}, we assume that the sequence of
systems satisfies the following heavy traffic assumption.

\begin{assumption}
\label{assumption:heavy-traffic} For $n\geq 1,$ we have that 
\begin{equation*}
\lambda ^{n}=n\lambda ,\eta _{k}^{n}=\eta _{k}\sqrt{n}\text{ and }\mu
_{k}^{n}=n\mu _{k}+\sqrt{n}\beta _{k}, \ \ k=0,1,\ldots ,K,
\end{equation*}%
where $\lambda ,$ $\mu _{k},\eta _{k}$ and $\beta _{k}$ are nonnegative
constants. Moreover, we assume that 
\begin{equation*}
\lambda =\mu _{0}=\frac{\mu _{k}}{p_{k}} \ \text{ for } \ k=1,\ldots ,K.
\end{equation*}
\end{assumption}

One starts the approximation procedure by defining suitably centered and
scaled processes. For $n\geq 1,$ we define%
\begin{alignat*}{3}
\hat{E}^{n}(t) &=\frac{E^{n}(t)-\lambda ^{n}t}{\sqrt{n}} \quad & \text{ and }%
&\quad\hat{\Phi}^{n}(q)=\frac{\Phi \left( \lbrack nq]\right) -p([nq])}{\sqrt{%
n}},\text{ } \ t\geq 0,\text{ } \ q\geq 0, &  \\
\hat{A}_{k}^{n}(t) &=\frac{A_k^{n}(t)-\eta_{k}^nt}{\sqrt{n}} & \text{ and }%
&\quad\hat{S}_{k}^{n}(t)=\frac{\text{ }S_{k}^{n}(t)-\mu _{k}^{n}t}{\sqrt{n}}%
,\ t\geq 0,\ k=0,1,\ldots ,K, &  \\
\hat{Q}^{n}(t) &=\frac{Q^{n}(t)}{\sqrt{n}} & \text{ and }&\quad\hat{\xi}%
^{n}(t)=\frac{\xi ^{n}(t)}{\sqrt{n}}, \ t\geq 0. & 
\end{alignat*}%
In what follows, we assume 
\begin{equation}
T_{k}^{n}(t)=t-\frac{1}{\sqrt{n}}I_{k}(t)+o\left( \frac{1}{\sqrt{n}}\right)
, \ t\geq 0,\ k=0,1,\ldots ,K,  \label{eqn:network:T}
\end{equation}%
where $I_{k}(\cdot )$ is the limiting idleness process for server $k$; see %
\citet{harrison1988brownian} for an intuitive justification of (\ref%
{eqn:network:T}).

Then, defining 
\begin{eqnarray*}
\chi _{0}^{n}(t) &=&\hat{E}^{n}(t)+\hat{A}_{0}^{n}(t-\Delta _{0}(t))-\hat{S}%
_{0}^{n}(T_{0}^{n}(t)),\text{ }t\geq 0, \\
\chi _{k}^{n}(t) &=&\hat{A}_{k}^{n}(t-\Delta _{k}(t))+\hat{\Phi}^n
_{k}\left( \frac{1}{n}\hat{S}_{0}^{n}(T_{0}^{n}(t))\right) +p_{k}\hat{S}%
_{0}^{n}(T_{0}^{n}(t)) \\
&& \quad -\hat{S}_{k}^{n}(T_{k}^{n}(t)),\text{ }t\geq 0,\text{ }k=1,2,\ldots
,K,
\end{eqnarray*}%
and using Equations (\ref{eqn:newtork:Q0}) - (\ref{eqn:newtork:Qk}) and (\ref%
{eqn:network:T}), it is straightforward to derive the following for $t\geq 0$
and $k=1,\ldots ,K:$%
\begin{eqnarray}
\hat{Q}_{0}^{n}(t) &=&\chi _{0}^{n}(t)+\left( \eta _{0}-\beta _{0}\right)
t-\eta _{0}\Delta _{0}(t)+\mu _{0}I_{0}(t)+o(1),
\label{eqn:network:Q_0_tilde} \\
\hat{Q}_{k}^{n}(t) &=&\chi _{k}^{n}(t)+\left( \eta _{k}+p_{k}\beta
_{0}-\beta _{k}\right) t-\eta _{k}\Delta _{k}(t)+\mu _{k}I_{k}(t)-p_{k}\mu
_{0}I_{0}(t)+o(1).  \label{eqn:network:Q_k_tilde}
\end{eqnarray}%
Moreover, it follows from Equation (\ref{eqn:network:Delta}) that $\Delta
_{k}(t)$ is absolutely continuous. We denote its density by $\delta
_{k}(\cdot ),i.e.,$%
\begin{equation*}
\Delta _{k}(t)=\int_{0}^{t}\delta _{k}(s)\mathrm{d}s, \ t\geq 0,\text{ }%
k=0,1,\ldots ,K,
\end{equation*}%
where $\delta _{k}(t)\in \lbrack 0,1].$\ Using this, we write 
\begin{equation}
\hat{\xi}^{n}(t)=\sum_{k=0}^{K}\int_{0}^{t}c_{k}\eta _{k}\delta _{k}(s)%
\mathrm{d}s+\sum_{k=0}^{K}\int_{0}^{t}h_{k}\hat{Q}_{k}^{n}(s)\mathrm{d}%
s,t\geq 0.  \label{eqn:network:xi}
\end{equation}%
Then passing to the limit formally as $n\rightarrow \infty ,$ and denoting
the weak limit of $( \hat{Q}^{n},\hat{X}^{n},\hat{\xi}^{n}) $ by $\left(
Z,\chi ,\xi \right) ,$ where $\chi $ is a $(K+1)$-dimensional driftless
Brownian motion with covariance matrix (see Appendix \ref{appendix:var} for
its derivation) 
\begin{equation*}
A=\mu _{0}\left[ 
\begin{array}{ccccc}
s_{0}^{2}+a^{2} & -p_{1}s_{0}^{2} & \cdots & \cdots & -p_{K}s_{0}^{2} \\ 
-p_{1}s_{0}^{2} & p_{1}(1-p_{1})+p_{1}^{2} s_{0}^{2}+p_{1} s_{1}^{2} & 
p_{1}p_{2}\left( s_{0}^{2}-1\right) & \cdots & p_{1}p_{K}\left(
s_{0}^{2}-1\right) \\ 
\vdots & p_{1}p_{2}\left( s_{0}^{2}-1\right) & \ddots &  & \vdots \\ 
\vdots & \vdots &  & \ddots & p_{K-1}p_{K}\left( s_{0}^{2}-1\right) \\ 
-p_{K}s_{0}^{2} & p_{1}p_{K}\left( s_{0}^{2}-1\right) & \cdots & \cdots & 
p_{K}(1-p_{K})+p_{K}^{2}s_{0}^{2}+p_Ks_{K}^2%
\end{array}%
\right] ,
\end{equation*}%
we deduce from (\ref{eqn:network:Q_0_tilde}) - (\ref{eqn:network:Q_k_tilde})
and (\ref{eqn:network:xi}) that%
\begin{eqnarray*}
Z_{0}(t) &=&\chi _{0}(t)+(\eta _{0}-\beta _{0})t-\int_{0}^{t}\eta _{0}\delta
_{0}(s)\mathrm{d}s+\mu _{0}I_{0}(t), \\
Z_{k}(t) &=&\chi _{k}(t)+\left( \eta _{k}+p_{k}\beta _{0}-\beta _{k}\right)
t-\int_{0}^{t}\eta _{k}\delta _{k}(s)\mathrm{d}s+\mu _{k}I_{k}(t) - p_k
\mu_0 I_0(t),\ k=1,\ldots ,K, \\
\xi (t) &=&\sum_{k=0}^{K}\int_{0}^{t}c_{k}\eta _{k}\delta _{k}(s)\mathrm{d}%
s+\sum_{k=0}^{K}\int_{0}^{t}h_{k}Z_{k}(s)\mathrm{d}s.
\end{eqnarray*}%
In order to streamline the notation, we make the following change of
variables: 
\begin{eqnarray*}
Y_{k}(t) &=&\mu _{k}I_{k}(t),\text{ }k=0,1, \ldots ,K, \\
\theta _{0}(t) &=&\eta _{0}\delta _{0}(t)-(\eta _{0}-\beta _{0}),\text{ }%
t\geq 0, \\
\theta _{k}(t) &=&\eta _{k}\delta _{k}(t)-\left( \eta _{k}+p_{k}\beta
_{0}-\beta _{k}\right)
\end{eqnarray*}%
and let 
\begin{eqnarray*}
\underline{\theta }_{0} &=&\beta _{0} \ \text{ and } \ \overline{\theta }%
_{0}=\beta _{0}-\eta _{0}, \\
\underline{\theta }_{k} &=&\beta _{k}-p_{k}\beta _{0} \ \text{ and } \ 
\overline{\theta }_{k}=\beta _{k}-\eta _{k}-p_{k}\beta _{0},\ k=1,\ldots ,K.
\end{eqnarray*}%
Lastly, we define the set of negative drift vectors available to the system
manager as in Equation (\ref{eqn:defn:action:space}). As a result, we arrive
at the following Brownian system model: 
\begin{eqnarray}
Z_{0}(t) &=&\chi _{0}(t)-\int_{0}^{t}\theta _{0}(s)\mathrm{d}s+Y_{0}(t),\ \
t\geq 0 ,  \label{eqn:network:Z_0} \\
Z_{k}(t) &=&\chi _{k}(t)-\int_{0}^{t}\theta _{k}(s)\mathrm{d}s + Y_{k}(t) -
p_{k}Y_{0}(t), \ \ k=1,\ldots ,K,  \label{eqn:network:Z_k}
\end{eqnarray}%
which can be written as in Equation (\ref{RBM:Z}) with $d=K+1,$ where the
reflection matrix $R$ is given by Equation (\ref{eqn:defn:reflection-matrix}%
). Moreover, the processes $Y,Z$ inherit properties in Equation (\ref%
{eqn:newtork:Q0}) - (\ref{eqn:network:Z_0}) from their pre-limit
counterparts in the queueing model, cf. Equations (\ref{eqn:network:I_cont})
- (\ref{eqn:network:I_k}).

To minimize technical complexity, we restrict attention to stationary Markov
control policies as done in Section \ref{sec:problem}. That is, $\theta
(t)=u(Z(t))$ for $t\geq 0$ for some policy function $u:\mathbb{R}%
_{+}^{d}\rightarrow \Theta .$ Then, defining $c=\left( c_{0},c_{1},\ldots
,c_{K}\right) ^{\top },~h=\left( h_{0},h_{1},\ldots ,h_{K}\right) ^{\top }$
and%
\begin{equation*}
c(z,\theta )=h^{\top }z+c^{\top }\theta ,
\end{equation*}%
as in Equation (\ref{eqn:defn:linear:cost}), the cumulative cost incurred
over the time interval $[0,t]$ under policy $u$ can be written as in
Equation (\ref{def:Cost}). Note that $C^{u}(t)$ and $\xi (t)$ differ only by
a term that is independent of the control. Given $C^{u}(t),$ one can
formulate the discounted control problem as done in Section \ref%
{sec:disc:form}. Similarly, the ergodic control problem can be formulated as
done in Section \ref{sec:ergodic:form}.

\textbf{Interpreting the solution of the drift control problem in the
context of the queueing network formulation.} Because the instantaneous cost
rate $c(z,\theta )$ is linear in the control, inspection of the HJB equation
reveals that the optimal control is of bang-bang nature. That is, $\theta
_{k}(t)\in \{\underline{\theta }_{k},\overline{\theta }_{k}\}$ for all $k,t$
as stated in Equation (\ref{eqn:bang-bang-policy}). This can be interpreted
in the context of the queueing network displayed in Figure \ref{fig:exp1} as
follows: For $k=0,1,\ldots ,K$, whenever $\theta _{k}(t)=\overline{\theta }%
_{k}$, the system manager turns away the thin stream jobs arriving to buffer 
$k$ externally, i.e., she shuts off the renewal process $A_{k}(\cdot )$ at
time $t$. Otherwise, she admits them to the system. Of course, the optimal
policy is determined by the gradient $\nabla V(z)$ of the value function
through the HJB equation, which we solve for using the method described in
Section \ref{sec:han}.

\subsection{Related example with quadratic cost of control}

\label{sec:pricing-example}

\citet{ccelik2008dynamic} and \citet{ata2023approximate} advance
formulations where a system manager controls the arrival rate of customers
to a queueing system by exercising dynamic pricing. One can follow a similar
approach for the feed-forward queueing networks displayed in Figure \ref%
{fig:exp1} with suitable modifications, e.g., the dashed arrows also
correspond to arrivals of regular jobs. This ultimately results in a problem
of drift control for RBM with the cost of control 
\begin{equation}
c(\theta ,z)=\sum_{k=0}^{K}\alpha _{k}(\theta _{k}-\underline{\theta }%
_{k})^{2}+\sum_{k=0}^{K}h_{k}z_{k},  \label{eqn:quadratic:control-cost}
\end{equation}%
where $\underline{\theta }$ is the drift rate vector corresponding to a
nominal price vector.

\subsection{Two parametric families of benchmark policies}

\label{sec:benchmark} Recall the optimal policy can be characterized as 
\begin{equation}
u^{\ast }(z)=\arg \max_{\theta \in \Theta }\left\{ \theta \cdot \nabla V(z)
- c(z,\theta )\right\} , \ z\in \mathbb{R}_{+}^{d}.
\label{eqn:optimal-policy:linear-cost}
\end{equation}

\textbf{The benchmark policy for the main test problem.} In our main test
problem (see Section \ref{subsec:linearcost:feedforwardnetwork}), we have $%
c(z,\theta )=h^{\top }z+c^{\top }\theta$. Therefore, it follows from (\ref%
{eqn:optimal-policy:linear-cost}) that for $k=0,1, \dots,K$, 
\begin{equation*}
u_{k}^{\ast }(z)=\left\{ 
\begin{array}{c}
\overline{\theta }_{k} \\ 
\underline{\theta }_{k}%
\end{array}%
\right. 
\begin{array}{l}
\text{ if }\left( \nabla V(z)\right) _{k}\geq c_{k}, \\ 
\text{ otherwise.}%
\end{array}%
\end{equation*}%
Namely, the optimal policy is of bang-bang type. Therefore, we consider the
following linear-boundary policies as our benchmark polices: For $%
k=0,1,\ldots,K$, 
\begin{equation*}
u_{k}^{\mathrm{lbp}}(z)=\left\{ 
\begin{array}{c}
\overline{\theta }_{k} \\ 
\underline{\theta }_{k}%
\end{array}%
\right. 
\begin{array}{l}
\text{ if }\beta _{k}^{\top }z\geq c_{k}, \\ 
\text{ otherwise,}%
\end{array}%
\end{equation*}%
where $\beta _{0},\beta _{1}\ldots ,\beta _{K} \in \mathbb{R}^{K+1}$ are
vectors of policy parameters to be tuned.

In our numerical study, we primarily focus attention on the symmetric case
where 
\begin{align*}
& h_0 > h_1 = \cdots = h_K, \\
& c_0 = c_1 = \cdots = c_K, \\
& p_1 = \cdots =p_K = \frac{1}{K}, \\
& \underline{\theta}_1 = \cdots = \underline{\theta}_K, \\
& \overline{\theta}_1 = \cdots = \overline{\theta}_K.
\end{align*}
   {The symmetry allows us to limit the number of parameters
needed for the benchmark policy. To be more specific, due} to this symmetry,
the downstream buffers look identical. As such, we restrict attention to
parameter vectors of the following form: 
\begin{eqnarray*}
\beta _{0} &=&\left( \phi _{1},\phi _{2},\ldots ,\phi _{2}\right) ,\text{ and%
} \\
\beta _{i} &=&(\phi _{3},\phi _{4},\ldots \phi _{4},\phi _{5},\phi
_{4},\ldots ,\phi _{4})\text{ where }\phi _{5}\text{ is the }i+1^\text{st} 
\text{ element of } \beta_i \text{ for }i=1,\ldots ,K.
\end{eqnarray*}%
The parameter vector $\beta_0$, which is used to determine the benchmark
policy for buffer zero, has two distinct parameters: $\phi_1$ and $\phi_2$.
In considering the policy for buffer zero, $\phi_1$ captures the effect of
its own queue length, whereas $\phi_2$ captures the effects of the
downstream buffers $1,\ldots,K$. We use a common parameter for the
downstream buffers because they look identical from the perspective of
buffer zero. Similarly, the parameter vector $\beta_i \ (i=1,\ldots,K)$ has
three distinct parameters: $\phi_3, \, \phi_4$ and $\phi_5$, where $\phi_3$
is used as the multiplier for buffer zero (the upstream buffer), $\phi_5$ is
used to capture the effect of buffer $i$ itself and $\phi_4$ is used for all
other downstream buffers. Note that all $\beta_i$ use the same three
parameters $\phi_3, \, \phi_4$ and $\phi_5$ for $i=1,\ldots,K$. They only
differ with respect to the position of $\phi_5$, i.e., it is in the $i+1^%
\text{st}$ position for $\beta_i$.

In summary, the benchmark policy uses five distinct parameters in the
symmetric case. This allows us to do a brute-force search via simulation on
a five-dimensional grid regardless of the number of buffers.

\textbf{The benchmark policy for the test problem with the quadratic cost of
control.} In this case, substituting Equation (\ref%
{eqn:quadratic:control-cost}) into Equation (\ref%
{eqn:optimal-policy:linear-cost}) gives the following characterization of
the optimal policy: 
\begin{equation}
u_{k}^{\ast }(z)= \underline{\theta }_{k} + \frac{(\nabla V(z))_k}{2\alpha_k}%
, \ k=0,1,\ldots, K.  \label{eqn:optimal:policy:quadratic:cost}
\end{equation}%
Namely, the optimal policy is affine in the gradient. Therefore, we consider
the following affine-rate policies as our benchmark polices: For $%
k=0,1,\ldots,K$, 
\begin{equation*}
u_{k}^{\mathrm{arp}}(z)= \underline{\theta }_{k} + \beta _{k}^{\top }z,
\end{equation*}%
where $\beta _{0},\beta _{1}\ldots ,\beta _{K} \in \mathbb{R}^{K+1}$ are
vectors of policy parameters to be tuned. We truncate this at the upper
bound $\overline{\theta}_k$ if needed.

We focus attention on the symmetric case for this problem formulation too.
To be specific, we assume 
\begin{align*}
& h_0 > h_1 = \ldots = h_K, \\
& \alpha_0 = \alpha_1 = \ldots = \alpha_K, \\
& p_1 = \ldots =p_K = \frac{1}{K}, \\
& \underline{\theta}_1 = \ldots = \underline{\theta}_K, \\
& \overline{\theta}_1 = \ldots = \overline{\theta}_K.
\end{align*}
Due to this symmetry, the downstream buffers look identical. As such, we
restrict attention to parameter vectors of the following form: 
\begin{eqnarray*}
\beta _{0} &=&\left( \phi _{1},\phi _{2},\ldots ,\phi _{2}\right) ,\text{ and%
} \\
\beta _{i} &=&(\phi _{3},\phi _{4},\ldots \phi _{4},\phi _{5},\phi
_{4},\ldots ,\phi _{4})\text{ where }\phi _{5}\text{ is the }i+1^\text{st} 
\text{element for } i=1,\ldots ,K.
\end{eqnarray*}%
As done for the first benchmark policy above, this particular form of the
parameter vectors can be justified using the symmetry as well.

\subsection{Parallel-server test problems}

\label{sec:deco_example}

In this section, we consider a problem whose solution can be derived
analytically by considering a one-dimensional problem. To be specific, we
consider the parallel-server network that consists of $K$ identical
single-server queues as displayed in Figure \ref{fig:parallel}. Clearly,
this network can be decomposed into $K$ separate single-server queues,
leading to K separate one-dimensional problem formulations, which can be
solved analytically, see Appendix \ref{appendix:decomposable:test-problem}
for details. For this example we have that $R=I_{d\times d}$ and $A
=I_{d\times d}$. In addition, we assume that the action space $\Theta$ and
the cost function $c(z,\theta)$ are the same as above. 
\begin{figure}[th]
\centering	
\begin{tikzpicture}[start chain=going right,>=latex,node distance=1pt]
		\node[three sided,minimum width=1.2cm,minimum height =0.8cm,on chain]  (wa2)  {$1$};
		
		\node[draw,circle,on chain,minimum size=1cm] (se2) {$\mu$};
		
		\draw[<-] (wa2.west) -- +(-20pt,0) node[left] {$\lambda$}; 
		\draw[->] (se2.east) -- node[above] {}  +(25pt,0);
		
		\node[three sided,minimum width=1.2cm,minimum height =0.8cm,on chain]  (wa22) [below = of wa2,yshift = -1cm] {$K$};
		
		\node[draw,circle,on chain,minimum size=1cm] (se22) {$\mu$};
		
		\draw[<-] (wa22.west) -- +(-20pt,0) node[left] {$\lambda$}; 
		\draw[->] (se22.east) -- node[above] {}  +(25pt,0);
		\path (wa2.south) -- node[auto=false,yshift = 0.1cm]{\vdots} (wa22.north);
	\end{tikzpicture}
\caption{A decomposable parallel-server queueing network.}
\label{fig:parallel}
\end{figure}
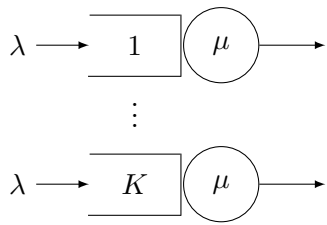

\section{Computational results}

\label{sec:numerical}

For the test problems introduced in Section \ref{sec:example}, we now
compare the performance of policies derived using our method (see Section %
\ref{sec:han}) with the best benchmark we could find. The results show that
our method performs well, and it remains computationally feasible up to at
least dimension $d = 30$. We implement our method using three-layer or
four-layer neural networks with the elu activation function %
\citep{rasamoelina2020review} in Tensorflow 2 \cite{abadi2016tensorflow},
and using code adapted from that of \citet{han2018solving} and %
\citet{zhou2021actorcode}; see Appendix \ref{appendix:implementation:details}
for further details of our implementation.\footnote{%
Our code is available in \url{https://github.com/nian-si/RBMSolver}.} {%
  In our numerical experiments, we observed that the performance
of our method is robust to most hyperparameters. For tuning our algorithm,
the activation function proved to be the most critical element. The 'elu'
activation function resulted in superior performance %
\citet{rasamoelina2020review}. Additionally, we found that decaying the
learning rate to 0.0001 helped achieve good performance.}

For our main test problem with linear cost of control (introduced previously
in Section \ref{subsec:linearcost:feedforwardnetwork}), and also for its
variant with quadratic cost of control (Section \ref{sec:pricing-example}),
the following parameter values are assumed: $h_0=2$, $h_k =1.9$ for $%
k=1,\ldots,K$, $c_k=1$ for $k=0,\ldots,K$, and $p_k =1/K$ for $k=1,\ldots,K$%
. Also, the reflection matrix $R$ and the covariance matrix $A$ for those
families of problems are as follows: 
\begin{equation*}
R=\left[ 
\begin{array}{cccc}
1 &  &  &  \\ 
-1/K & 1 &  &  \\ 
\vdots &  & \ddots &  \\ 
-1/K &  &  & 1%
\end{array}%
\right] \text{ and } A =\left[ 
\begin{array}{ccccc}
1 & 0 & \cdots & \cdots & 0 \\ 
0 & 1 & -\frac{1}{K^{2}} & \cdots & -\frac{1}{K^{2}} \\ 
\vdots & -\frac{1}{K^{2}} & \ddots &  & \vdots \\ 
\vdots & \vdots &  & \ddots & -\frac{1}{K^{2}} \\ 
0 & -\frac{1}{K^{2}} & \cdots & \cdots & 1%
\end{array}%
\right] .
\end{equation*}%
   {We also consider a variation of our main test problem that
has asymmetric routing probabilites.}

As stated previously in Section \ref{sec:deco_example}, the reflection
matrix and covariance matrix for our parallel-server test problems are $R =
I_{d \times d}$ and $A = I_{d \times d}$. Problems in that third class have $%
K=d$ buffers indexed by $k=1,\ldots,K$, and we set $h_1 = 2$ and $h_k = 1.9$
for $k=2,\ldots,K$.

\subsection{Main test problem with linear cost of control}

\label{section:main:test:problem:results}

For our main test problem with linear cost of control (Section \ref%
{subsec:linearcost:feedforwardnetwork}), we take 
\begin{align*}
\underline{\theta}_k = 0 \text{ and } \overline{\theta}_k=b \in \{2, 10 \} 
\text{ for all } k.
\end{align*}
Also, interest rates $r = 0.1$ and $r=0.01$ will be considered in the
discounted case.

To begin, let us consider the simple case where $K=0$ (that is, there are no
downstream buffers in the queueing network interpretation of the problem)
and hence $d=1$. In this case, one can solve the HJB equation analytically;
see Appendix \ref{appendix:decomposable:test-problem} for details. For the
discounted formulation with $r=0.1$, Figure \ref{fig:d=1}~compares the
derivative of the value function computed using the analytical solution,
shown in blue, with the neural network approximation for it that we computed
using our method, shown in red. (The comparisons for $r = 0.01$ and the
ergodic control case are similar.) 
\begin{figure}[!ht]
\centering
\subfigure[$b=2$]{
		\includegraphics[width=3.in]{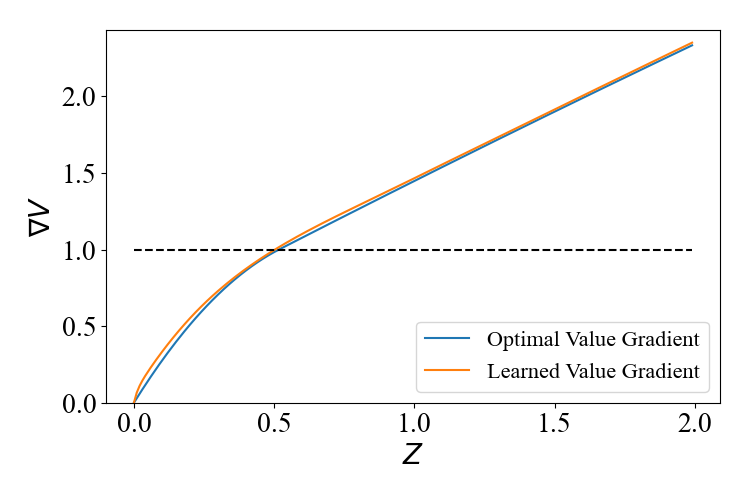}} 
\subfigure[$b=10$]{
		\includegraphics[width=3.in]{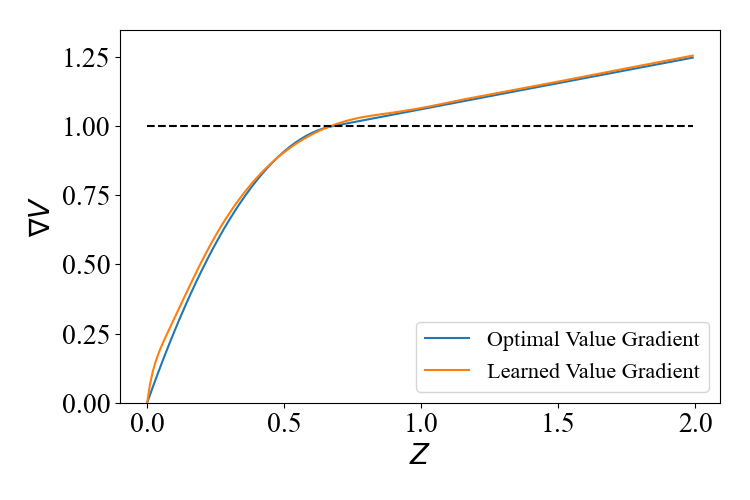}}
\caption{Comparison between the derivative $G_w(\cdot)$ learned from neural
networks and the derivative of the optimal value function for the case of $%
d=1$ and $r=0.1$. The dotted line indicates the cost $c_0=1$. When the value
function gradient is above this dotted line, the optimal control is $\protect%
\theta = b$, and otherwise it is $\protect\theta = 0$.}
\label{fig:d=1}
\end{figure}

Combining Figure \ref{fig:d=1} with Equation (\ref%
{eqn:optimal-policy:linear-cost}), one sees that the policy derived using
our method is close to the optimal policy. Table \ref{tab:d=1} reports the
simulated performance with standard errors of these two policies based on
four million sample paths and using the same discretization of time as in
our computational method. Specifically, we report the long-run average cost
under each policy in the ergodic control case, and report the simulated
value $V(0)$ in the discounted case. To repeat, the benchmark policy in this
case is the optimal policy determined analytically, but not accounting for
the discretization of the time scale. Of course, all the performance figures
reported in the table are subject to simulation errors. Finally, it is worth
noting that our method took less than one hour to compute its policy
recommendations using a 10-CPU core computer. 
\begin{table}[htbp]
\caption{Performance comparison of our proposed policy with the benchmark
policy in the one-dimensional case ($K=0$).}
\label{tab:d=1}\vskip 0.3cm \centering
\begin{tabular}{llccc}
\toprule &  & Ergodic & $r = 0.01$ & $r = 0.1$ \\ 
\midrule \multirow{2}[0]{*}{$b = 2$} & Our policy & 1.455 $\pm$ 0.0006 & 
145.3 $\pm$ 0.05 & 14.29 $\pm$ 0.004 \\ 
& Benchmark & 1.456 $\pm$ 0.0006 & 145.3 $\pm$ 0.05 & 14.29 $\pm$ 0.004 \\ 
\midrule \multirow{2}[0]{*}{$b = 10$} & Our policy & 1.375 $\pm$ 0.0007 & 
137.2 $\pm$ 0.06 & 13.56 $\pm$ 0.005 \\ 
& Benchmark & 1.374 $\pm$ 0.0007 & 137.2 $\pm$ 0.06 & 13.56 $\pm$ 0.005 \\ 
\bottomrule &  &  &  & 
\end{tabular}%
\end{table}

\begin{table}[!htb]
\caption{Performance comparison of our proposed policy with the benchmark
policy in the two-dimensional case ($K=1$).}
\label{tab:d=2}\vskip 0.3cm \centering
\begin{tabular}{llccc}
\toprule &  & Ergodic & $r = 0.01$ & $r = 0.1$ \\ 
\midrule \multirow{2}[0]{*}{$b = 2$} & Our policy & 2.471 $\pm$ 0.0008 & 
246.6 $\pm$ 0.08 & 24.28 $\pm$ 0.006 \\ 
& Benchmark & 2.473 $\pm$ 0.0008 & 246.8 $\pm$ 0.08 & 24.29 $\pm$ 0.006 \\ 
\midrule \multirow{2}[0]{*}{$b = 10$} & Our policy & 2.338 $\pm$ 0.0009 & 
233.3 $\pm$ 0.09 & 23.10 $\pm$ 0.006 \\ 
& Benchmark & 2.338 $\pm$ 0.0009 & 233.6 $\pm$ 0.09 & 23.10 $\pm$ 0.006 \\ 
\bottomrule &  &  &  & 
\end{tabular}%
\end{table}

Let us consider now the two-dimensional case ($K=1$), where the optimal
policy is unknown. Therefore, we compare our method with the best benchmark
we could find: the linear boundary policy described in Section \ref%
{sec:benchmark}. In the two-dimensional case, the linear boundary policy
reduces to the following: 
\begin{equation*}
\theta _{0}(z)=b\mathbb{I}\left\{ \beta _{0}^{\top }z\geq 1\right\} \text{
and } \, \theta _{1}(z)=b\mathbb{I}\left\{ \beta _{1}^{\top }z\geq 1\right\}
.
\end{equation*}%
Through simulation, we perform a brute-force search to identify the best
values of $\beta _{0}$ and $\beta _{1}$. The policies for $b=2$ and $b=10$
are shown in Figures \ref{fig:d=2b=2} and \ref{fig:d=2b=10}, respectively,
for the discounted control case with $r=0.1$. Our proposed policy sets the
drift to $b$ in the red region and to zero in the blue region, whereas the
best-linear boundary policy is represented by the white-dotted line. That
is, the benchmark policy sets the drift to $b$ in the region above and to
the right of the dotted line, and sets it to zero below and left of the
line. Table \ref{tab:d=2} presents the costs with standard errors of the
benchmark policy and our proposed policy obtained in a simulation study. The
two policies have similar performance. Our method takes about one hour to
compute policy recommendations using a 10-CPU core computer. 
\begin{figure}[!ht]
\centering
\subfigure[Server 0]{
		\includegraphics[width=3.in]{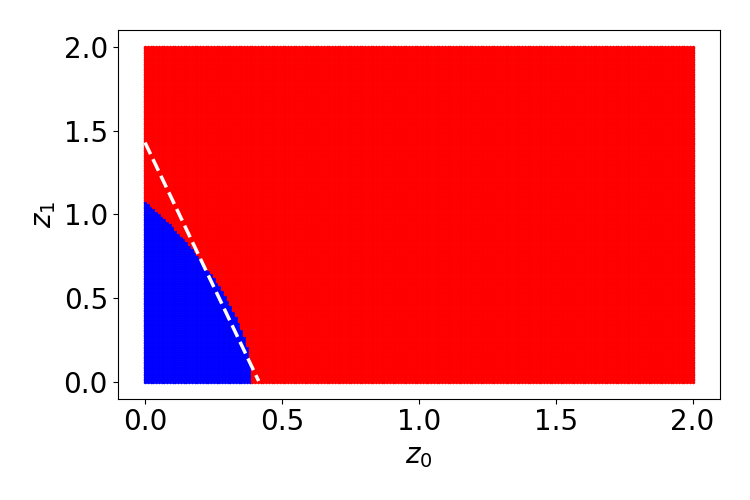}}
\subfigure[Server 1]{
		\includegraphics[width=3.in]{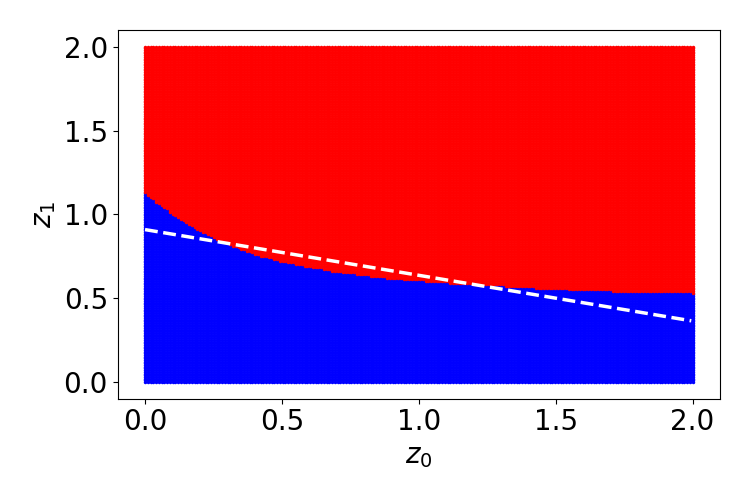}}
\caption{Graphical representation of the policy learned from neural networks
and the benchmark policy for the case $b=2,d=2$ and $r=0.1$}
\label{fig:d=2b=2}
\end{figure}

\begin{figure}[]
\centering
\subfigure[Server 0]{
		\includegraphics[width=3.in]{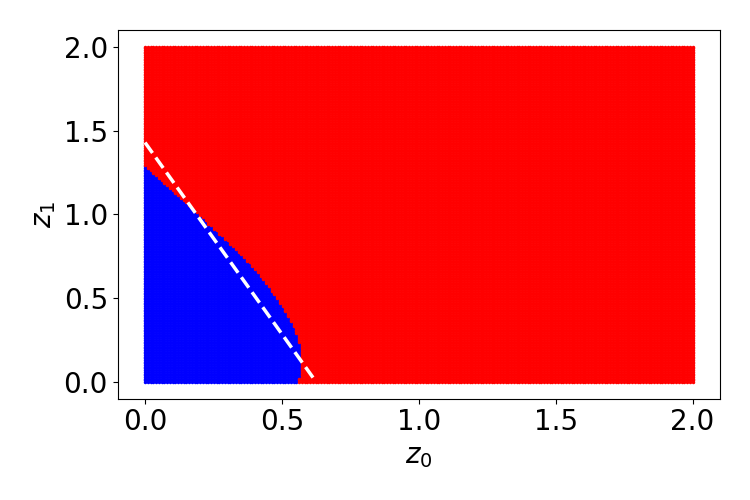}}
\subfigure[Server 1]{
		\includegraphics[width=3.in]{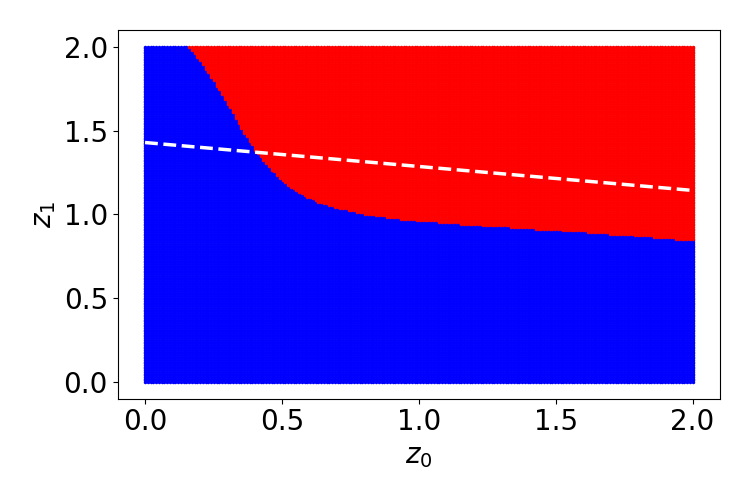}}
\caption{Graphical representation of the policy learned from neural networks
and the benchmark policy for the case $b=10,d=2$ and $r=0.1$}
\label{fig:d=2b=10}
\end{figure}

We then consider the six-dimensional case ($K=5$), where the linear boundary
policy reduces to 
\begin{equation*}
\theta _{i}(z)=b\mathbb{I}\left\{ \beta _{i}^{\top }z\geq 1\right\} \text{
for }i=0,1,2,\ldots ,5.
\end{equation*}%
Although there appear to be 36 parameters to be tuned, recall that we
reduced the number of parameters to five in Section \ref{sec:benchmark} by
exploiting symmetry. This makes the brute-force search computationally
feasible. Table \ref{tab:d=6} compares the performance with standard errors
of our proposed policies with the benchmark policies. They have similar
performance. In this case, the running time for our method is several hours
using a 10-CPU computer. 
\begin{table}[]
\caption{Performance comparison of our proposed policy with the benchmark
policy in the six-dimensional case $d=6$ ($K=5$).}
\label{tab:d=6}\centering
\begin{tabular}{llccc}
\toprule &  & Ergodic & $r = 0.01$ & $r = 0.1$ \\ 
\midrule \multirow{2}[0]{*}{$b = 2$} & Our policy & 7.927 $\pm$ 0.001 & 
791.0 $\pm$ 0.1 & 77.83 $\pm$ 0.01 \\ 
& Benchmark & 7.927 $\pm$ 0.001 & 791.3 $\pm$ 0.1 & 77.83 $\pm$ 0.01 \\ 
\midrule \multirow{2}[0]{*}{$b = 10$} & Our policy & 7.565 $\pm$ 0.0016 & 
754.8 $\pm$ 0.15 & 74.61 $\pm$ 0.01 \\ 
& Benchmark & 7.525 $\pm$ 0.0016 & 751.7 $\pm$ 0.15 & 74.32 $\pm$ 0.01 \\ 
\bottomrule &  &  &  & 
\end{tabular}%
\end{table}
{  To illustrate the scalability of our approach, we next consider
the twenty-one-dimensional case ($K=20$), where $p_1 = \cdots = p_{20} =
0.05 $. Due to symmetry, we can perform a brute-force search for the best
linear boundary policy as done earlier. The results, presented in Table \ref%
{tab:d=21}, demonstrate that our method's performance is comparable to that
of the best benchmark. The run-time for our method is about one day in this
case using a 20-CPU core computer.} 
\begin{table}[!h]
\caption{Performance comparison of our proposed policy with the benchmark
policy in the 21-dimensional case $d=21$ ($K=20$).}
\label{tab:d=21}\centering
\vskip 0.3cm 
\begin{tabular}{llccc}
\toprule &  & Ergodic & $r = 0.01$ & $r = 0.1$ \\ 
\midrule \multirow{2}[0]{*}{$b = 2$} & Our policy & 29.12 $\pm$ 0.0027 & 
2907 $\pm$ 0.26 & 285.8 $\pm$ 0.02 \\ 
& Benchmark & 29.12 $\pm$ 0.0027 & 2907 $\pm$ 0.26 & 285.8 $\pm$ 0.02 \\ 
\midrule \multirow{2}[0]{*}{$b = 10$} & Our policy & 27.78 $\pm$ 0.0031 & 
2773 $\pm$ 0.30 & 273.9 $\pm$ 0.02 \\ 
& Benchmark & 27.60 $\pm$ 0.0029 & 2756 $\pm$ 0.28 & 272.1 $\pm$ 0.02 \\ 
\bottomrule &  &  &  & 
\end{tabular}%
\end{table}

{To further demonstrate the effectiveness of our approach, we consider a six-dimensional test problem with asymmetric routing probabilities. Specifically, for the example shown in Figure \ref{fig:exp1}, we set
\begin{eqnarray*}
p_1=p_2 = 0.3, \ p_3 = 0.2, \ p_4 =p_5 =0.1.
\end{eqnarray*}
All other problem parameters remain the same as in the earlier six-dimensional symmetric test problem; see Appendix \ref{app:heuristic} for its reflection matrix $R$ and the covariance matrix A. However, for the asymmetric problem, tuning its 36 parameters for the linear boundary policy becomes computationally prohibitive. Therefore, we propose an alternative approach in Appendix \ref{app:heuristic} to identify an effective boundary policy. Table \ref{tab:d=6asy} presents the performance, along with standard errors, of our proposed policies compared to benchmark policies. The two policies exhibit similar performance, and the run-time of our method is comparable to that of the earlier symmetric six-dimensional example.
} 
\begin{table}[!h]
\caption{Performance comparison of our proposed policy with the benchmark
policy in the six-dimensional case $d=6$ ($K=5$) with asymmetric routing
probabilities.}
\label{tab:d=6asy}\centering
\begin{tabular}{llccc}
\toprule &  & Ergodic & $r = 0.01$ & $r = 0.1$ \\ 
\midrule \multirow{2}[0]{*}{$b = 2$} & Our policy & 7.938 $\pm$ 0.0013 & 
792.5 $\pm$ 0.13 & 77.98 $\pm$ 0.01 \\ 
& Benchmark & 7.948 $\pm$ 0.0013 & 793.6 $\pm$ 0.13 & 78.11 $\pm$ 0.01 \\ 
\midrule \multirow{2}[0]{*}{$b = 10$} & Our policy & 7.590 $\pm$ 0.0015 & 
757.4 $\pm$ 0.15 & 74.80 $\pm$ 0.01 \\ 
& Benchmark & 7.547 $\pm$ 0.0015 & 753.6 $\pm$ 0.15 & 74.53 $\pm$ 0.01 \\ 
\bottomrule &  &  &  & 
\end{tabular}%
\end{table}

\subsection{Test problems with quadratic cost of control}

\label{section:results:quadratic:cost}

In this section, we consider the test problem introduced in Section \ref%
{sec:pricing-example}, for which we set $\alpha_k =1$ and $\underline{\theta}%
_k =1$ for all $k$. As in the previous treatment of our main test example,
we report results for the cases of $d=1,2,6 $ in Tables \ref{tab:d=1:dp}, %
\ref{tab:d=2:dp}, \ref{tab:d=6:dp}, respectively, where the benchmark
policies are the affine-rate policies discussed in Section \ref%
{sec:benchmark}, with policy parameters optimized via simulation %
   {through a} brute-force search. We observe that our proposed
policies outperform the best affine-rate policies by very small margins in
all cases. 
\begin{table}[!htb]
\caption{Performance comparison of our proposed policy with the benchmark
policy in the case of quadratic cost of control and $d=1$.}
\label{tab:d=1:dp}\centering
\vskip 0.3cm 
\begin{tabular}{llccc}
\toprule & Ergodic & $r = 0.01$ & $r = 0.1$ &  \\ 
\midrule Our policy & 0.757 $\pm$ 0.0004 & 75.53 $\pm$ 0.03 & 7.415 $\pm$
0.003 &  \\ 
Benchmark & 0.758 $\pm$ 0.0004 & 75.67 $\pm$ 0.03 & 7.427 $\pm$ 0.003 &  \\ 
\bottomrule &  &  &  & 
\end{tabular}%
\end{table}
\begin{table}[!htb]
\caption{Performance comparison of our proposed policy with the benchmark
policy in the case of quadratic cost of control and $d=2$ $(K=1)$}
\label{tab:d=2:dp}\vskip 0.3cm \centering
\begin{tabular}{lccc}
\toprule & Ergodic & $r = 0.01$ & $r = 0.1$ \\ 
\midrule Our policy & 1.216 $\pm$ 0.0005 & 121.3 $\pm$ 0.04 & 11.94 $\pm$
0.003 \\ 
Benchmark & 1.219 $\pm$ 0.0005 & 121.7 $\pm$ 0.05 & 11.96 $\pm$ 0.003 \\ 
\bottomrule &  &  & 
\end{tabular}%
\end{table}
\begin{table}[!htb]
\caption{Performance comparison of our proposed policy with the benchmark
policy in the case of quadratic cost of control and $d=6$ $(K=5)$}
\label{tab:d=6:dp}\vskip 0.3cm \centering
\begin{tabular}{lccc}
\toprule & Ergodic & $r = 0.01$ & $r = 0.1$ \\ 
\midrule Our policy & 3.863 $\pm$ 0.0008 & 385.7 $\pm$ 0.08 & 37.92 $\pm$
0.006 \\ 
Benchmark & 3.874 $\pm$ 0.0008 & 386.9 $\pm$ 0.08 & 38.04 $\pm$ 0.006 \\ 
\bottomrule &  &  & 
\end{tabular}%
\end{table}
In the one-dimensional ergodic control case ($K=0$), we obtain analytical
solutions to the RBM control problem in closed form by solving the HJB
equation directly, which reduces to a first-order ordinary differential
equation in this case; see Appendix \ref{appendix:decomposable:test-problem}
for details. Figure \ref{fig:d=1dp} compares the derivative of the optimal
value function (derived in closed form) with its approximation via neural
networks in the ergodic case. Combining Figure \ref{fig:d=1dp} with Equation
(\ref{eqn:optimal:policy:quadratic:cost}) reveals that our proposed policy
is close to the optimal policy. 
\begin{figure}[!ht]
\centering   %
\includegraphics[width=3.1in]{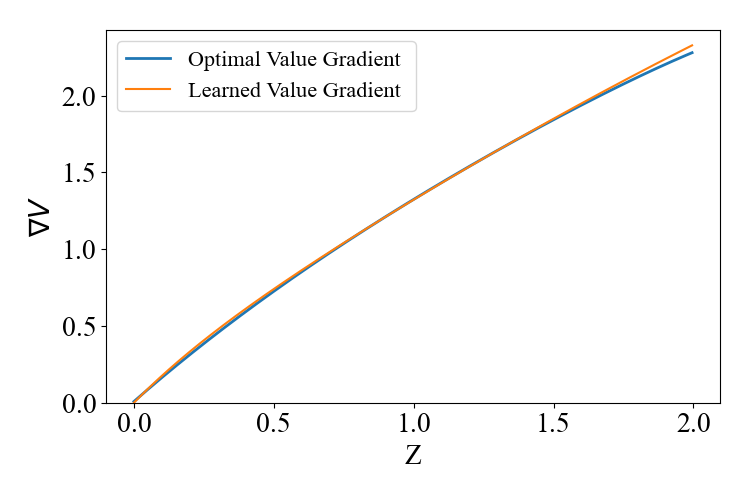}
\caption{Comparison of the gradient approximation $G_w(\cdot)$ learned from
neural networks with the derivative of the optimal value function for the
ergodic control case with quadratic cost of control in the one-dimensional
case ($d=1$).}
\label{fig:d=1dp}
\end{figure}

In the two-dimensional case, our proposed policy is shown in Figure \ref%
{fig:d=2dp} for the ergodic case, with contour lines showing the state
vectors ($z_0, z_1$) for which the policy chooses successively higher drift
rates. The white dotted lines similarly show the states ($z_0,z_1$) for
which our benchmark policy (that is, the best affine-rate policy) chooses
the drift rate $\theta_k =1.5$ (for $k=0$ in the left panel and $k=1$ in the
right panel). 
\begin{figure}[!ht]
\centering
\subfigure[Server 0]{
		\includegraphics[width=3.in]{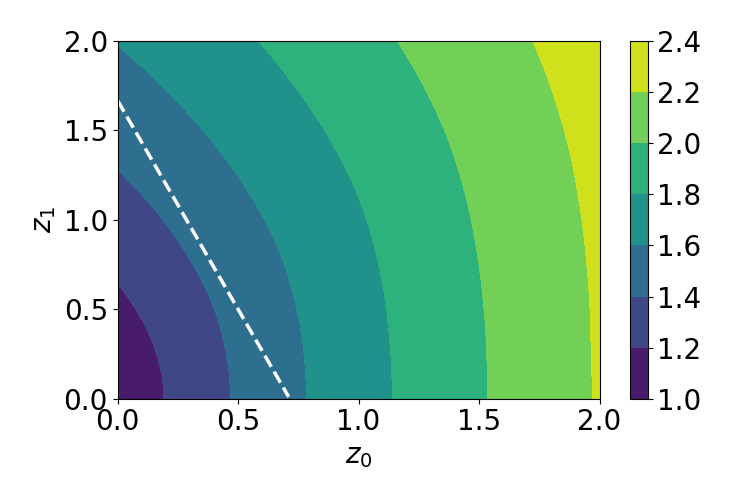}}
\subfigure[Server 1]{
		\includegraphics[width=3.in]{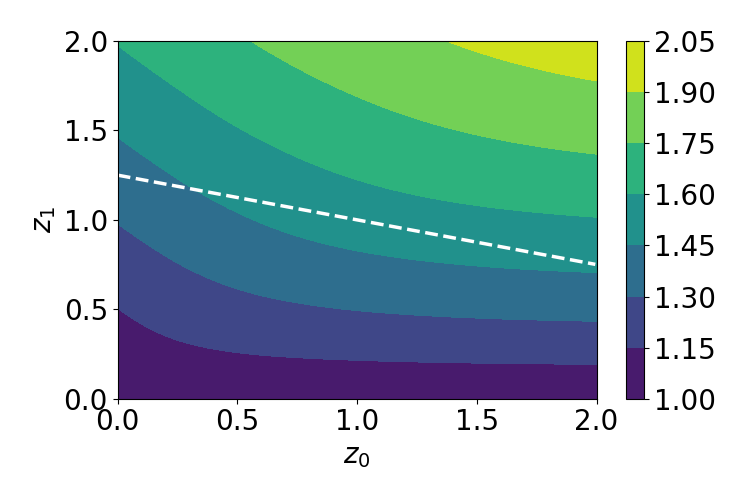}}
\caption{Graphical representation of the policy learned from neural networks
and the benchmark policy for the ergodic case with $d=2$.}
\label{fig:d=2dp}
\end{figure}

\subsection{Parallel-server test problems}

\label{section:results:main:decomposable-test-problem}\label%
{section:results:quadratic:decomposable}

This section focuses on parallel-server test problems (see Section \ref%
{sec:deco_example}) to demonstrate our method's scalability. As illustrated
in Figure \ref{fig:parallel}, the parallel-server networks are essentially $%
K $ independent copies of the one-dimensional case. We present the results
in Table \ref{tab:d=30} for $d=30$ and linear cost of control. When $b=2$,
our policies perform almost equally well as the optimal policy, while for $%
b=10$, our policies perform within 1\% of the optimal policy. The run-time
for our method is about one day in this case using a 20-CPU core computer. 
\begin{table}[tbh]
\caption{Performance comparison between our proposed policy and the
benchmark policy for $30$-dimensional parallel-server test problems with
linear cost of control.}
\label{tab:d=30}\centering
\vskip 0.3cm 
\begin{tabular}{llccc}
\toprule &  & {Ergodic} & {r = 0.01} & {r = 0.1} \\ 
\midrule \multirow{2}[0]{*}{$b = 2$} & Our policy & 42.56 $\pm$ 0.003 & 4247 
$\pm$ 0.3 & 417.3 $\pm$ 0.02 \\ 
& Benchmark & 42.52 $\pm$ 0.003 & 4244 $\pm$ 0.3 & 417.2 $\pm$ 0.02 \\ 
\midrule \multirow{2}[0]{*}{$b = 10$} & Our policy & 40.53 $\pm$ 0.004 & 
4054 $\pm$ 0.4 & 399.4 $\pm$ 0.026 \\ 
& Benchmark & 40.23 $\pm$ 0.004 & 4018 $\pm$ 0.4 & 396.7 $\pm$ 0.024 \\ 
\bottomrule &  &  &  & 
\end{tabular}%
\end{table}

For quadratic cost of control, we are able to solve the test problems up to
at least 100 dimensions. The results for $d=100$ are given in Table \ref%
{tab:d=100:dp}, where the benchmark policies are the best affine-rate
policies (see Section \ref{sec:benchmark}). The performance of our policy is
within 1\% of the benchmark performance. The run-time for our method is
several days in this case using a 30-CPU core computer. 
\begin{table}[ht]
\caption{Performance comparison between our proposed policy and the
benchmark policy for 100-dimensional parallel-server test problems with
quadratic cost of control.}
\label{tab:d=100:dp}\vskip 0.3cm \centering
\begin{tabular}{lccc}
\toprule & Ergodic & $r = 0.01$ & $r = 0.1$ \\ 
\midrule Our policy & 72.74 $\pm$ 0.003 & 7258.3 $\pm$ 0.3 & 712.4 $\pm$ 0.02
\\ 
Benchmark & 72.53 $\pm$ 0.003 & 7237.3 $\pm$ 0.3 & 710.2 $\pm$ 0.02 \\ 
\bottomrule &  &  & 
\end{tabular}%
\end{table}

\section{Concluding remarks}

\label{sec:conclusion} Consider the general drift control problem formulated
in Section \ref{sec:problem}, assuming specifically that the instantaneous
cost rate $c(z,\theta)$ is linear in $\theta$, and further assuming that the
set of available drift vectors is a rectangle $\Theta = [0, b_1] \times
\cdots \times [0,b_d]$. If one relaxes such a problem by letting $b_i
\uparrow \infty$ for one or more $i$, then one obtains what is called a 
\textit{singular} control problem, cf. \citet{kushner-martins1991}. Optimal
policies for such problems typically involve the imposition of \textit{%
endogenous} reflecting barriers, that is, reflecting barriers imposed by the
system controller in order to minimize cost, in addition to exogenous
reflecting barriers that may be imposed to represent physical constraints in
the motivating application.

There are many examples of queueing network control problems whose natural
heavy traffic approximations involve singular control; see, for example, %
\citet{krichagina-taksar1992}, \citet{martins-kushner1990}, and %
\citet{martins-shreve-soner1996}. 
   {In our follow-up paper,
\citet{ata-harrison-si-singular-control2024}, we extend the method developed
in this paper for drift control in a natural way to treat singular control,
and illustrate that extension by means of queueing network applications.}

Separately, the following are three desirable generalizations of the problem
formulations propounded in Section \ref{sec:problem} of this paper. Each of
them is straightforward in principle, and we expect to see these extensions
implemented in future work, perhaps in combination with mild additional
restrictions on problem data. (a) Instead of requiring that the reflection
matrix $R$ have the Minkowski form (\ref{RBM:R}), require only that $R$ be a
completely-$\mathcal{S}$ matrix, which \citet{taylor-williams1993} showed is
a necessary and sufficient condition for an RBM to be well defined. (b)
Allow a more general state space for the controlled process $Z$, such as the
convex polyhedrons characterized by \citet{dai-williams1996}. (c) Remove the
requirement that the action space $\Theta$ be bounded.

Lastly, we have considered in this paper the PDEs that arise in performance analysis and optimal control of RBMs, assuming that those PDEs admit  $C^2$ solutions. \citet{borkar-budhiraja2005} have established the existence and uniqueness of viscosity solutions for such PDEs, extending earlier work by  \citet{dupuis-ishii1991}. To the best of our knowledge, there is no theory currently available concerning existence and uniqueness of classical  $C^2$ solutions, either exact or approximate, and we leave that exploration as a topic for future research.

\newpage 
\begin{appendices}
\section{Proof of Proposition \ref{prop:bounded}}
\label{appendix:proof:prop:bounded}
\begin{proof}
Let $f: \mathbb{R}_+ \rightarrow \mathbb{R}^d$ be right continuous with left limits (rcll). Following \citet{williams1998invariance}, we define the oscillation of $f$ over an interval $[t_1,t_2]$ as follows:
\begin{align}
    Osc(f,[t_1,t_2]) = \sup \left\{ |f(t) - f(s)|: t_1 \leq s < t\leq t_2 \right\},
\end{align}
where $|a| = \max_{i=1.\ldots,d}|a_i|$ for any $a \in \mathbb{R}^d$. Then for two rcll functions $f, g$, the following holds:
\begin{align}
    Osc(f+g) \leq Osc(f) + Osc(g). \label{eqn:aux:oscillation} 
\end{align}
Also recall that the controlled RBM $Z$ satisfies $Z(t) = X(t) + RY(t)$, where
\begin{align}
    X(t) = W(t) - \int_0^t \theta(s)ds, \ t \geq 0.
\end{align}
Then it follows from Theorem 5.1 of \citet{williams1998invariance} that
\begin{align*}
     Osc(Z,[0,t])  & \leq C \, Osc(X,[0,t]) \\
                   & \leq C \, Osc(W,[0,t]) + C \bar{\theta} t
\end{align*}
for some $C>0$, where $\bar{\theta}=\sum_{l=1}^{d}\left( \bar{\theta}_{l}-\underline{\theta }%
_{l}\right) $ and $\underline{\theta }_{l},\bar{\theta}_{l}$ are the minimal
and maximal values on each dimension, and the second inequality follows from (\ref{eqn:aux:oscillation}).

Let $\mathcal{O}=Ocs(W,[0,t])$ and recall that we are interested in bounding the expectation $\mathbb{E}\left[ |Z(t)|^{n}\right]$. To that end, note that 
\begin{eqnarray}
\left\vert Z(t)-Z(0)\right\vert ^{n} &\leq & C^n \left( \mathcal{O}+\bar{\theta}%
t\right) ^{n}  \notag \\
&=& C^{n} \, \sum_{k=0}^{n}\binom{n}{k}\mathcal{O}^{k}\bar{\theta}^{n-k}t^{n-k}.
\label{eqn:prop1:binary}
\end{eqnarray}%
To bound $\mathbb{E}\left[ \mathcal{O}^{k}\right] ,$ note that 
\begin{eqnarray*}
\mathcal{O} &\mathcal{=}&\sup \left\{ |W(t_{2})-W(t_{1})|:0\leq t_{1}<
t_{2}\leq t\right\}  \\
&\leq &\sup \left\{ W(s):0\leq s\leq t\right\} -\inf \left\{ W(s):0\leq
s\leq t\right\}  \\
&\leq &2\sup \left\{ |W(s)|:0\leq s\leq t\right\}  \\
&\leq &2\sup \left\{ \sum_{l=1}^{d}|W_{l}(s)|:0\leq s\leq t\right\}  \\
&\leq &2\sum_{l=1}^{d}\sup \left\{ |W_{l}(s)|:0\leq s\leq t\right\} .
\end{eqnarray*}%
So, by the union bound, we write 
\begin{eqnarray*}
\mathbb{P}\left( \mathcal{O}>x\right)  &\leq &\sum_{l=1}^{d}\mathbb{P}\left(
\sup_{0\leq s\leq t}W_{l}(s)>\frac{x}{2d}\right) +\sum_{l=1}^{d}\mathbb{P}%
\left( \inf_{0\leq s\leq t}W_{l}(s)<-\frac{x}{2d}\right)  \\
&\leq &4\sum_{l=1}^{d}\mathbb{P}\left( W_{l}(t)>\frac{x}{2d}\right) ,
\end{eqnarray*}%
where the last inequality follows from the reflection principle.

Thus, 
\begin{equation*}
\mathbb{E}[\mathcal{O}^{k}]=\int_{0}^{\infty }x^{k-1}\mathbb{P}\left( 
\mathcal{O}>x\right) \mathrm{d}x\leq 4\sum_{l=1}^{d} \int_{0}^{\infty}   x^{k-1}\mathbb{P}\left(
W_{l}(t)>\frac{x}{2d}\right) \mathrm{d}x.
\end{equation*}%
By change of variable $y=x/d,$ we write 
\begin{eqnarray*}
\mathbb{E}[\mathcal{O}^{k}] &\leq &4\sum_{l=1}^{d}\left( 2d\right)
^{k}\int_{0}^{\infty }y^{k-1}\mathbb{P}\left( W_{l}(t)>y\right) \mathrm{d}y
\\
&=&4\sum_{l=1}^{d}\left( 2d\right) ^{k}\mathbb{E}[|W_{l}(t)|^{k}] \\
&=&\frac{4\left( 2d\right) ^{k}2^{k/2}t^{k/2}\Gamma \left( \frac{k+1}{2}%
\right) }{\sqrt{\pi }}\sum_{l=1}^{d}\sigma _{ll}^{k},
\end{eqnarray*}%
where $\Gamma $ is the Gamma function, and the last equality is a well-known
result; see, for example, Equation (12)\ in \citet{winkelbauer2012moments}.
Substituting this into (\ref{eqn:prop1:binary}) gives the following: 
\begin{eqnarray}
\mathbb{E}[|Z(t)-Z(0)|^{n}]  &\leq & C^{n} \sum_{k=0}^{n}\frac{4\left( 2d\right) ^{k}\binom{n}{k}%
2^{k/2}t^{k/2}\Gamma \left( \frac{k+1}{2}\right) }{\sqrt{\pi }}\bar{%
\theta}^{n-k}t^{n-k}\left( \sum_{l=1}^{d}\sigma _{ll}^{k}\right)   \notag \\
&\leq & \tilde{C}_{n} (t^{n}+1).  \label{eqn:prop1:bound2}
\end{eqnarray}
Letting $z=Z(0).$ We write 
\begin{eqnarray*}
|Z(t)|^{n} &=&\left\vert Z(t)-z+z\right\vert ^{n}\leq \left(
|Z(t)-z|+|z|\right) ^{n} \\
&\leq &\sum_{k=0}^{n}\binom{n}{k}\left|Z(t)-z\right| ^{k}|z|^{n-k}.
\end{eqnarray*}%
Using (\ref{eqn:prop1:bound2}), we can therefore write 
\begin{eqnarray*}
\mathbb{E} \left[  \left\vert Z(t)\right\vert ^{n} \right]  &\leq &\sum_{k=0}^{n}\binom{n}{k}%
\tilde{C}_{k}|z|^{n-k}\left( t^{k}+1\right)  \\
&\leq &\hat{C}_{n}(1+t^{n}).
\end{eqnarray*}
\end{proof}

\section{Validity of HJB Equations}
\subsection{Discounted Control}

\label{appendix:verification:discounted}

\begin{proposition}
\label{thm:verification-lemma:discounted-policy} Let $u \in \mathcal{U}$ be
an admissible policy and $V^u$ a $C^2$ solution of the associated PDE (\ref%
{HJB:discount:u:L})-(\ref{HJB:discount:u:D}). If both $V^u$ and its gradient have polynomial growth, then $V^u$ satisfies (\ref%
{problem:discounted}).
\end{proposition}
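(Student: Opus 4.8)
The plan is to apply Itô's formula to the discounted process $e^{-rt}V^u(Z^u(t))$, use the PDE and its boundary condition to recognize the running cost and boundary cost, and then pass to the infinite horizon. This is the same mechanism already used to prove Proposition~\ref{prop:discount:loss} and the first half of Proposition~\ref{thm:doublepara:dis}, specialized here to the policy-specific PDE rather than the HJB equation.

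First I would observe that, because $V^u$ is $C^2$, the identity (\ref{RBM:df}) applies to the controlled process $Z^u$ once the reference drift $\mu$ is replaced by the state-dependent drift $u(\cdot)$. Applying Itô's formula to $e^{-rt}V^u(Z^u(t))$ and integrating over $[0,T]$ then gives
\begin{align*}
e^{-rT}V^u(Z^u(T)) - V^u(z) = {} & \int_0^T e^{-rt}\bigl(\mathcal{A}^u V^u - rV^u\bigr)(Z^u(t))\,\mathrm{d}t + \int_0^T e^{-rt}\nabla V^u(Z^u(t))\cdot \mathrm{d}W(t) \\
& + \int_0^T e^{-rt}\mathcal{D}V^u(Z^u(t))\cdot \mathrm{d}Y^u(t).
\end{align*}
The PDE (\ref{HJB:discount:u:L}) turns the first integrand into $-e^{-rt}c(Z^u(t),u(Z^u(t)))$, while the boundary condition (\ref{HJB:discount:u:D}) together with the complementarity property (\ref{RBM:Y2}) of $(Z^u,Y^u)$ turns the last integral into $-\int_0^T e^{-rt}\kappa\cdot \mathrm{d}Y^u(t)$.

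Next I would take $\mathbb{E}_z$ of both sides. The crucial point is that the stochastic integral $\int_0^T e^{-rt}\nabla V^u(Z^u(t))\cdot \mathrm{d}W(t)$ is a genuine martingale, not merely a local one: since $\nabla V^u$ has polynomial growth and $Z^u$ has bounded drift (as $\Theta$ is bounded), Proposition~\ref{prop:bounded} yields $\mathbb{E}_z\int_0^T e^{-2rt}\lvert\nabla V^u(Z^u(t))\rvert^2\,\mathrm{d}t<\infty$, so its expectation vanishes. Rearranging gives, for every $T>0$,
\begin{equation*}
V^u(z) = \mathbb{E}_z\!\left[\int_0^T e^{-rt}c(Z^u(t),u(Z^u(t)))\,\mathrm{d}t + \int_0^T e^{-rt}\kappa\cdot \mathrm{d}Y^u(t)\right] + e^{-rT}\,\mathbb{E}_z\!\left[V^u(Z^u(T))\right].
\end{equation*}

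Finally I would let $T\to\infty$. Because $V^u$ has polynomial growth and $\mathbb{E}_z[\lvert Z^u(T)\rvert^n]$ grows only polynomially in $T$ by Proposition~\ref{prop:bounded}, the terminal term $e^{-rT}\mathbb{E}_z[V^u(Z^u(T))]$ tends to $0$. For the boundary term, $\kappa\ge 0$, so monotone convergence applies; for the running cost, the polynomial growth of $c$ combined with Proposition~\ref{prop:bounded} supplies the integrable majorant $\mathbb{E}_z\int_0^\infty e^{-rt}\alpha_2\bigl(1+\lvert Z^u(t)\rvert^{\beta_2}\bigr)\,\mathrm{d}t<\infty$, so dominated convergence applies after splitting $c$ into its positive and negative parts. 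Passing to the limit then produces exactly (\ref{problem:discounted}). I expect the only genuinely delicate step to be the justification of the true-martingale (rather than local-martingale) property of the stochastic integral, together with the vanishing of the terminal term; the remaining convergence statements are routine consequences of the polynomial-growth hypotheses and Proposition~\ref{prop:bounded}.
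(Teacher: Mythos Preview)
Your proposal is correct and follows essentially the same route as the paper's own proof: apply It\^o's formula to $e^{-rt}V^u(Z^u(t))$, use the PDE and boundary condition to identify the running and boundary costs, take expectations (killing the stochastic integral via the polynomial-growth hypothesis on $\nabla V^u$ combined with the moment bounds of Proposition~\ref{prop:bounded}), and let $T\to\infty$ using polynomial growth of $V^u$ and $c$. Your treatment of the limit is in fact slightly more careful than the paper's, spelling out monotone convergence for the boundary term and dominated convergence for the running cost.
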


\begin{proof}
Applying Ito's formula to $e^{-rt}\,V^u(Z^u(t))$ and using Equation (\ref{RBM:df}), we write
\begin{align*}
    e^{-rT}\, V^u(Z^u(T)) - V^u(z) = & \int_0^T e^{-rt} \, \left( \mathcal{L}V^u(Z^u(t)) - u(Z^u(t)) \cdot V^u(Z^u(t)) - rV^u(Z^u(t)) \right) dt  \\
    & + \int_0^T e^{-rt} \mathcal{D}\,V^u (Z^u(t)) \cdot dY^u(t) + \int_0^T e^{-rt} \, \nabla V^u(Z^u(t)) \cdot dW(t).
\end{align*}
Then using (\ref{RBM:Y1})-(\ref{RBM:Y2}) and (\ref{HJB:discount:u:L})-(\ref{HJB:discount:u:D}), we arrive at the following:
\begin{align}
    e^{-rT} \, V^u(Z^{u}(T)) - V^u(z) = & - \int_0^T e^{-rt} \, c(Z^u(t), u(Z^u(t))) \, dt \nonumber \\
    &- \int_0^T e^{-rt} \kappa \cdot dY^u(t) + \int_0^T e^{-rt} \, \nabla V^u(Z^u(t)) \cdot dW(t).
    \label{eqn:auc:verification:lemma:discounted:policy}
\end{align}
Because $\nabla V^u$ has polynomial growth and the action space $\Theta$ is bounded, we have that
\begin{eqnarray*}
    \mathbb{E}_z \left[ \int_0^T e^{-rt} \nabla \, V^u(Z^u(t)) \cdot dW(t) \right] = 0;
\end{eqnarray*}
see, for example, Theorem 3.2.1 of \citet{oksendal2013stochastic}. Thus, taking the expectation of both sides of (\ref{eqn:auc:verification:lemma:discounted:policy}) yields
\begin{align*}
    V^u(z)= \mathbb{E}_z \left[ \int_0^T e^{-rt} \, c(Z^u(t),u(Z^u(t))) \, dt \right] 
    + \mathbb{E}_z \left[ \int_0^T e^{-rt} \, \kappa \cdot dY^u(t) \, dt \right]
    + e^{-rT} \, \mathbb{E}_z \left[ V^u(Z^u(T)) \right].
\end{align*}
Because $V^u$ has polynomial growth and $\Theta$ is bounded, the last term on the right-hand side vanishes by as $T \rightarrow \infty$. As mentioned earlier, because $\Theta$ is bounded by assumption, one can easily derive an affine bound for $\mathbb{E}_z \left[ \kappa \cdot Y^u(T)  \right]$ viewed as a function of $T$. Then, because $c$ has polynomial growth and $\Theta$ is bounded, passing to the limit as $T \rightarrow \infty$ completes the proof. 
\end{proof}

\begin{proposition}
\label{thm:verification-lemma:discounted-optimal} If $V$ is a $C^2$ solution of the HJB
equation (\ref{HJB:discount:min:L})-(\ref{HJB:discount:min:D}), and if both $V$ and its gradient have polynomial growth, then $V$ satisfies (\ref{problem;discount:min}).
\end{proposition}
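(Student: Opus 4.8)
The plan is to prove the two inequalities $V(z)\le V^u(z)$ for every policy $u\in\mathcal{U}$ and $V(z)=V^{u^\ast}(z)$ for the policy $u^\ast$ of (\ref{HJB:discount:optimal-policy}); together these give $V(z)=\min_{u\in\mathcal{U}}V^u(z)$, which is (\ref{problem;discount:min}). The argument parallels the proof of Proposition \ref{thm:verification-lemma:discounted-policy}, the only new feature being that the HJB equation for $V$ contains a maximum over $\Theta$, so for a fixed policy it yields an inequality rather than an identity.

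First I would fix an arbitrary $u\in\mathcal{U}$ and apply Ito's formula to $e^{-rt}V(Z^u(t))$ through the identity (\ref{RBM:df}), with the reference drift replaced by the state-dependent drift $u(\cdot)$, to obtain
\begin{align*}
e^{-rT}V(Z^u(T))-V(z)= &\int_0^T e^{-rt}\bigl(\mathcal{L}V(Z^u(t))-u(Z^u(t))\cdot\nabla V(Z^u(t))-rV(Z^u(t))\bigr)\,\mathrm{d}t\\
&+\int_0^T e^{-rt}\mathcal{D}V(Z^u(t))\cdot\mathrm{d}Y^u(t)+\int_0^T e^{-rt}\nabla V(Z^u(t))\cdot\mathrm{d}W(t).
\end{align*}
Next I would use the boundary condition (\ref{HJB:discount:min:D}) together with the complementarity property (\ref{RBM:Y1})--(\ref{RBM:Y2}) of $Y^u$ to replace the $\mathrm{d}Y^u$ integral by $-\int_0^T e^{-rt}\kappa\cdot\mathrm{d}Y^u(t)$, and apply (\ref{HJB:discount:min:L}) in the form of the pointwise inequality $\mathcal{L}V(z)-u(z)\cdot\nabla V(z)-rV(z)\ge -c(z,u(z))$, valid because $u(z)\cdot\nabla V(z)-c(z,u(z))\le\max_{\theta\in\Theta}\{\theta\cdot\nabla V(z)-c(z,\theta)\}$. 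Taking $\mathbb{E}_z$ of both sides — the stochastic integral vanishing because $\nabla V$ has polynomial growth, $\Theta$ is bounded, and Proposition \ref{prop:bounded} controls the moments of $Z^u$ — and rearranging would give
\[
V(z)\le \mathbb{E}_z\Bigl[\int_0^T e^{-rt}c(Z^u(t),u(Z^u(t)))\,\mathrm{d}t+\int_0^T e^{-rt}\kappa\cdot\mathrm{d}Y^u(t)\Bigr]+e^{-rT}\mathbb{E}_z[V(Z^u(T))].
\]
Letting $T\to\infty$, the last term vanishes by the polynomial growth of $V$, Proposition \ref{prop:bounded}, and $r>0$; the boundary term converges because of the affine bound on $\mathbb{E}_z[\kappa\cdot Y^u(T)]$ noted in Section \ref{sec:disc:form}; and the running-cost term converges by monotone/dominated convergence since $c$ has polynomial growth. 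This yields $V(z)\le V^u(z)$, hence $V(z)\le\inf_{u\in\mathcal{U}}V^u(z)$.

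To get the matching bound I would repeat the computation verbatim with $u$ replaced by $u^\ast$ from (\ref{HJB:discount:optimal-policy}): by construction the pointwise inequality becomes an equality, so every estimate in the previous step is tight and one concludes $V(z)=V^{u^\ast}(z)$, whence $V(z)=\min_{u\in\mathcal{U}}V^u(z)$. I expect the main obstacle to be this attainment step, namely verifying that $u^\ast$ is a genuine element of $\mathcal{U}$: one must exhibit a measurable selector of $\arg\max_{\theta\in\Theta}\{\theta\cdot\nabla V(z)-c(z,\theta)\}$ (a standard measurable-selection argument using compactness of $\Theta$ and continuity of $c$ and $\nabla V$) and confirm that the controlled RBM $Z^{u^\ast}$ is well defined and inherits the moment bounds of Proposition \ref{prop:bounded}, which holds because $\Theta$ is bounded. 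The remaining points — square integrability of the stochastic integrand and the interchange of limit and expectation as $T\to\infty$ — are routine given polynomial growth and Proposition \ref{prop:bounded}.
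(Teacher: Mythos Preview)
Your proposal is correct and follows essentially the same route as the paper's own proof: apply It\^o's formula to $e^{-rt}V(Z^u(t))$, convert the HJB equation into the pointwise inequality $\mathcal{L}V-u\cdot\nabla V-rV\ge -c(\cdot,u)$, take expectations and let $T\to\infty$ to obtain $V\le V^u$, then repeat with $u^\ast$ to get equality. Your discussion of the measurable-selection issue for $u^\ast$ is actually more careful than the paper, which tacitly assumes $u^\ast\in\mathcal{U}$ without comment.
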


\begin{proof}
First, consider an arbitrary admissible policy $u$ and let $V^u$ denote the solution of the associated PDE (\ref{HJB:discount:u:L})-(\ref{HJB:discount:u:D}). By Proposition \ref{thm:verification-lemma:discounted-policy}, we have that
\begin{align}
    V^u(z) = \mathbb{E}_z \left[ \int_0^{\infty} e^{-rt} \, c(Z^u(t), u(Z^u(t)) \, dt \right] +\mathbb{E}_z \left[ \int_0^T e^{-rt} \, \kappa \cdot dY^u(t) \right], \ z \in \mathbb{R}_+^d. 
    \label{eqn:verification:lemma:discounted:aux-ineq0}
\end{align}
On the other hand, because $V$ solves (\ref{HJB:discount:min:L})-(\ref{HJB:discount:min:D}) and 
\begin{align*}
    u(z) \cdot \nabla V(z) - c(z, u(z)) \leq \max_{\theta \in \Theta} \left\{ \theta \cdot \nabla V(z) - c(z, \theta \right\}, \  z \in \mathbb{R}_+^d,
\end{align*}
we conclude that
\begin{align}
    \mathcal{L} \, V(z) - u(z) \cdot \nabla V(z) + c(z,u(z)) \geq r \, V(z). \label{eqn:verification:lemma:discounted:aux-ineq1}
\end{align}
Now applying Ito's formula to $e^{-rt} \, V(Z^u(t))$ and using Equation (\ref{RBM:df}) yields
\begin{align*}
    e^{-rT} \, V(Z^u(T)) - V(z) = & \int_0^T \left( \mathcal{L} V(Z^u(t)) - u(Z^u(t)) \cdot \nabla V(Z^u(t)) -rV(Z^u(t)) \right)dt \\
     & + \int_0^T \mathcal{D} \, V(Z^u(t)) \cdot dY^u(t) + \int_0^T e^{-rt} \nabla V(Z^u(t)) \cdot dW(t).
\end{align*}
Combining this with Equations (\ref{RBM:Y1})-(\ref{RBM:Y2}), (\ref{HJB:discount:min:L})-(\ref{HJB:discount:min:D}) and (\ref{eqn:verification:lemma:discounted:aux-ineq1}) gives
\begin{align}
    e^{-rT} \, V(Z^u(t)) - V(z) \geq & - \int_0^T e^{-rt} \, c(Z^u(t), u(Z^u(t))) \, dt \\
    &- \int_0^T e^{-rt} \kappa \cdot dY^u(t) + \int_0^T e^{-rt} \nabla V(Z^u(t)) \cdot dW(t). \label{eqn:verification:lemma:discounted:aux-ineq2}
\end{align}
Because $\nabla V$ has polynomial growth and the action space $\Theta$ is bounded, we have that
\begin{align*}
    \mathbb{E}_z \left[ \int_0^T e^{-rt} \, \nabla V(Z^u(t)) \cdot dW(t) \right] = 0;
\end{align*}
see, for example, Theorem 3.2.1 of \citet{oksendal2013stochastic}. Using this and taking the expectation of both sides of Equation (\ref{eqn:verification:lemma:discounted:aux-ineq2}) yields
\begin{align*}
    V(z) \leq \mathbb{E}_z \left[ \int_0^T e^{-rt} \, c(Z^u(t), u(Z^u(t))) \,dt \right] 
    +\mathbb{E}_z \left[ \int_0^T e^{-rt} \, \kappa \cdot dY^u(t) \right]
    + e^{-rT} \, \mathbb{E} \left[ V(Z^u(T)) \right].
\end{align*}
Because $V$ has polynomial growth and $\Theta$ is bounded, the second term on the right-hand side vanishes as $T \rightarrow \infty$. Then, because $c$ has polynomial growth and $\Theta$ is bounded, passing to the limit yields
\begin{align}
    V(z) \leq \mathbb{E}_z \left[ \int_0^{\infty} e^{-rt} \, c(Z^u(t), u(Z^u(t))) \, dt \right] 
    +\mathbb{E}_z \left[ \int_0^{\infty} e^{-rt} \, \kappa \cdot dY^u(t) \right] = V^u(z), 
    \label{eqn:verification:lemma:discounted:aux-ineq3}
\end{align}
where the equality holds by Equation (\ref{eqn:verification:lemma:discounted:aux-ineq0}). 

Now, consider the optimal policy $u^*$, where $u^*(z) =\arg \max_{\theta \in \Theta} \left\{ \theta \cdot \nabla V(z) - c(z,\theta) \right\}$. For notational brevity, let $Z^* = Z^{u^{*}}$ denote the RBM under policy $u^*$. Note from Equation (\ref{HJB:discount:min:L}) that
\begin{align}
    \mathcal{L} V(z) - u^*(z) \cdot \nabla V(z) + c(z, u^*(z)) = r V(z), \  z \in \mathbb{R}_+^d. 
    \label{eqn:verification:lemma:discounted:aux-PDE}
\end{align}
Repeating the preceding steps with $u^*$ in place of $u$ and replacing the inequality with an equality, cf. Equations (\ref{eqn:verification:lemma:discounted:aux-ineq1}) and (\ref{eqn:verification:lemma:discounted:aux-PDE}), we conclude that 
\begin{align*}
    V(z) = \mathbb{E}_z \left[ \int_0^{\infty} e^{-rt} \, c(Z^*(t),u^*(Z^*(T))) \, dt \right] 
    +\mathbb{E}_z \left[ \int_0^{\infty} e^{-rt} \, \kappa \cdot dY^{u^{*}}(t) \right]
    = V^{u^*}(z).
\end{align*}
Combining this with Equation (\ref{eqn:verification:lemma:discounted:aux-ineq3}) yields (\ref{problem;discount:min}). 
\end{proof}

\subsection{Ergodic Control}

\label{appendix:verification:ergodic}

\begin{proposition}
\label{thm:verification-lemma:ergodic-policy} Let $u \in \mathcal{U}$ be an
admissible policy and $(\tilde{\xi},v^u)$ a $C^2$ solution of the associated
PDE (\ref{HJB:ergodic:L})-(\ref{HJB:ergodic:D}). Further assume that $v^u$ and its gradient have polynomial growth. Then 
\begin{align*}
\tilde{\xi} = \xi^u = \int_{\mathbb{R}_+^d} c(z,u(z)) \, \pi^u(dz) + \sum_{i=1}^{d} \kappa_i \nu_i^u(S_i).
\end{align*}
\end{proposition}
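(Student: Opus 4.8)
The plan is to mimic the verification argument used in the discounted case (Proposition \ref{thm:verification-lemma:discounted-policy}), replacing the role of discounting by integration against the stationary distribution $\pi^u$. First I would apply It\^o's formula to $v^u(Z^u(t))$ and invoke the identity (\ref{RBM:df}) with the reference drift $\mu$ replaced by the state-dependent drift $u(\cdot)$, obtaining
\begin{align*}
v^u(Z^u(T)) - v^u(z) = {} & \int_0^T \bigl(\mathcal{L}v^u - u\cdot\nabla v^u\bigr)(Z^u(t))\,\mathrm{d}t + \int_0^T \mathcal{D}v^u(Z^u(t))\cdot \mathrm{d}Y^u(t) \\
& + \int_0^T \nabla v^u(Z^u(t))\cdot \mathrm{d}W(t).
\end{align*}
Using the boundary condition (\ref{HJB:ergodic:D}) together with the complementarity property (\ref{RBM:Y2}) of $(Y^u,Z^u)$, the boundary integral collapses to $-\int_0^T \kappa\cdot\mathrm{d}Y^u(t)$; and substituting the PDE (\ref{HJB:ergodic:L}) in the form $\mathcal{L}v^u - u\cdot\nabla v^u = \tilde\xi - c(\cdot,u(\cdot))$ into the drift term and rearranging yields
\begin{equation*}
\tilde\xi\,T = v^u(Z^u(T)) - v^u(z) + \int_0^T c(Z^u(t),u(Z^u(t)))\,\mathrm{d}t + \int_0^T \kappa\cdot\mathrm{d}Y^u(t) - \int_0^T \nabla v^u(Z^u(t))\cdot\mathrm{d}W(t).
\end{equation*}

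Next I would take $Z^u(0)\sim\pi^u$ and apply $\mathbb{E}_{\pi^u}[\cdot]$ to both sides. Since $u$ is admissible, $\pi^u$ integrates every function of polynomial growth, so $\mathbb{E}_{\pi^u}[|v^u(Z^u(0))|]<\infty$, and by stationarity of $Z^u$ under $\pi^u$ we have $\mathbb{E}_{\pi^u}[v^u(Z^u(T))] = \mathbb{E}_{\pi^u}[v^u(Z^u(0))]$, so those two terms cancel. The stochastic integral has zero expectation because $\nabla v^u$ has polynomial growth and hence, again by admissibility, $\mathbb{E}_{\pi^u}\int_0^T |\nabla v^u(Z^u(t))|^2\,\mathrm{d}t<\infty$ (cf.\ Theorem 3.2.1 of \citet{oksendal2013stochastic}). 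For the remaining two terms, stationarity and Tonelli give $\mathbb{E}_{\pi^u}\int_0^T c(Z^u(t),u(Z^u(t)))\,\mathrm{d}t = T\int_{\mathbb{R}_+^d} c(z,u(z))\,\pi^u(\mathrm{d}z)$, while the definition of the boundary measures $\nu_i^u$ together with the stationarity of $Z^u$ (which makes $t\mapsto \mathbb{E}_{\pi^u}[Y_i^u(t)]$ linear) gives $\mathbb{E}_{\pi^u}\int_0^T \kappa\cdot\mathrm{d}Y^u(t) = T\sum_{i=1}^d \kappa_i\,\nu_i^u(S_i)$. Dividing by $T$ gives $\tilde\xi = \int_{\mathbb{R}_+^d} c(z,u(z))\,\pi^u(\mathrm{d}z) + \sum_{i=1}^d \kappa_i\,\nu_i^u(S_i) = \xi^u$.

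I expect the main obstacle to be the integrability bookkeeping rather than any conceptual difficulty: one must check that $v^u(Z^u(T))$ and the running cost are genuinely $\pi^u$-integrable and that the It\^o integral is a true martingale, all of which rest on combining the polynomial-growth hypotheses on $v^u$, $\nabla v^u$, $c$ with the admissibility condition (\ref{eqn:admissibility-cond-ergodic-case}) and the boundedness of $\Theta$; the finiteness of $\nu_i^u(S_i)$ and the affine-in-$t$ behavior of $\mathbb{E}_{\pi^u}[Y_i^u(t)]$ also deserve a word of justification, exactly as in \citet{harrison1987brownian} and \citet{dai1991steady}. A minor alternative, if one prefers not to start from $\pi^u$, is to start from a fixed $z$, average the identity over $[0,nT]$ as in the proof of Proposition \ref{thm:doublepara:ergodic}, and pass to the limit using the ergodic convergence $\mathbb{E}_z[v^u(Z^u(nT))]\to\mathbb{E}_{\pi^u}[v^u(Z^u(0))]$ from \citet{budhiraja-lee2007}; this trades the stationarity argument for a limiting argument but requires the same growth estimates.
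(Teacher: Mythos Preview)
Your proposal is correct and follows essentially the same route as the paper's own proof: start the process in stationarity $Z^u(0)\sim\pi^u$, apply It\^o's formula to $v^u(Z^u(\cdot))$, use the PDE and boundary conditions to simplify, take $\mathbb{E}_{\pi^u}$ so that the $v^u$ terms cancel by stationarity and the stochastic integral vanishes (via polynomial growth of $\nabla v^u$ and admissibility, citing the same Theorem 3.2.1 of \citet{oksendal2013stochastic}), then divide by $T$. The alternative limiting argument you sketch at the end is not used in the paper, but your primary argument matches it almost line for line.
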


\begin{proof} 
Let $\pi^u$ denote the stationary distribution of RBM under policy $u$, and let $Z^u$ denote the RBM under policy $u$ that is initiated with $\pi^u$. That is,
\begin{align*}
    \mathbb{P}(Z^u(0) \in B) = \pi^u(B)  \ \text{ for } B \subset \mathbb{R}_+^d.
\end{align*}
Then applying Ito's formula to $v^u(Z^u(t))$ and using Equation (\ref{RBM:df}) yields
\begin{align*}
    v^u(Z^u(t)) - v^u(Z^u(0)) = & \int_0^T (\mathcal{L} v^u(Z^u(t)) - u(Z^u(t)) \cdot \nabla v^u(Z^u(t))) \, dt \\
            & + \int_0^T \mathcal{D}v^u(Z^u(t)) \cdot dY^u(t) +  \int_0^T \nabla v^u(Z^u(t)) \cdot dW(t).
\end{align*}
Then using Equations (\ref{RBM:Y1})-(\ref{RBM:Y2}) and (\ref{HJB:ergodic:L})-(\ref{HJB:ergodic:D}), we arrive at the following:
\begin{align}
    v^u(Z^u(T)) - v^u(Z^u(0)) = & \int_0^T \left[ \tilde{\xi} - c(Z^u(t), u(Z^u(t))) \right]dt \nonumber \\
    &- \kappa \cdot Y^u(T) + \int_0^T \nabla v^u(Z^u(t)) \cdot dW(t).
    \label{eqn:ver:lemma:ergodic:policy:auxiliary}
\end{align}
Note that the marginal distribution of $Z^u(t)$ is $\pi^u$ for all $t \geq 0$. Thus, we have that 
\begin{align*}
    \mathbb{E}_{\pi^u}[v^u(Z^u(T))] = \mathbb{E}_{\pi^u}[v^u(Z^u(0))].
\end{align*}
Moreover, using Equation (\ref{eqn:admissibility-cond-ergodic-case}) and the polynomial growth of $\nabla v^u$, we conclude that
\begin{eqnarray*}
    \mathbb{E}_{\pi^u} \left[ \int_0^T \left| \nabla v^u(Z^u(t)) \right|^2\, dt \right] = T \, \int_{\mathbb{R}_+^d} \left| \nabla v^u(z) \right|^2 \, \pi^u(dz) < \infty. 
\end{eqnarray*}
Consequently, we have that $\mathbb{E} \left[ \int_0^T \nabla v^u(Z^u(t)) \cdot dW(t)  \right] = 0 $; see, for example, Theorem 3.2.1 of \citet{oksendal2013stochastic}. Combining these and taking the expectation of both sides of (\ref{eqn:ver:lemma:ergodic:policy:auxiliary}) gives
\begin{align*}
    \tilde{\xi} &= \frac{1}{T} \int_0^T \mathbb{E}_{\pi^u} \left[ c(Z^u(t),u(Z^u(t))) \right] \, dt 
    + \frac{1}{T} \mathbb{E}_{\pi^u} \left[ \kappa \cdot Y^u(T) \right] \\
    &= \int_{\mathbb{R}_+^d} c(z,u(z)) \, \pi_u(dz) + \sum_{i=1}^d \kappa_i \nu_{i}^{u}(S_i) \\ 
    &= \xi^u. 
\end{align*}
\end{proof}

\begin{proposition}
\label{thm:verification-lemma:ergodic-optimal} Let $(v, \xi)$ be a $C^2$ solution of the HJB
equation (\ref{HJB:ergodic:min:L})-(\ref{HJB:ergodic:min:D}), and further assume that both $v$ and its gradient have polynomial growth. Then (\ref{eqn:ergodic:optimal-xi}) holds, and moreover, $\xi = \xi^{u^*}$ where the
optimal policy $u^{*}$ is defined by (\ref{eqn:defn:ergodic:optimal:policy}).
\end{proposition}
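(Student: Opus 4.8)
The plan is to mirror the two earlier ergodic arguments: the fixed‑policy verification lemma (Proposition~\ref{thm:verification-lemma:ergodic-policy}) supplies the mechanics, while the discounted optimality argument (Proposition~\ref{thm:verification-lemma:discounted-optimal}) supplies the comparison structure. First I would fix an arbitrary admissible policy $u \in \mathcal{U}$. Since $(v,\xi)$ solves (\ref{HJB:ergodic:min:L}), for every $z \in \mathbb{R}_+^d$ the vector $u(z)\in\Theta$ is feasible in the maximum, so $u(z)\cdot\nabla v(z) - c(z,u(z)) \le \max_{\theta\in\Theta}\{\theta\cdot\nabla v(z)-c(z,\theta)\} = \mathcal{L}v(z)-\xi$, i.e.
\begin{equation*}
\mathcal{L}v(z) - u(z)\cdot\nabla v(z) + c(z,u(z)) \ge \xi, \qquad z \in \mathbb{R}_+^d .
\end{equation*}
Applying Ito's formula to $v(Z^u(t))$ via (\ref{RBM:df}) and using (\ref{RBM:Y1})--(\ref{RBM:Y2}) together with the boundary condition (\ref{HJB:ergodic:min:D}) exactly as in the proof of Proposition~\ref{thm:verification-lemma:ergodic-policy}, then substituting the displayed inequality, gives
\begin{equation*}
v(Z^u(T)) - v(Z^u(0)) \ge \int_0^T \bigl[\xi - c(Z^u(t),u(Z^u(t)))\bigr]\,\mathrm{d}t - \kappa\cdot Y^u(T) + \int_0^T \nabla v(Z^u(t))\cdot \mathrm{d}W(t).
\end{equation*}

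Next I would initialize $Z^u(0)\sim\pi^u$, so that $Z^u(t)\sim\pi^u$ for all $t$ and hence $\mathbb{E}_{\pi^u}[v(Z^u(T))] = \mathbb{E}_{\pi^u}[v(Z^u(0))]$. Because $\nabla v$ has polynomial growth and $\pi^u$ has finite moments of all orders by admissibility, cf.\ (\ref{eqn:admissibility-cond-ergodic-case}), the stochastic integral has zero mean, just as in the proof of Proposition~\ref{thm:verification-lemma:ergodic-policy}. Taking expectations, using $\mathbb{E}_{\pi^u}[\kappa\cdot Y^u(T)] = T\sum_{i=1}^d \kappa_i\nu_i^u(S_i)$, and dividing by $T$ yields $\xi \le \int_{\mathbb{R}_+^d} c(z,u(z))\,\pi^u(dz) + \sum_{i=1}^d \kappa_i\nu_i^u(S_i) = \xi^u$. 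Since $u\in\mathcal{U}$ was arbitrary, $\xi \le \inf_{u\in\mathcal{U}}\xi^u$.

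Finally I would repeat the computation with $u^*$ in place of $u$. By the definition (\ref{eqn:defn:ergodic:optimal:policy}) of $u^*$ as the pointwise maximizer, the HJB inequality holds with equality, so every inequality in the chain becomes an equality and one obtains $\xi = \xi^{u^*}$; combined with $\xi \le \inf_{u}\xi^u \le \xi^{u^*}$ this gives (\ref{eqn:ergodic:optimal-xi}). The main obstacle is that this last step presupposes $u^*\in\mathcal{U}$, i.e.\ that the controlled RBM $Z^{u^*}$ admits a unique stationary distribution satisfying the integrability requirement (\ref{eqn:admissibility-cond-ergodic-case}); unlike the discounted case, where the objective is automatically finite, admissibility is a genuine restriction here, so I would either impose it among the regularity hypotheses or, following the pattern in the proof of Proposition~\ref{thm:doublepara:ergodic}, invoke Theorem~4.12 of \citet{budhiraja-lee2007} to show $T^{-1}\mathbb{E}_z[v(Z^{u^*}(T))]\to 0$ for a general deterministic initial state and thereby avoid relying on stationary initialization. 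A secondary technical point, shared with the earlier ergodic proofs, is checking the square-integrability that legitimizes discarding the stochastic-integral term, which again follows from the polynomial growth of $\nabla v$ and the finiteness of all moments under $\pi^u$.
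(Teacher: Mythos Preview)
Your proposal is correct and follows essentially the same approach as the paper: derive the pointwise HJB inequality for an arbitrary admissible $u$, apply It\^o along $Z^u$ started in stationarity, use polynomial growth plus admissibility to kill the stochastic integral and the $v$-terms, divide by $T$ to get $\xi\le\xi^u$, then repeat with $u^*$ and equality to obtain $\xi=\xi^{u^*}$. Your flagged concern about the admissibility of $u^*$ is well taken---the paper simply assumes $Z^{u^*}$ has a stationary distribution $\pi^{u^*}$ without justification, so on this point you are, if anything, more careful than the original.
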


\begin{proof} 
First, consider an arbitrary policy $u$ and note that
\begin{eqnarray*}
    \xi^{u} = \int_{\mathbb{R}_+^d} c(z,u(z)) \, \pi^u(dz) + \sum_{i=1}^d \kappa_i \, \nu_i^u(S_i),
\end{eqnarray*}
where $\pi^u$ is the stationary distribution of RBM under policy $u$ and $\nu_i^u$ is the corresponding boundary measure on the boundary surface $S_i = \{ z \in \mathbb{R}_+^d : z_i =0 \}$. Let $Z^u$ denote the RBM under policy $u$ that is initiated with the stationary distribution $\pi^u$. That is,
\begin{align*}
    \mathbb{P}(Z^u(0) \in B) = \pi^u(B), \ \ B \subset \mathbb{R}_+^d.
\end{align*}
On the other hand, because ($v, \xi$) solves the HJB equation and
\begin{align*}
    u(z) \cdot \nabla v(z) - c(z,u(z)) \leq \max_{\theta \, \in \, \Theta} \left\{ \theta \cdot \nabla v(z) - c(z, \theta) \right\},
\end{align*}
we have that
\begin{align}
    \mathcal{L}v(z) - u(z) \cdot \nabla v(z) + c(z, u(z))) \geq \xi 
    \label{eqn:aux:ergodic:PDE-inequality}
\end{align}
Now, we apply Ito's formula to $v(Z^u(t))$ and use Equation (\ref{RBM:df}) to get
\begin{align*}
    v(Z^u(T)) - v(Z^u(0)) = & \int_0^T (\mathcal{L} v(Z^u(t))) - u(Z^u(t)) \cdot \nabla v(Z^u(t)))) \, dt \\
                    & + \int_0^T \nabla v(Z^u(t)) \cdot dY^u(t) + \int_0^T \nabla v(Z^u(t)) \cdot dW(t).
\end{align*}
Combining this with Equations (\ref{RBM:Y1})-(\ref{RBM:Y2}), (\ref{HJB:ergodic:min:D}) and (\ref{eqn:aux:ergodic:PDE-inequality}) gives
\begin{align}
    v(Z^u(T)) - v(Z^u(0)) \geq \int_0^T (\xi - c(Z^u(t), u(Z^u(t))) \, dt - \kappa \cdot Y^u(T) + \int_0^T \nabla v(Z^u(t)) \cdot dW(t). 
    \label{eqn:aux:ergodic:integral-inequality}
\end{align}
Note that the marginal distribution of $Z^u(t)$ is $\pi^u$ for all $t \geq 0$. Thus, we have that 
\begin{align*}
    \mathbb{E}_{\pi^u}[v(Z^u(T))] = \mathbb{E}_{\pi^u}[v(Z^u(0))].
\end{align*}
Moreover, using Equation (\ref{eqn:admissibility-cond-ergodic-case}) and the polynomial growth of $\nabla v$, we conclude that
\begin{eqnarray*}
    \mathbb{E}_{\pi^u} \left[ \int_0^T \left| \nabla v(Z^u(t)) \right|^2\, dt \right] = T \, \int_{\mathbb{R}_+^d} \left| \nabla v(z) \right|^2 \, \pi^u(dz) < \infty. 
\end{eqnarray*}
Consequently, we have that $\mathbb{E} \left[ \int_0^T \nabla v(Z^u(t)) \, dW(t)  \right] = 0 $; see for example Theorem 3.2.1 of \citet{oksendal2013stochastic}. Combining these and taking the expectation of both sides of (\ref{eqn:aux:ergodic:integral-inequality}) give
\begin{align}
    \xi & \leq \frac{1}{T} \int_0^T \mathbb{E}_{\pi^u} \left[ c(Z^u(t),u(Z^u(t)) \right] \, dt 
    + \frac{1}{T} \mathbb{E}_{\pi^u} \left[ \kappa \cdot Y^u(T) \right] \nonumber \\ 
   &  = \int_{\mathbb{R}_+^d} c(z,u(z)) \, \pi_u(dz) + \sum_{i=1}^d \kappa_i \, \nu_i^u(S_i) \nonumber \\
    &=\xi^u. \label{eqn:aux:validity:HJB:ergodic}
\end{align}
Now, consider policy $u^*$. For notational brevity, let $Z^*(t) = Z^{u^*}(t)$ denote the RBM under policy $u^*$ that is initiated with the stationary distribution $\pi^{u^*}$. In addition, note from (\ref{HJB:ergodic:min:L}) that 
\begin{eqnarray}
    \mathcal{L} v(z) - u^*(z) \cdot \nabla v(z) + c(z, u^*(z)) = \xi, \ \ z \in \mathbb{R}_+^d. 
    \label{eqn:ergodic:aux:PDE:optimal-policy}
\end{eqnarray}
Repeating the preceding steps with $u^*$ in place of $u$ and replacing the inequality with an equality, cf. Equations (\ref{eqn:aux:ergodic:PDE-inequality}) and (\ref{eqn:ergodic:aux:PDE:optimal-policy}), we conclude
\begin{align*}
    \xi = \int_{\mathbb{R}_+^d} c(z,u^*(z)) \, \pi_{u^{*}}(dz) + \sum_{i=1}^d \kappa_i \, \nu_i^{u^*}(S_i)  = \xi^{u^*}.
\end{align*}
Combining this with Equation (\ref{eqn:aux:validity:HJB:ergodic}) completes the proof. 
\end{proof}

\section{Derivation of the covariance matrix of the feed-forward examples}

\label{appendix:var} By the functional central limit theorem for the renewal
process \citep{billingsley2013convergence}, we have%
\begin{equation*}
\hat{E}^{n}(\cdot )\Rightarrow W_{E}\left( \cdot \right) ,
\end{equation*}%
where $W_{E}\left( \cdot \right) $ is an one-dimensional Brownian motion
with drift zero and variance $\lambda a^{2}=\mu _{0}a^{2}.$ Furthermore, we
have 
\begin{equation*}
\hat{S}_{k}^{n}(t)\Rightarrow W_{k}\left( \cdot \right) ,\text{ for }%
k=1,2,\ldots ,K,
\end{equation*}%
where $W_{k}\left( \cdot \right) $ is an one-dimensional Brownian motion
with drift zero and variance $\mu _{0}p_{k}s_{k}^{2}.$ Now, we turn to $\hat{%
S}_{0}^{n}(t)$ and $\hat{\Phi}^{n}(t).$ By \citet{harrison1988brownian}, we
have 
\begin{equation*}
\mathrm{Cov}\left( \left[ 
\begin{array}{c}
\hat{S}_{0}^{n}(t) \\ 
\hat{\Phi}^{n}(t)%
\end{array}%
\right] \right) =\mu _{0}\Omega ^{0}+\mu _{0}s_{0}^{2}R^{0}\left(
R^{0}\right) ^{\top },
\end{equation*}%
where $\Omega _{kl}^{0}=p_{k}(\mathbb{I}\{k=l\}-p_{l})$ for,$k,l=0,\ldots ,K$
and $R^{0}=[1,-p_{1},\ldots ,-p_{K}]^{\top }.$ Therefore, we have 
\begin{equation*}
\mathrm{Cov}\left( \left[ 
\begin{array}{c}
\hat{S}_{0}^{n}(t) \\ 
\hat{\Phi}^{n}(t)%
\end{array}%
\right] \right) =\mu _{0}\left[ 
\begin{array}{ccccc}
s_{0}^{2} & -p_{1}s_{0}^{2} & \cdots & \cdots & -p_{K}s_{0}^{2} \\ 
-p_{1}s_{0}^{2} & p_{1}(1-p_{1})+p_{1}^{2}s_{0}^{2} & p_{1}p_{2}\left(
s_{0}^{2}-1\right) & \cdots & p_{1}p_{K}\left( s_{0}^{2}-1\right) \\ 
\vdots & p_{1}p_{2}\left( s_{0}^{2}-1\right) & \ddots &  & \vdots \\ 
\vdots & \vdots &  & \ddots & p_{K-1}p_{K}\left( s_{0}^{2}-1\right) \\ 
-p_{K}s_{0}^{2} & p_{1}p_{K}\left( s_{0}^{2}-1\right) & \cdots & \cdots & 
p_{K}(1-p_{K})+p_{K}^{2}s_{0}^{2}%
\end{array}%
\right] .
\end{equation*}%
Therefore, the variance of $\chi$ is 
\begin{eqnarray*}
A &=&diag(\lambda a^{2},\mu _{1}s_{1}^{2},\ldots ,\mu _{K}s_{K}^{2})+ \\
&&\mu _{0}\left[ 
\begin{array}{ccccc}
s_{0}^{2} & -p_{1}s_{0}^{2} & \cdots & \cdots & -p_{K}s_{0}^{2} \\ 
-p_{1}s_{0}^{2} & p_{1}(1-p_{1})+p_{1}^{2}s_{0}^{2} & p_{1}p_{2}\left(
s_{0}^{2}-1\right) & \cdots & p_{1}p_{K}\left( s_{0}^{2}-1\right) \\ 
\vdots & p_{1}p_{2}\left( s_{0}^{2}-1\right) & \ddots &  & \vdots \\ 
\vdots & \vdots &  & \ddots & p_{K-1}p_{K}\left( s_{0}^{2}-1\right) \\ 
-p_{K}s_{0}^{2} & p_{1}p_{K}\left( s_{0}^{2}-1\right) & \cdots & \cdots & 
p_{K}(1-p_{K})+p_{K}^{2}s_{0}^{2}%
\end{array}%
\right] \\
&=&\mu _{0}\left[ 
\begin{array}{ccccc}
s_{0}^{2}+a^{2} & -p_{1}s_{0}^{2} & \cdots & \cdots & -p_{K}s_{0}^{2} \\ 
-p_{1}s_{0}^{2} & p_{1}(1-p_{1})+p_{1}^{2} s_{0}^{2}+p_{1} s_{1}^{2}& 
p_{1}p_{2}\left( s_{0}^{2}-1\right) & \cdots & p_{1}p_{K}\left(
s_{0}^{2}-1\right) \\ 
\vdots & p_{1}p_{2}\left( s_{0}^{2}-1\right) & \ddots &  & \vdots \\ 
\vdots & \vdots &  & \ddots & p_{K-1}p_{K}\left( s_{0}^{2}-1\right) \\ 
-p_{K}s_{0}^{2} & p_{1}p_{K}\left( s_{0}^{2}-1\right) & \cdots & \cdots & 
p_{K}(1-p_{K})+p_{K}^{2}s_{0}^{2}+p_Ks_{K}^2%
\end{array}%
\right] .
\end{eqnarray*}%
In particular, if the arrival and service processes are Poisson processes,
we have $a=1$ and $s_{k}=1$ for $k=0,1,2,\ldots ,K.$ Then, we have%
\begin{equation*}
A_{\mathrm{Poisson}}=\mu _{0}\left[ 
\begin{array}{ccccc}
2 & -p_{1} & \cdots & \cdots & -p_{K} \\ 
-p_{1} & 2p_{1} &  &  &  \\ 
\vdots &  & \ddots &  &  \\ 
\vdots &  &  & \ddots &  \\ 
-p_{K} &  &  &  & 2p_{K}%
\end{array}%
\right]
\end{equation*}%
Furthermore, if the service time for server zero is deterministic, i.e., $%
s_{0}=0$ , we have 
\begin{equation*}
A_{\mathrm{deterministic}}=\mu _{0}\left[ 
\begin{array}{ccccc}
a^{2} & 0 & \cdots & \cdots & 0 \\ 
0 & p_{1}(1-p_{1})+p_{1}s_{1}^{2} & -p_{1}p_{2} & \cdots & -p_{1}p_{K}
\\ 
\vdots & -p_{1}p_{2} & \ddots &  & \vdots \\ 
\vdots & \vdots &  & \ddots & -p_{K-1}p_{K} \\ 
0 & -p_{1}p_{K} & \cdots & \cdots & p_{K}(1-p_{K})+p_{K}s_{K}^{2}%
\end{array}%
\right].
\end{equation*}

\section{\!\!\! Analytical solution of one-dimensional test problems}

\label{appendix:decomposable:test-problem}

\subsection{Ergodic control formulation with linear cost of control}

We consider the one-dimensional control problem with the cost function%
\begin{equation*}
c(z,\theta )=hz+c\theta \text{ for }z\in \mathbb{R}_{+}\text{ and }\theta
\in \Theta =[0,b].
\end{equation*}
In the ergodic control case, the HJB equation (\ref{HJB:ergodic:min:L}) - (\ref%
{HJB:ergodic:min:D}) is 
\begin{eqnarray}
&&\frac{a}{2}v^{\prime \prime }(z)-\max_{\theta \in \lbrack 0,b]}\left\{
\theta \cdot v^{\prime }(z)-hz-c\theta \right\} =\xi ,\text{ and}
\label{HJB:ergodic:one-dimensional} \\
&&v^{\prime }(0)=0\text{ and }v^{\prime }(z)\text{ having polynomial growth
rate,} \label{HJB:ergodic:one-dimensional2}
\end{eqnarray}%
where the covariance matrix $A=a$ in this one-dimensional case. The HJB
equation (\ref{HJB:ergodic:one-dimensional}) - (\ref{HJB:ergodic:one-dimensional2}) is equivalent to 
\begin{equation*}
\frac{a}{2}v^{\prime \prime }(z)+hz-(v^{\prime }(z)-c)^{+}b=\xi,
\end{equation*}%
and the solution is 
\begin{equation*}
\left( v^{\ast }\right) ^{\prime }(z)=\left\{ 
\begin{array}{c}
\frac{2}{\sqrt{a}}\sqrt{ch+\frac{ah^{2}}{4b^{2}}}z-\frac{h}{a}z^{2} \\ 
\frac{h}{b}z+\frac{ha}{2b^{2}}-\frac{\sqrt{a}}{b}\sqrt{ch+\frac{ah^{2}}{%
4b^{2}}}+c\text{ }%
\end{array}%
\right. 
\begin{array}{c}
\text{if }z<z^{\ast } \\ 
\text{if }z\geq z^{\ast }%
\end{array}%
,\text{ with}
\end{equation*}%
\begin{equation*}
z^*=\frac{1}{h}\sqrt{a\left( ch+\frac{ah^{2}}{4b^{2}}\right) } -\frac{a}{2b}\text{ and } \xi ^{\ast }=\sqrt{a\left( ch+\frac{ah^{2}}{4b^{2}}\right) },
\end{equation*}%
and the optimal control is
\begin{equation*}
\theta ^{\ast }(z)=\left\{ 
\begin{array}{c}
0 \\ 
b%
\end{array}%
\right. 
\begin{array}{c}
\text{if }z<z^{\ast }, \\ 
\text{if }z\geq z^{\ast }.%
\end{array}%
\end{equation*}

\subsection{Discounted formulation with linear cost of control}

The cost function is still 
\begin{equation*}
c(z,\theta )=hz+c\theta \text{ for }z\in \mathbb{R}_{+}\text{ and }\theta
\in \Theta =[0,b],
\end{equation*}
and in the discounted control case, the HJB equation (\ref{HJB:discount:min:L}) -
(\ref{HJB:discount:min:D}) is 
\begin{eqnarray*}
\frac{a}{2}V^{\prime \prime }(z)+hz-(V^{\prime }(z)-c)^{+}b &=&rV(z), \\
V^{\prime }(0) &=&0.
\end{eqnarray*}
The solution is 
\begin{equation*}
V^{\ast }(z)=\left\{ 
\begin{array}{c}
V_{1}(z) \\ 
V_{2}(z)%
\end{array}%
\right. 
\begin{array}{c}
\text{if }z<z^{\ast } \\ 
\text{if }z\geq z^{\ast }%
\end{array}%
,
\end{equation*}%
and the optimal control is
\begin{equation*}
\theta ^{\ast }(z)=\left\{ 
\begin{array}{c}
0 \\ 
b%
\end{array}%
\right. 
\begin{array}{c}
\text{if }z<z^{\ast } \\ 
\text{if }z\geq z^{\ast }%
\end{array}%
,
\end{equation*}%
where 
\begin{equation*}
V_{1}(z)=\frac{h\sqrt{a}e^{-\frac{\sqrt{2}\sqrt{r}z}{\sqrt{a}}}}{\sqrt{2}%
r^{3/2}}+\frac{hz}{r}+C_{1}e^{\frac{\sqrt{2}\sqrt{r}z}{\sqrt{a}}}+C_{1}e^{-%
\frac{\sqrt{2}\sqrt{r}z}{\sqrt{a}}},\text{ and}
\end{equation*}%
\begin{equation*}
V_{2}(z)=\frac{-bh+brc+hrz}{r^{2}}+C _{2}e^{z\left( \frac{b}{a}-\frac{\sqrt{%
b^{2}+2ra}}{a}\right) },
\end{equation*}%
for some parameters $z^{\ast },C _{1},C _{2},$ to be determined later.

\textbf{Case 1:} $h\leq rc.$ Note that if $C _{1}=0,$ then we have 
\begin{equation*}
V_{1}^{\prime }(z)=\frac{h}{r}\left( 1-e^{-\frac{\sqrt{2}\sqrt{r}z}{\sqrt{a}}%
}\right) <\frac{h}{r}\leq c.
\end{equation*}%
Therefore, we have 
\begin{equation*}
V^{\ast }(z)=\frac{h\sqrt{a}e^{-\frac{\sqrt{2}\sqrt{r}z}{\sqrt{a}}}}{\sqrt{2}%
r^{3/2}}+\frac{hz}{r}
\end{equation*}%
for the case $h\leq rc$ and the optimal control is always to set $\theta
^{\ast }(z)=0.$

\textbf{Case 2:} $h>rc.$ We have 
\begin{eqnarray*}
z^{\ast } &=&\frac{a\log \left( \frac{\left( h-rc\right) a}{C _{2}\lambda
\left( \sqrt{b^{2}+2ra}-b\right) }\right) }{b-\sqrt{b^{2}+2ra}}, \\
\text{ }V_{2}^{\prime }(z^{\ast }) &=&c,\text{ and} \\
\text{ }V_{2}^{\prime \prime }(z^{\ast }) &=&\frac{\left( h-rc\right) \left( 
\sqrt{b^{2}+2\lambda a}-b\right) }{ra}.
\end{eqnarray*}%
At point $z^{\ast },$ we must have 
\begin{equation*}
V_{1}^{\prime }(z^{\ast })=V_{2}^{\prime }(z^{\ast })\text{ and }V_{1}^{\prime \prime
}(z^{\ast })=V_{2}^{\prime \prime }(z^{\ast }).
\end{equation*}
Then we can numerically solve for $C _{1}$ and $C_{2}$ using the following
equations:
\begin{eqnarray*}
V_{1}^{\prime }(z^{\ast }) &=&c, \\
V_{1}^{\prime \prime }(z^{\ast }) &=&\frac{\left( h-rc\right) \left( \sqrt{%
b^{2}+2ra}-b\right) }{ra}.
\end{eqnarray*}
Table \ref{tab:numerical_values} presents numerical values of $z^{\ast }$ for
different parameter combinations. 
\begin{table}[tbh]
\caption{The numerical values of $z^{\ast }$ for different parameter combinations (%
$a=c=1$).  }
\label{tab:numerical_values}\centering
\begin{tabular}{llcc}
\toprule &  & $r = 0.01$ & $r = 0.1$ \\ 
\midrule \multirow{2}[0]{*}{$b = 2$} & $h=2$ & 0.501671 & 0.517133 \\ 
& $h=1.9$ & 0.519136 & 0.535753 \\ 
\midrule \multirow{2}[0]{*}{$b = 10$} & $h=2$ & 0.660354 & 0.674135 \\ 
& $h=1.9$ & 0.678797 & 0.693707 \\ 
\bottomrule &  &  & 
\end{tabular}%
\end{table}

\subsection{Ergodic control formulation with quadratic cost of control}

We consider the cost function 
\begin{equation*}
c(\theta ,z)=\alpha (\theta -\underline{\theta })^{2}+hz.
\end{equation*}%
The HJB equation (\ref{HJB:ergodic:min:L}) - (\ref{HJB:ergodic:min:D}) then becomes%
\begin{eqnarray}
&&\frac{a}{2}v^{\prime \prime }(z)-\max_{\theta }\left\{ \theta \cdot
v^{\prime }(z)-hz-\alpha (\theta -\underline{\theta })^{2}\right\} =\xi ,%
\text{ and} \\
&&v^{\prime }(z)=0\text{ and }v^{\prime }(z)\text{ having polynomial growth
rate,}
\end{eqnarray}%
which is equivalent to 
\begin{eqnarray*}
hz+\frac{a}{2}v^{\prime \prime }(z)-\frac{1}{4\alpha }\left( v^{\prime
}(z)\right) ^{2}-\underline{\theta }v^{\prime }(z) &=&\xi , \\
v^{\prime }(0) &=&0.
\end{eqnarray*}%
Let $f(z)=v^{\prime }(z)$ with $f(0)=0.$ Then we have%
\begin{equation*}
\xi =hz+\frac{a}{2}f^{\prime }(z)-\frac{1}{4\alpha }\left( f(z)\right) ^{2}-%
\underline{\theta }f(z),f(0)=0,
\end{equation*}%
which is a Riccati equation. One can solve this equation numerically to find $\xi$ such that $f(\cdot)$ has polynomial growth. For example, if $\alpha =\underline{\theta }=a=1,$ and $h=2$, we have $\xi ^{\ast }=0.8017.$

\section{Implementation Details of Our Method}
\label{appendix:implementation:details} 

\textbf{Neural network architecture.} We used a three or four-layer, fully connected neural network with $20$ - $1000$ neurons in each layer; see Tables \ref{tab:hyper-thin} and \ref{tab:hyper-quad} for details.

\textbf{Common hyperparameters.} Batch size $B=256$; time horizon $T=0.1$,
discretization step-size $0.1/64$; see Tables \ref{tab:hyper-thin} and \ref{tab:hyper-quad} for details.

\textbf{Learning rate.} The learning rate starts from 0.0005, and decays to
0.0003 and 0.0001 with a rate detailed in Tables \ref{tab:hyper-thin} and \ref{tab:hyper-quad}.

\textbf{Optimizer.} We used the Adam optimizer \citep{kingma2014adam}.

\textbf{Reference policy.} The reference policy sets $\tilde{\theta}=1$.

\textbf{Activation function.} We use the 'elu' action function \citep{rasamoelina2020review}.

\textbf{Code.} Our code structure follows from that of \citet{han2018solving} and \citet{zhou2021actorcode}. We implement two major changes: First, we have separated the data generation and training processes to facilitate data reuse. Second, we have conducted the RBM simulation. We have also integrated all the features discussed in this section.

\begin{table}[!htbp]
{\small{ 
  \centering
  \caption{Hyperparameters used in the test problems with linear costs}
    \begin{tabular}{lllllllll}
    \toprule
    \multirow{2}[0]{*}{Hyperparameters} & \multicolumn{2}{l}{1-dimensional} & \multicolumn{2}{l}{2-dimensional} & \multicolumn{2}{l}{6-dimensional} & \multicolumn{2}{l}{30-dimensional} \\
    
          & b=2   & $\phantom{ddd}$b=10  & b=2   & $\phantom{ddd}$b=10  & b=2   & $\phantom{ddd}$b=10  & b=2   & $\phantom{ddd}$b=10 \\
          \midrule
    \#Iterations & \multicolumn{2}{l}{6000} & \multicolumn{2}{l}{6000} & \multicolumn{2}{l}{6000} & \multicolumn{2}{l}{6000} \\
     & & & & \\
    \#Epoches & 13    & $\phantom{ddd}$17    & 15    & $\phantom{ddd}$19    & 23    & $\phantom{ddd}$27    & 41   & $\phantom{ddd}$135 \\
     & & & & \\
    \multirow{3}[0]{*}{Learning rate  scheme} & \multicolumn{2}{l}{0.0005 (0,2000)} & \multicolumn{2}{l}{0.0005 (0,3000)} & \multicolumn{2}{l}{0.0005 (0,3000)} & \multicolumn{2}{l}{0.0005 (0,9500)} \\
          & \multicolumn{2}{l}{0.0003 (2000,4000)} & \multicolumn{2}{l}{0.0003 (3000,6000)} & \multicolumn{2}{l}{0.0003 (3000,6000)} & \multicolumn{2}{l}{0.0003 (9500,22000)} \\
          & \multicolumn{2}{l}{0.0001 (4000,$\infty$)} & \multicolumn{2}{l}{0.0001 (6000,$\infty$)} & \multicolumn{2}{l}{0.0001 (6000,$\infty$)} & \multicolumn{2}{l}{0.0001 (22000,$\infty$)} \\
           & & & & \\
    \#Hidden layers & \multicolumn{2}{l}{4} & \multicolumn{2}{l}{4} & \multicolumn{2}{l}{4} & \multicolumn{2}{l}{3} \\
    & & & & \\
    \#Neurons in each  layer & \multicolumn{2}{l}{50} & \multicolumn{2}{l}{50} & \multicolumn{2}{l}{50} & \multicolumn{2}{l}{300} \\
     & & & & \\
    $\tilde{c}_0$  & \multicolumn{2}{c}{\multirow{2}[0]{*}{\diagbox[width=9em]{\\}{}}}  & 0.4   &$\phantom{ddd}$ 7     & 0.4   &$\phantom{ddd}$ 7     & 0.4   & $\phantom{ddd}$7 \\
    $\tilde{c}_1$  &  \multicolumn{2}{c}{}  & \multicolumn{2}{l}{800} & \multicolumn{2}{l}{2400} & \multicolumn{2}{l}{4800} \\
    \bottomrule
    \end{tabular}%
  \label{tab:hyper-thin}%
}}
\end{table}%

\begin{table}[H]
{\small{
  \centering
  \caption{Hyperparameters used in the test problems with quadratic costs}
    \begin{tabular}{lllll}
    \toprule
    Hyperparameters & 1-dimensional & 2-dimensional & 6-dimensional & 100-dimensional \\
    \midrule
    \#Iterations & 6000  & 6000  & 6000  & 12000 \\
    & & & & \\
    \#Epoches & 12    & 14    & 22    & 110 \\
    & & & & \\
    \multirow{3}[0]{*}{Learning rate scheme} & 0.0005 (0,3000) & 0.0005 (0,3000) & 0.0005 (0,3000) & 0.0005 (0,9500) \\
          & 0.0003 (3000,6000) & 0.0003 (3000,6000) & 0.0003 (3000,6000) & 0.0003 (9500,22000) \\
          & 0.0001 (6000,$\infty$) & 0.0001 (6000,$\infty$) & 0.0001 (6000,$\infty$) & 0.0001 (22000,$\infty$) \\
          & & & & \\
    \#Hidden layers & 3    & 4    & 4    & 3 \\
    & & & & \\
    \#Neurons in each layer & 20     & 50     & 50     & 1000 \\
    \bottomrule
    \end{tabular}%
  \label{tab:hyper-quad}%
  }}
\end{table}%

\subsection{Decay loss in the test example with linear cost of control}
Recall in our main test example with linear cost of control, the cost function is
$$c(z,\theta) = h^\top z + c^\top \theta.$$
In the discounted cost formulation, substituting this cost function into the $F$ function defined in Equation (\ref{eq:Ffunction}) gives the following:
\begin{equation}
    F(\tilde{Z}(t),G_{w_2}(\tilde{Z}(t))) = \tilde{\theta} \cdot x +h^\top z -b\sum_{i=1}^d\max(G_{w_2}(\tilde{Z}(t))_i-c,0) .
    \label{eq:Ffunction2}
\end{equation}
Note that if $G_{w_2}(\tilde{Z}(t)) <c $, we have
$$\frac{\partial F(\tilde{Z}(t),G_{w_2}(\tilde{Z}(t)))  }{w_2} =0,
$$
which suggests that the algorithm may suffer from the gradient vanishing problem \citep{hochreiter1998vanishing}, well-known in the deep learning literature. To overcome this difficulty, we propose an alternative $F$ function
\begin{equation}
    \tilde{F}(\tilde{Z}(t),G_{w_2}(\tilde{Z}(t))) = \tilde{\theta} \cdot x +h^\top z -b\sum_{i=1}^d\max(G_{w_2}(\tilde{Z}(t))_i-c,0) - \tilde{b} \sum_{i=1}^d\min(G_{w_2}(\tilde{Z}(t))_i-c,0)
\end{equation}
where $\tilde{b}$ is a decaying function with respect to the training iteration. Specifically, we propose 
$$\tilde{b} =\left( \tilde{c}_0 - \frac{\textrm{iteration}}{\tilde{c}_1} \right)^+,$$
for some positive constants  $\tilde{c}_0$ and $\tilde{c}_1$.
The specific choices of $\tilde{c}_0$ and $\tilde{c}_1$ are shown in Table \ref{tab:hyper-thin}.

We proceed similarly in the ergodic cost case. 

\subsection{Variance loss function in discounted control}
Let us parametrize the value function as 
$ V_{w_1}(z) = \tilde{V}_{w_1}(z) +\xi.$ Note that $\partial\tilde{V}_{w_1}(z) / \partial z$=$\partial {V}_{w_1}(z) / \partial z$. Therefore, we can rewrite the loss function (\ref{dis:loss:l})
\begin{eqnarray}
\ell (w_{1},w_{2}) &=&\mathbb{E}\left[ \left( e^{-r T} \, (\tilde{V}_{w_{1}}(\tilde{Z}%
(T)) + \xi)-(\tilde{V}_{w_{1}}(\tilde{Z}(0)) + \xi)\right. \right.  \label{dis:loss:lE2} \\
&&\left. \left. -\int_{0}^{T}e^{-rt}G_{w_{2}}(\tilde{Z}(t)) \cdot \mathrm{d}%
W(t)+\int_{0}^{T}e^{-rt}F(\tilde{Z}(t),G_{w_{2}}(\tilde{Z}(t))) \, \mathrm{d}%
t\right) ^{2}\right] ,  \notag
\end{eqnarray}%
By optimizing $\xi$ first, we obtain the following variance loss function:
\begin{eqnarray*}
\tilde{\ell} (w_{1},w_{2}) &=&\mathrm{Var}\left[  e^{-r T} \, \tilde{V}_{w_{1}}(\tilde{Z}%
(T))-\tilde{V}_{w_{1}}(\tilde{Z}(0)) \right.   \\
&&\left. -\int_{0}^{T}e^{-rt}G_{w_{2}}(\tilde{Z}(t)) \cdot \mathrm{d}%
W(t)+\int_{0}^{T}e^{-rt}F(\tilde{Z}(t),G_{w_{2}}(\tilde{Z}(t))) \, \mathrm{d}%
t\ \right]. 
\end{eqnarray*}%
We observe that this trick could accelerate the training speed when $r$ is small. Because when $r>0$ is small, $\xi$ is of the order $O(1/r)$ and $\tilde{V}_{w_{1}}(\cdot),G_{w_{2}}(\cdot)$ are of the order $O(1)$.

\section{A Heuristic Approach to Identify an Effective Linear Boundary Policy in Asymmetric Cases}
\label{app:heuristic}
{ 
As a preliminary step, we first examine the tandem-queues network depicted in Figure \ref{fig:expsingle}, which can be considered a subnetwork of the queuing network shown in Figure \ref{fig:exp1}. 

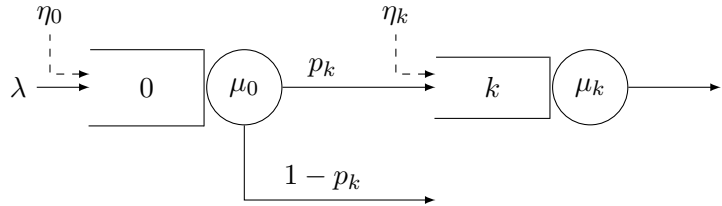
\begin{figure}[!ht]
	\centering
	\begin{tikzpicture}[start chain=going right,>=latex,node distance=1pt]
		\node[three sided,minimum width=1.5cm,minimum height = 1cm,on chain] (wa) {$0$};

		\node[draw,circle,on chain,minimum size=1cm] (se) {$\mu_0$};
		
		%
		
		\node[three sided,minimum width=1.5cm,minimum height = 0.8cm,on chain]  (wamiddle)[right =of 
		se,xshift=2cm]  {$k$};
		
		\node[draw,circle,on chain,minimum size=1cm] (semiddle) {$\mu_k$};


		\draw[->](se)  edge node[above , at end, xshift = -1.5cm] {$p_{k}$} (wamiddle.west);

		\draw[<-,dashed] (wamiddle.west) +(0pt,5pt) -|  +(-15pt,20pt)  node[above , at end = -0.95cm] {$\eta_k$};

		\draw[->] (semiddle.east) -- node[above] {}  +(35pt,0);

		\draw[<-,dashed] (wa.west) +(0pt,5pt) -|  +(-15pt,20pt)  node[above , at end = -0.95cm] {$\eta_0$};
		
		\draw[<-] (wa.west) -- +(-20pt,0) node[left] {$\lambda$};
		
		\node[]  (wa3)[right =of 
		se,xshift=2cm,yshift=-1.5cm]  
		{};

		\draw[->](se) |-  node[above, at end, xshift = -1.5cm] {$1-p_{k}$} (wa3.west);

	\end{tikzpicture}
	\caption{ A subnetwork of the feedforward queueing network with thin arrival streams.}
	\label{fig:expsingle}
\end{figure}

In this system, upon completing their service with server zero, jobs either move on to buffer $k$ with probability $p_k$ or exit the system with probability $1 - p_k$. Therefore, the reflection matrix associated with the $k^{th}$ subnetwork is given as follows:
\begin{equation}
	R=\left[ 
	\begin{array}{cc}
		1 &   0 \\ 
		-p_{k} & 1   \\ 
	\end{array}%
	\right] ,  \label{eqn:defn:reflection-matrix-subsystem}
\end{equation}%
Then, we search for four parameters  $\beta _{0,0}^{(k)},\beta _{0,k}^{(k)},\beta _{k,0}^{(k)},\beta _{k,k}^{(k)}$ for the $k$-th subnetwork  to determine the optimal linear boundary policies for each subnetwork ($k=1,2,\ldots,K$). These policies are represented as:
\begin{equation*}
	\theta^{(k)} _{0}(z)=b\mathbb{I}\left\{ \beta _{0,0}^{(k) }z_0+\beta _{0,k}^{(k) }z_k\geq 1\right\} \text{
		and } \, \theta^{(k)} _{k}(z)=b\mathbb{I}\left\{ \beta _{k,0}^{(k) }z_0+\beta _{k,k}^{(k) }z_k\geq 1 \right\}
	.
\end{equation*}%

Then, in the original  feedforward queueing network, we set the policies $\theta _{k}(z)=b\mathbb{I}\left\{ \beta _{k}^{\top }z\geq 1 \right\}$ for Server $k=1,2,\ldots,K$ as follows:
\begin{equation*}
\beta _{k,0}= \beta _{k,0}^{(k) }, \beta _{k,k}=\beta _{k,k}^{(k) }, \text{ and } \beta _{k,j} =0, \text{ for } j\neq0,k.
\end{equation*}%
The reasoning behind this heuristic policy is based on our observations from symmetric cases, indicating that the influence of the length of queue $j$ on the length of queue $k$ is small when $j$ is not equal to $0$ or $k$.

In order to complete our specification of the heuristic policy, we search the parameters for server zero in $\beta_0$. In the asymmetric case with $p=[0.3,0.3,0.2,0.1,0.1]$, there are four such parameters to tune.}

Lastly, for completeness, we provide the reflection matrix $R$ and the covariance matrix $A$ for our test example with asymmetric routing probabilities below.
$$R=\left[ 
\begin{array}{cccccc}
1 &  &  &  &  &\\ 
-0.3 & 1 &  &  &  & \\ 
-0.3  &  & 1  &  &  &\\ 
-0.2 &  &  & 1 &  &\\ 
-0.1  &  &  &  & 1 &\\
-0.1 &  &  & &  & 1%
\end{array}%
\right], \quad A=\left[ 
\begin{array}{cccccc}
1 & 0 & 0 &  0&0  &0\\ 
0 & 1 & -0.09 & -0.06 &-0.03  & -0.03 \\ 
0  &  -0.09 & 1  & -0.06 & -0.03 &-0.03\\ 
0 &    -0.06&-0.06 & 1 &-0.02  & -0.02\\ 
0  &    -0.03 &  -0.03 & -0.02 & 1 & -0.01\\
0 &   -0.03&  -0.03 & -0.02& -0.01 & 1 %
\end{array}%
\right].$$

\end{appendices}

\bibliographystyle{plainnat}
\bibliography{mybib}
\addcontentsline{toc}{section}{\refname}

\end{document}